\newtheorem{lemma}{Lemma}
\newif\ifdebug
\definecolor{zhliu}{rgb}{0.5, 0.03, 0}
\newcommand{\note}[1]{\textcolor{zhliu}{#1}}
\newcommand\delete{\bgroup\markoverwith{\textcolor{zhliu}{\rule[0.5ex]{2pt}{0.8pt}}}\ULon}
\newcommand{\note}[1]{\ignorespaces}
\newcommand{\delete}[1]{\ignorespaces}
\begin{document}
\renewcommand{\figurename}{Fig.}

\title{Supplemental Information for \textquotedblleft Generalized
Einstein-Podolsky-Rosen Steering Paradox\textquotedblright}

	\author{Zhi-Jie Liu}
	\affiliation{Theoretical Physics Division, Chern Institute of Mathematics and LPMC, Nankai University, Tianjin 300071, People's Republic of China}

	\author{Xing-Yan Fan}
	\affiliation{Theoretical Physics Division, Chern Institute of Mathematics and LPMC, Nankai University, Tianjin 300071, People's Republic of China}

	\author{Jie Zhou}
	\affiliation{College of Physics and Materials Science, Tianjin Normal University, Tianjin 300382, People's Republic of China}

	\author{Mi Xie}
	\affiliation{Department of Physics, School of Science, Tianjin University, Tianjin 300072,  People's Republic of China}

	\author{Jing-Ling~Chen}
	\email{chenjl@nankai.edu.cn}
	\affiliation{Theoretical Physics Division, Chern Institute of Mathematics and LPMC, Nankai University, Tianjin 300071, People's Republic of China}

	\date{\today}

\maketitle
\tableofcontents

\section{Detail analyses of the theorem}

We consider Alice and Bob share an $N$-qubit entangled state
\begin{equation}
\rho_{AB}=\sum_{\alpha}p_{\alpha}\left\vert \psi_{AB}^{\left(  \alpha\right)
}\right\rangle \left\langle \psi_{AB}^{\left(  \alpha\right)  }\right\vert .
\end{equation}
$\left\vert \psi_{AB}^{\left(  \alpha\right)  }\right\rangle $ can be written
as
\begin{equation}
\left\vert \psi_{AB}^{\left(  \alpha\right)  }\right\rangle =\sum_{i}\left(
s_{i}^{\left(  \alpha\right)  }\left\vert \phi_{i}\right\rangle \left\vert
\eta_{i}^{\left(  \alpha\right)  }\right\rangle \right)  , \label{3.2}%
\end{equation}
or%
\begin{equation}
\left\vert \psi_{AB}^{\left(  \alpha\right)  }\right\rangle =\sum_{j}\left(
t_{j}^{\left(  \alpha\right)  }\left\vert \varphi_{j}\right\rangle \left\vert
\varepsilon_{j}^{\left(  \alpha\right)  }\right\rangle \right)  , \label{3.3}%
\end{equation}
Eqs. (\ref{3.2}) and (\ref{3.3}) are representations of $\left\vert \psi
_{AB}^{\left(  \alpha\right)  }\right\rangle $ under different representations
of $\left\vert \phi_{i}\right\rangle \otimes\left\vert \eta_{i}\right\rangle $
and $\left\vert \varphi_{j}\right\rangle \otimes\left\vert \varepsilon
_{j}\right\rangle $, where $i,j=1,2,\cdots,2^{M}$. Alice keeps $M\left(
M<N\right)  $ particles, and Bob keeps $(N-M)$.

In the 2-setting steering protocol $\left\{  \hat{n}_{1},\hat{n}_{2}\right\}
$, Alice performs $2^{M+1}$ projective measurements, then Bob obtains
unnormalized conditional states $\tilde{\rho}_{a}^{\hat{n}_{\ell}}%
=\mathrm{tr}_{A}\left[  \left(  {{P}_{a}^{\hat{n}_{\ell}}\otimes
\mathds{1}}\right)  \rho_{AB}\right]  $, where $\hat{n}_{\ell}$ ($\ell=1,2$)
is the measurement direction, $a$ is the rusult of Alice, and ${\mathds{1}}$
is a $2^{N-M}\times2^{N-M}$ identity matrix. We consider Alice's measurements
${{P}_{a_{i}}^{\hat{n}_{1}}=}\left\vert \phi_{i}\right\rangle \left\langle
\phi_{i}\right\vert $ and ${{P}_{a_{j}^{\prime}}^{\hat{n}_{2}}=}\left\vert
\varphi_{j}\right\rangle \left\langle \varphi_{j}\right\vert $. After Alice's
measurement, Bob obtains
\[
\tilde{\rho}_{a_{i}}^{\hat{n}_{1}}=\sum_{\alpha}p_{\alpha}\left\vert
s_{i}^{\left(  \alpha\right)  }\right\vert ^{2}\left\vert \eta_{i}^{\left(
\alpha\right)  }\right\rangle \left\langle \eta_{i}^{\left(  \alpha\right)
}\right\vert ,
\]
and%
\[
\tilde{\rho}_{a_{j}^{\prime}}^{\hat{n}_{2}}=\sum_{\alpha}p_{\alpha}\left\vert
t_{j}^{(\alpha)}\right\vert ^{2}\left\vert \varepsilon_{j}^{(\alpha
)}\right\rangle \left\langle \varepsilon_{j}^{(\alpha)}\right\vert .
\]

The pure state requirement requests the conditional states of Bob are all pure
states. It's means that $\left\vert \eta_{i}\right\rangle $ and $\left\vert
\varepsilon_{j}\right\rangle $ are independent of $\alpha$. So the state
function can be reexpressed as%
\begin{equation}
\left\vert \psi_{AB}^{\left(  \alpha\right)  }\right\rangle =\sum_{i}\left(
s_{i}^{\left(  \alpha\right)  }\left\vert \phi_{i}\right\rangle \left\vert
\eta_{i}\right\rangle \right)  , \label{3.4}%
\end{equation}
or%
\begin{equation}
\left\vert \psi_{AB}^{\left(  \alpha\right)  }\right\rangle =\sum_{j}\left(
t_{j}^{\left(  \alpha\right)  }\left\vert \varphi_{j}\right\rangle \left\vert
\varepsilon_{j}\right\rangle \right)  , \label{3.5}%
\end{equation}
and the conditional states of Bob can be rewritten as%
\begin{equation}
\left\{
\begin{array}
[c]{l}%
\tilde{\rho}_{a_{1}}^{\hat{n}_{1}}=\sum_{\alpha}p_{\alpha}\left\vert
s_{1}^{\left(  \alpha\right)  }\right\vert ^{2}\left\vert \eta_{1}%
\right\rangle \left\langle \eta_{1}\right\vert ,\\
\cdots\\
\tilde{\rho}_{a_{2^{M}}}^{\hat{n}_{1}}=\sum_{\alpha}p_{\alpha}\left\vert
s_{2^{M}}^{\left(  \alpha\right)  }\right\vert ^{2}\left\vert \eta_{2^{M}%
}\right\rangle \left\langle \eta_{2^{M}}\right\vert ,\\
\tilde{\rho}_{a_{1}^{\prime}}^{\hat{n}_{2}}=\sum_{\alpha}p_{\alpha}\left\vert
t_{1}^{\left(  \alpha\right)  }\right\vert ^{2}\left\vert \varepsilon
_{1}\right\rangle \left\langle \varepsilon_{1}\right\vert ,\\
\cdots\\
\tilde{\rho}_{a_{2^{M}}^{\prime}}^{\hat{n}_{2}}=\sum_{\alpha}p_{\alpha
}\left\vert t_{2^{M}}^{\left(  \alpha\right)  }\right\vert ^{2}\left\vert
\varepsilon_{2^{M}}\right\rangle \left\langle \varepsilon_{2^{M}}\right\vert .
\end{array}
\right.  \label{3.6}%
\end{equation}

If Bob's states have a LHS discription, they satisfy
\begin{equation}
\tilde{\rho}_{a}^{\hat{n}}=%
{\displaystyle\sum\limits_{\xi}}
\wp\left(  a|\hat{n},\xi\right)  \wp_{\xi}\rho_{\xi}, \label{eq2.1}%
\end{equation}
and
\begin{equation}%
{\displaystyle\sum\limits_{\xi}}
\wp_{\xi}\rho_{\xi}=\rho_{B}. \label{eq2.2}%
\end{equation}
Here $\wp_{\xi}$ and $\wp\left(  a|\hat{n},\xi\right)  $ are probabilities
satisfying%
\begin{equation}
{\sum\limits_{\xi}}\wp_{\xi}=1,
\end{equation}
and$\;$%
\begin{equation}
{\sum\limits_{a}}\wp\left(  a|\hat{n},\xi\right)  =1,
\end{equation}
for a fixed $\xi$, and $\rho_{B}=\mathrm{tr}_{A}\left(  \rho_{AB}\right)  $ is
Bob's reduced density matrix. Then there are $2^{M+1}$ equations:
\begin{equation}
\tilde{\rho}_{a}^{\hat{n}_{\ell}}=%
{\displaystyle\sum\limits_{\xi}}
\wp\left(  a|\hat{n}_{\ell},\xi\right)  \wp_{\xi}\rho_{\xi}. \label{3.7}%
\end{equation}
Since Bob's conditional states are all pure states, and the density matrix of
a pure state cannot be obtained by superposition of other states, each
$\tilde{\rho}_{a}^{\hat{n}_{\ell}}$ can only be described by one hidden state
$\wp_{\xi}\rho_{\xi}$.

\subsection{ The sets of Bob's conditional states are all different}

Firstly, we consider Bob's $2^{M+1}$ conditional states are all different in
the quantum result. So it is enough to take $\xi$ from 1 to $2^{M+1}$ in the
LHS discription. Then we have%
\begin{equation}
\left\{
\begin{array}
[c]{l}%
\tilde{\rho}_{a_{1}}^{\hat{n}_{1}}=%
{\displaystyle\sum\limits_{\xi=1}^{2^{M+1}}}
\wp\left(  a_{1}|\hat{n}_{1},\xi\right)  \wp_{\xi}\rho_{\xi},\\
\cdots\\
\tilde{\rho}_{a_{2^{M}}}^{\hat{n}_{1}}=%
{\displaystyle\sum\limits_{\xi=1}^{2^{M+1}}}
\wp\left(  a_{2^{M}}|\hat{n}_{1},\xi\right)  \wp_{\xi}\rho_{\xi},\\
\tilde{\rho}_{a_{1}^{\prime}}^{\hat{n}_{2}}=%
{\displaystyle\sum\limits_{\xi=1}^{2^{M+1}}}
\wp\left(  a_{1}^{\prime}|\hat{n}_{2},\xi\right)  \wp_{\xi}\rho_{\xi},\\
\cdots\\
\tilde{\rho}_{a_{2^{M}}^{\prime}}^{\hat{n}_{2}}=%
{\displaystyle\sum\limits_{\xi=1}^{2^{M+1}}}
\wp\left(  a_{2^{M}}^{\prime}|\hat{n}_{2},\xi\right)  \wp_{\xi}\rho_{\xi}.
\end{array}
\right.  \label{c1.1}%
\end{equation}
Each $\tilde{\rho}_{a}^{\hat{n}_{\ell}}$ in Eq. (\ref{c1.1}) contains only one
term since Bob's conditional states are all pure, and the density matrix of a
pure state can only be expanded by itself. And they're all different from each
other. Eq. (\ref{c1.1}) can be written as%
\begin{equation}
\left\{
\begin{array}
[c]{l}%
\tilde{\rho}_{a_{1}}^{\hat{n}_{1}}=\wp_{1}\rho_{1},\\
\cdots\\
\tilde{\rho}_{a_{2^{M}}}^{\hat{n}_{1}}=\wp_{2^{M}}\rho_{2^{M}},\\
\tilde{\rho}_{a_{1}^{\prime}}^{\hat{n}_{2}}=\wp_{2^{M}+1}\rho_{2^{M}+1},\\
\cdots\\
\tilde{\rho}_{a_{2^{M}}^{\prime}}^{\hat{n}_{2}}=\wp_{2^{M+1}}\rho_{2^{M+1}}.
\end{array}
\right.  \label{c1.2}%
\end{equation}
Because ${\sum\limits_{a}}\wp\left(  a|\hat{n},\xi\right)  =1$, $\wp\left(
a_{1}|\hat{n}_{1},1\right)  =\wp\left(  a_{2}|\hat{n}_{1},2\right)
=\cdots=\wp\left(  a_{2^{M}}|\hat{n}_{1},2^{M}\right)  =\wp\left(
a_{1}^{\prime}|\hat{n}_{2},2^{M}+1\right)  \cdots=\wp\left(  a_{2^{M}}%
^{\prime}|\hat{n}_{2},2^{M+1}\right)  =1$ and all other $\wp\left(  a|\hat
{n},\xi\right)  =0$. Summing the Eq. (\ref{c1.2}) and taking the trace, we get
the paradox \textquotedblleft$2_{Q}=1_{C}$\textquotedblright.
\textquotedblleft$2_{Q}$\textquotedblright\ is the quantum result, and
\textquotedblleft$1_{C}$\textquotedblright\ is the classical result. This also
means that in this case we can obtain the EPR steering paradox
\textquotedblleft$2_{Q}=1_{C}$\textquotedblright, and at this point $\delta=0$.

\subsubsection{Example 1}

We consider the example of an arbitrary 2-qubit pure state presented by Chen
\emph{et al}. (2016) \cite{chenjl2016}. For arbitrary 2-qubit pure state
$\rho_{AB}=\left\vert \Psi\left(  \theta\right)  \right\rangle \left\langle
\Psi\left(  \theta\right)  \right\vert $, where
\begin{equation}
\left\vert \Psi\left(  \theta\right)  \right\rangle =\cos\theta\left\vert
0\right\rangle \left\vert 0\right\rangle +\sin\theta\left\vert 1\right\rangle
\left\vert 1\right\rangle ,\label{eq2.01}%
\end{equation}
or%
\begin{align}
\left\vert \Psi\left(  \theta\right)  \right\rangle  &  =\frac{\cos\theta
}{\sqrt{2}}\left(  \left\vert +\right\rangle +\left\vert -\right\rangle
\right)  \left\vert 0\right\rangle +\frac{\sin\theta}{\sqrt{2}}\left(
\left\vert +\right\rangle -\left\vert -\right\rangle \right)  \left\vert
1\right\rangle \label{eq2.012}\\
&  =\frac{\left\vert +\right\rangle }{\sqrt{2}}\left(  \cos\theta\left\vert
0\right\rangle +\sin\theta\left\vert 1\right\rangle \right)  +\frac{\left\vert
-\right\rangle }{\sqrt{2}}\left(  \cos\theta\left\vert 0\right\rangle
-\sin\theta\left\vert 1\right\rangle \right)  ,\nonumber
\end{align}
with $\theta\in\left(  0,\pi/2\right)  $. Eqs. (\ref{eq2.01}) and
(\ref{eq2.012}) are representations of $\left\vert \Psi\left(  \theta\right)
\right\rangle $ under different representations. Alice prepares this 2-qubit
pure state, one of which she keeps, and the other is sent to Bob. In the
two-setting steering protocol $\left\{  \hat{z},\hat{x}\right\}  $, Alice has
four possible projective measurements%
\begin{equation}
\left\{
\begin{array}
[c]{l}%
{{P}_{0}^{\hat{z}}=}\left\vert 0\right\rangle \left\langle 0\right\vert ,\\
{{P}_{1}^{\hat{z}}=}\left\vert 1\right\rangle \left\langle 1\right\vert ,\\
{{P}_{0}^{\hat{x}}=}\left\vert +\right\rangle \left\langle +\right\vert ,\\
{{P}_{1}^{\hat{x}}=}\left\vert -\right\rangle \left\langle -\right\vert ,
\end{array}
\right.  \label{eq2.02}%
\end{equation}
where $\left\vert \pm\right\rangle =\frac{1}{\sqrt{2}}\left(  \left\vert
0\right\rangle \pm\left\vert 1\right\rangle \right)  $. After Alice's
measurement, the four non-normalized conditional states corresponding to Bob
are
\begin{equation}
\left\{
\begin{array}
[c]{l}%
\tilde{\rho}{_{0}^{\hat{z}}=\cos}^{2}\theta\left\vert 0\right\rangle
\left\langle 0\right\vert ,\\
\tilde{\rho}{_{1}^{\hat{z}}=\sin}^{2}{\theta}\left\vert 1\right\rangle
\left\langle 1\right\vert ,\\
\tilde{\rho}{_{0}^{\hat{x}}=}\frac{1}{2}\left\vert \lambda_{+}\right\rangle
\left\langle \lambda_{\newline+}\right\vert ,\\
\tilde{\rho}{_{1}^{\hat{x}}=}\frac{1}{2}\left\vert \lambda_{-}\right\rangle
\left\langle \lambda_{\newline-}\right\vert ,
\end{array}
\right.  \label{eq2.03}%
\end{equation}
where $\left\vert \lambda_{\pm}\right\rangle =\cos\theta\left\vert
0\right\rangle \pm{\sin\theta}\left\vert 1\right\rangle $. This is the quantum
result. And it is worth noting that Bob's conditional states are all pure
states and that none of the four conditional states are identical. This
satisfies the pure state requirement and the measurement requirement mentioned
in the theorem.

If Bob's states have a LHS description, they must satisfy Eqs. (\ref{eq2.1})
and (\ref{eq2.2}). Then Bob's four unnormalized conditional states can be
written as%
\begin{equation}
\left\{
\begin{array}
[c]{c}%
\widetilde{\rho}_{0}^{\hat{z}}=%
{\displaystyle\sum\limits_{\xi}}
\wp\left(  0|\hat{z},\xi\right)  \wp_{\xi}\rho_{\xi},\\
\widetilde{\rho}_{1}^{\hat{z}}=%
{\displaystyle\sum\limits_{\xi}}
\wp\left(  1|\hat{z},\xi\right)  \wp_{\xi}\rho_{\xi},\\
\widetilde{\rho}_{0}^{\hat{x}}=%
{\displaystyle\sum\limits_{\xi}}
\wp\left(  0|\hat{x},\xi\right)  \wp_{\xi}\rho_{\xi},\\
\widetilde{\rho}_{1}^{\hat{x}}=%
{\displaystyle\sum\limits_{\xi}}
\wp\left(  1|\hat{x},\xi\right)  \wp_{\xi}\rho_{\xi}.
\end{array}
\right.  \label{eq2.003}%
\end{equation}
Because the four states of Eq. (\ref{eq2.03}) are pure states, it is
sufficient to take $\xi$ from $1$ to $4$. Eq. (\ref{eq2.003}) can be write as%

\begin{equation}
\left\{
\begin{array}
[c]{c}%
\widetilde{\rho}_{0}^{\hat{z}}=\wp\left(  0|\hat{z},1\right)  \wp_{1}\rho
_{1}+\wp\left(  0|\hat{z},2\right)  \wp_{2}\rho_{2}+\wp\left(  0|\hat
{z},3\right)  \wp_{3}\rho_{3}+\wp\left(  0|\hat{z},4\right)  \wp_{4}\rho
_{4},\\
\widetilde{\rho}_{1}^{\hat{z}}=\wp\left(  1|\hat{z},1\right)  \wp_{1}\rho
_{1}+\wp\left(  1|\hat{z},2\right)  \wp_{2}\rho_{2}+\wp\left(  1|\hat
{z},3\right)  \wp_{3}\rho_{3}+\wp\left(  1|\hat{z},4\right)  \wp_{4}\rho
_{4},\\
\widetilde{\rho}_{0}^{\hat{x}}=\wp\left(  0|\hat{x},1\right)  \wp_{1}\rho
_{1}+\wp\left(  0|\hat{x},2\right)  \wp_{2}\rho_{2}+\wp\left(  0|\hat
{x},3\right)  \wp_{3}\rho_{3}+\wp\left(  0|\hat{x},4\right)  \wp_{4}\rho
_{4},\\
\widetilde{\rho}_{1}^{\hat{x}}=\wp\left(  1|\hat{x},1\right)  \wp_{1}\rho
_{1}+\wp\left(  1|\hat{x},2\right)  \wp_{2}\rho_{2}+\wp\left(  1|\hat
{x},3\right)  \wp_{3}\rho_{3}+\wp\left(  1|\hat{x},4\right)  \wp_{4}\rho_{4}.
\end{array}
\right.  \label{eq2.005}%
\end{equation}
And owing to the fact that a pure state cannot be obtained by convex
combination of other pure states, one has
\begin{equation}
\left\{
\begin{array}
[c]{l}%
\tilde{\rho}{_{0}^{\hat{z}}=}\wp_{1}\rho_{1},\\
\tilde{\rho}{_{1}^{\hat{z}}=}\wp_{2}\rho_{2},\\
\tilde{\rho}{_{0}^{\hat{x}}=}\wp_{3}\rho_{3},\\
\tilde{\rho}{_{1}^{\hat{x}}=}\wp_{4}\rho_{4}.
\end{array}
\right.  \label{eq2.04}%
\end{equation}
Here%
\[
\wp\left(  0|\hat{z},1\right)  =\wp\left(  1|\hat{z},2\right)  =\wp\left(
0|\hat{x},3\right)  =\wp\left(  1|\hat{x},4\right)  =1,
\]
and other $\wp\left(  a|\hat{n},\xi\right)  =0$. Summing and taking the trace
of Eq. (\ref{eq2.04}) gives $\mathrm{tr}\left(  \tilde{\rho}{^{\hat{z}}%
}+\tilde{\rho}{^{\hat{x}}}\right)  =2\mathrm{tr}\left(  \rho_{B}\right)  =2$
on the left and $\mathrm{tr}\left(  \wp_{1}\rho_{1}+\wp_{2}\rho_{2}+\wp
_{3}\rho_{3}+\wp_{4}\rho_{4}\right)  =\mathrm{tr}\left(  \rho_{B}\right)  =1$
on the right. Then the contradiction \textquotedblleft$2_{Q}=1_{C}%
$\textquotedblright\ is obtained. In this case EPR steering paradox is
formulated as \textquotedblleft$2_{Q}=1_{C}$\textquotedblright\ with
$\delta=0$.

\subsection{There are identical states in the same set of measurements}

Suppose two elements in $\left\{  \left\vert \eta_{i}\right\rangle \right\}  $
(or $\left\{  \left\vert \varepsilon_{j}\right\rangle \right\}  $) are
identical, for example, $\left\vert \eta_{m}\right\rangle =\left\vert \eta
_{n}\right\rangle $, $\left(  \text{where }m\neq n\text{, }m,n=1,2,\cdots
,2^{M}\right)  $, and the others are different. In this case, Bob's
conditional states are%
\begin{equation}
\left\{
\begin{array}
[c]{l}%
\tilde{\rho}_{a_{1}}^{\hat{n}_{1}}=\sum_{\alpha}p_{\alpha}\left\vert
s_{1}^{\left(  \alpha\right)  }\right\vert ^{2}\left\vert \eta_{1}%
\right\rangle \left\langle \eta_{1}\right\vert ,\\
\cdots\\
\tilde{\rho}_{a_{m}}^{\hat{n}_{1}}=\sum_{\alpha}p_{\alpha}\left\vert
s_{m}^{\left(  \alpha\right)  }\right\vert ^{2}\left\vert \eta_{m}%
\right\rangle \left\langle \eta_{m}\right\vert ,\\
\cdots\\
\tilde{\rho}_{a_{n}}^{\hat{n}_{1}}=\sum_{\alpha}p_{\alpha}\left\vert
s_{n}^{\left(  \alpha\right)  }\right\vert ^{2}\left\vert \eta_{m}%
\right\rangle \left\langle \eta_{m}\right\vert ,\\
\cdots\\
\tilde{\rho}_{a_{2^{M}}}^{\hat{n}_{1}}=\sum_{\alpha}p_{\alpha}\left\vert
s_{2^{M}}^{\left(  \alpha\right)  }\right\vert ^{2}\left\vert \eta_{2^{M}%
}\right\rangle \left\langle \eta_{2^{M}}\right\vert ,\\
\tilde{\rho}_{a_{1}^{\prime}}^{\hat{n}_{2}}=\sum_{\alpha}p_{\alpha}\left\vert
t_{1}^{\left(  \alpha\right)  }\right\vert ^{2}\left\vert \varepsilon
_{1}\right\rangle \left\langle \varepsilon_{1}\right\vert ,\\
\cdots\\
\tilde{\rho}_{a_{2^{M}}^{\prime}}^{\hat{n}_{2}}=\sum_{\alpha}p_{\alpha
}\left\vert t_{2^{M}}^{\left(  \alpha\right)  }\right\vert ^{2}\left\vert
\varepsilon_{2^{M}}\right\rangle \left\langle \varepsilon_{2^{M}}\right\vert .
\end{array}
\right.  \label{c2.1}%
\end{equation}
In this case, there are $2^{M+1}-1$ different\ pure states in the quantum
result Eq. (\ref{c2.1}). For the LHS description, it is sufficient to take
$\xi$ from 1 to $2^{M+1}-1$, e.g.,
\begin{equation}
\left\{
\begin{array}
[c]{l}%
\tilde{\rho}_{a_{1}}^{\hat{n}_{1}}=%
{\displaystyle\sum\limits_{\xi=1}^{2^{M+1}-1}}
\wp\left(  a_{1}|\hat{n}_{1},\xi\right)  \wp_{\xi}\rho_{\xi},\\
\cdots\\
\tilde{\rho}_{a_{m}}^{\hat{n}_{1}}=%
{\displaystyle\sum\limits_{\xi=1}^{2^{M+1}-1}}
\wp\left(  a_{m}|\hat{n}_{1},\xi\right)  \wp_{\xi}\rho_{\xi},\\
\cdots\\
\tilde{\rho}_{a_{n}}^{\hat{n}_{1}}=%
{\displaystyle\sum\limits_{\xi=1}^{2^{M+1}-1}}
\wp\left(  a_{n}|\hat{n}_{1},\xi\right)  \wp_{\xi}\rho_{\xi},\\
\cdots\\
\tilde{\rho}_{a_{2^{M}}}^{\hat{n}_{1}}=%
{\displaystyle\sum\limits_{\xi=1}^{2^{M+1}-1}}
\wp\left(  a_{2^{M}}|\hat{n}_{1},\xi\right)  \wp_{\xi}\rho_{\xi},\\
\tilde{\rho}_{a_{1}^{\prime}}^{\hat{n}_{2}}=%
{\displaystyle\sum\limits_{\xi=1}^{2^{M+1}-1}}
\wp\left(  a_{1}^{\prime}|\hat{n}_{2},\xi\right)  \wp_{\xi}\rho_{\xi},\\
\cdots\\
\tilde{\rho}_{a_{2^{M}}^{\prime}}^{\hat{n}_{2}}=%
{\displaystyle\sum\limits_{\xi=1}^{2^{M+1}-1}}
\wp\left(  a_{2^{M}}^{\prime}|\hat{n}_{2},\xi\right)  \wp_{\xi}\rho_{\xi}.
\end{array}
\right.  \label{c2.2}%
\end{equation}
Similarly, because Bob's states are all pure, each $\tilde{\rho}_{a}^{\hat
{n}_{\ell}}$ in Eq. (\ref{c2.2}) contains only one term. Eq. (\ref{c2.2}) can
be written as%
\begin{equation}
\left\{
\begin{array}
[c]{l}%
\tilde{\rho}_{a_{1}}^{\hat{n}_{1}}=\wp_{1}\rho_{1},\\
\cdots\\
\tilde{\rho}_{a_{m}}^{\hat{n}_{1}}=\wp\left(  a_{m}|\hat{n}_{1},m\right)
\wp_{m}\rho_{m},\\
\cdots\\
\tilde{\rho}_{a_{n}}^{\hat{n}_{1}}=\wp\left(  a_{n}|\hat{n}_{1},m\right)
\wp_{m}\rho_{m},\\
\cdots\\
\tilde{\rho}_{a_{2^{M}}}^{\hat{n}_{1}}=\wp_{2^{M}-1}\rho_{2^{M}-1},\\
\tilde{\rho}_{a_{1}^{\prime}}^{\hat{n}_{2}}=\wp_{2^{M}}\rho_{2^{M}},\\
\cdots\\
\tilde{\rho}_{a_{2^{M}}^{\prime}}^{\hat{n}_{2}}=\wp_{2^{M+1}-1}\rho
_{2^{M+1}-1}.
\end{array}
\right.  \label{c2.3}%
\end{equation}
Since in quantum result we assume that $\left\vert \eta_{m}\right\rangle
=\left\vert \eta_{n}\right\rangle $, in the LHS description we describe
$\tilde{\rho}_{a_{m}}^{\hat{n}_{1}}$ and $\tilde{\rho}_{a_{n}}^{\hat{n}_{1}}$
by the same hidden state $\wp_{m}\rho_{m}$. And ${\sum\limits_{a}}\wp\left(
a|\hat{n},\xi\right)  =1$, so we have $\wp\left(  a_{1}|\hat{n}_{1},1\right)
=\cdots=\wp\left(  a_{m-1}|\hat{n}_{1},m-1\right)  =\wp\left(  a_{m+1}|\hat
{n}_{1},m+1\right)  =\cdots=\wp\left(  a_{n-1}|\hat{n}_{1},n-1\right)
=\wp\left(  a_{n+1}|\hat{n}_{1},n\right)  =\cdots=\wp\left(  a_{2^{M}}|\hat
{n}_{1},2^{M}-1\right)  =\wp\left(  a_{1}^{\prime}|\hat{n}_{2},2^{M}\right)
\cdots=\wp\left(  a_{2^{M}}^{\prime}|\hat{n}_{2},2^{M+1}-1\right)  =1$,
$\wp\left(  a_{m}|\hat{n}_{1},m\right)  +\wp\left(  a_{n}|\hat{n}%
_{1},m\right)  =1$, and the others $\wp\left(  a|\hat{n},\xi\right)  =0$. Then
we take the sum of Eq. (\ref{c2.3}) and take the trace, and we get
\textquotedblleft$2_{Q}=1_{C}$\textquotedblright. In the same way, in this
case we get the EPR steering paradox \textquotedblleft$2_{Q}=1_{C}%
$\textquotedblright\ and at this moment $\delta=0$.

\subsubsection{Example 2}

We consider the example of a 4-qubit mixed entangled state presented by Liu
\emph{et al.} (2021) \cite{Liu2021}. Here $\rho_{AB}=\cos^{2}\theta\left\vert
LC_{4}\right\rangle \left\langle LC_{4}\right\vert +\sin^{2}\theta\left\vert
LC_{4}^{\prime}\right\rangle \left\langle LC_{4}^{\prime}\right\vert $, where
\begin{equation}%
\begin{array}
[c]{c}%
\left\vert LC_{4}\right\rangle =\dfrac{1}{2}\left(  \left\vert
0000\right\rangle +\left\vert 1100\right\rangle +\left\vert 0011\right\rangle
-\left\vert 1111\right\rangle \right)  ,\\[9pt]%
\left\vert LC_{4}^{\prime}\right\rangle =\dfrac{1}{2}\left(  \left\vert
0100\right\rangle +\left\vert 1000\right\rangle +\left\vert 0111\right\rangle
-\left\vert 1011\right\rangle \right)  ,
\end{array}
\label{eq2.06}%
\end{equation}
are the linear cluster states. They can be written as
\begin{align*}
\left\vert LC_{4}\right\rangle  &  =\dfrac{1}{4}\left(  \left\vert
\uparrow\right\rangle +\left\vert \downarrow\right\rangle \right)  \left(
\left\vert +\right\rangle +\left\vert -\right\rangle \right)  \left\vert
00\right\rangle -\dfrac{\mathrm{i}}{4}\left(  \left\vert \uparrow\right\rangle
-\left\vert \downarrow\right\rangle \right)  \left(  \left\vert +\right\rangle
-\left\vert -\right\rangle \right)  \left\vert 00\right\rangle \\
&  +\dfrac{1}{4}\left(  \left\vert \uparrow\right\rangle +\left\vert
\downarrow\right\rangle \right)  \left(  \left\vert +\right\rangle +\left\vert
-\right\rangle \right)  \left\vert 11\right\rangle \newline+\dfrac{\mathrm{i}%
}{4}\left(  \left\vert \uparrow\right\rangle -\left\vert \downarrow
\right\rangle \right)  \left(  \left\vert +\right\rangle -\left\vert
-\right\rangle \right)  \left\vert 11\right\rangle \\
&  =\dfrac{1}{4}\left[  \left\vert \uparrow+\right\rangle \left(  \left\vert
00\right\rangle -\mathrm{i}\left\vert 00\right\rangle +\left\vert
11\right\rangle +\mathrm{i}\left\vert 11\right\rangle \newline\right)
+\left\vert \uparrow-\right\rangle \left(  \left\vert 00\right\rangle
+\mathrm{i}\left\vert 00\right\rangle +\left\vert 11\right\rangle
-\mathrm{i}\left\vert 11\right\rangle \newline\right)  \right.  \\
&  \left.  +\left\vert \downarrow+\right\rangle \left(  \left\vert
00\right\rangle +\mathrm{i}\left\vert 00\right\rangle +\left\vert
11\right\rangle -\mathrm{i}\left\vert 11\right\rangle \newline\right)
+\left\vert \downarrow-\right\rangle \left(  \left\vert 00\right\rangle
-\mathrm{i}\left\vert 00\right\rangle +\left\vert 11\right\rangle
+\mathrm{i}\left\vert 11\right\rangle \newline\right)  \right]  \\
&  =\dfrac{1}{4}\left\{  \left(  \left\vert \uparrow+\right\rangle +\left\vert
\downarrow-\right\rangle \right)  \left[  \left(  1-\mathrm{i}\right)
\left\vert 00\right\rangle +\left(  1+\mathrm{i}\right)  \left\vert
11\right\rangle \newline\right]  +\left(  \left\vert \uparrow-\right\rangle
+\left\vert \downarrow+\right\rangle \right)  \left[  \left(  1+\mathrm{i}%
\right)  \left\vert 00\right\rangle +\left(  1-\mathrm{i}\right)  \left\vert
11\right\rangle \newline\right]  \right\}  \\
&  =\dfrac{1}{4}\left\{  \left(  \left\vert \uparrow+\right\rangle +\left\vert
\downarrow-\right\rangle \right)  \left(  1-\mathrm{i}\right)  \left(
\left\vert 00\right\rangle +\mathrm{i}\left\vert 11\right\rangle
\newline\right)  +\left(  \left\vert \uparrow-\right\rangle +\left\vert
\downarrow+\right\rangle \right)  \left(  1+\mathrm{i}\right)  \left(
\left\vert 00\right\rangle -\mathrm{i}\left\vert 11\right\rangle \right)
\newline\right\}  ,
\end{align*}%
\begin{align*}
\left\vert LC_{4}^{\prime}\right\rangle  &  =\dfrac{1}{4}\left(  \left\vert
\uparrow\right\rangle +\left\vert \downarrow\right\rangle \right)  \left(
\left\vert +\right\rangle -\left\vert -\right\rangle \right)  \left\vert
00\right\rangle -\dfrac{\mathrm{i}}{4}\left(  \left\vert \uparrow\right\rangle
-\left\vert \downarrow\right\rangle \right)  \left(  \left\vert +\right\rangle
+\left\vert -\right\rangle \right)  \left\vert 00\right\rangle \\
&  +\dfrac{1}{4}\left(  \left\vert \uparrow\right\rangle +\left\vert
\downarrow\right\rangle \right)  \left(  \left\vert +\right\rangle -\left\vert
-\right\rangle \right)  \left\vert 11\right\rangle \newline+\dfrac{\mathrm{i}%
}{4}\left(  \left\vert \uparrow\right\rangle -\left\vert \downarrow
\right\rangle \right)  \left(  \left\vert +\right\rangle +\left\vert
-\right\rangle \right)  \left\vert 11\right\rangle \\
&  =\dfrac{1}{4}\left[  \left\vert \uparrow+\right\rangle \left(  \left\vert
00\right\rangle -\mathrm{i}\left\vert 00\right\rangle +\left\vert
11\right\rangle +\mathrm{i}\left\vert 11\right\rangle \newline\right)
+\left\vert \uparrow-\right\rangle \left(  -\left\vert 00\right\rangle
-\mathrm{i}\left\vert 00\right\rangle -\left\vert 11\right\rangle
+\mathrm{i}\left\vert 11\right\rangle \newline\right)  \right.  \\
&  \left.  +\left\vert \downarrow+\right\rangle \left(  \left\vert
00\right\rangle +\mathrm{i}\left\vert 00\right\rangle +\left\vert
11\right\rangle -\mathrm{i}\left\vert 11\right\rangle \newline\right)
+\left\vert \downarrow-\right\rangle \left(  -\left\vert 00\right\rangle
+\mathrm{i}\left\vert 00\right\rangle -\left\vert 11\right\rangle
-\mathrm{i}\left\vert 11\right\rangle \newline\right)  \right]  \\
&  \dfrac{1}{4}\left\{  \left(  \left\vert \uparrow+\right\rangle -\left\vert
\downarrow-\right\rangle \right)  \left[  \left(  1-\mathrm{i}\right)
\left\vert 00\right\rangle +\left(  1+\mathrm{i}\right)  \left\vert
11\right\rangle \newline\right]  +\left(  -\left\vert \uparrow-\right\rangle
+\left\vert \downarrow+\right\rangle \right)  \left[  \left(  1+\mathrm{i}%
\right)  \left\vert 00\right\rangle +\left(  1-\mathrm{i}\right)  \left\vert
11\right\rangle \newline\right]  \right\}  \\
&  =\dfrac{1}{4}\left\{  \left(  \left\vert \uparrow+\right\rangle -\left\vert
\downarrow-\right\rangle \right)  \left(  1-\mathrm{i}\right)  \left(
\left\vert 00\right\rangle +\mathrm{i}\left\vert 11\right\rangle
\newline\right)  +\left(  -\left\vert \uparrow-\right\rangle +\left\vert
\downarrow+\right\rangle \right)  \left(  1+\mathrm{i}\right)  \left(
\left\vert 00\right\rangle -\mathrm{i}\left\vert 11\right\rangle \right)
\newline\right\}  ,
\end{align*}
where $|\pm\rangle=(1/\sqrt{2})(\left\vert 0\right\rangle \pm\left\vert
1\right\rangle )$, $|\updownarrow\rangle=(1/\sqrt{2})(\left\vert
0\right\rangle \pm\mathrm{i}\left\vert 1\right\rangle )$. Alice prepares the
state $\rho_{AB}$. She keeps particles 1, 2, and sends particles 3, 4 to Bob.
In the 2-setting steering protocol $\left\{  \hat{n}_{1},\hat{n}_{2}\right\}
$ $\left(  \hat{n}_{1}\neq\hat{n}_{2}\right)  $, with%
\begin{equation}%
\begin{array}
[c]{c}%
\hat{n}_{1}=\sigma_{z}\sigma_{z}\equiv zz,\\
\hat{n}_{2}=\sigma_{y}\sigma_{x}\equiv yx.
\end{array}
\label{eq2.07}%
\end{equation}
In the protocol, Bob asks Alice to carry out either one of two possible
projective measurements on her qubits, i.e.,%
\begin{equation}
\left\{
\begin{array}
[c]{l}%
P_{00}^{\hat{n}_{1}}=\left\vert 00\right\rangle \left\langle 00\right\vert ,\\
P_{01}^{\hat{n}_{1}}=\left\vert 01\right\rangle \left\langle 01\right\vert ,\\
P_{10}^{\hat{n}_{1}}=\left\vert 10\right\rangle \left\langle 10\right\vert ,\\
P_{11}^{\hat{n}_{1}}=\left\vert 11\right\rangle \left\langle 11\right\vert ,\\
P_{00}^{\hat{n}_{2}}=\left\vert \uparrow+\right\rangle \left\langle
\uparrow+\right\vert ,\\
P_{01}^{\hat{n}_{2}}=\left\vert \uparrow-\right\rangle \left\langle
\uparrow-\right\vert ,\\
P_{10}^{\hat{n}_{2}}=\left\vert \downarrow+\right\rangle \left\langle
\downarrow+\right\vert ,\\
P_{11}^{\hat{n}_{2}}=\left\vert \downarrow-\right\rangle \left\langle
\downarrow-\right\vert .
\end{array}
\right.  \label{eq2.08}%
\end{equation}
After Alice's measurement, Bob's unnormalized conditional states are
\begin{equation}
\left\{
\begin{array}
[c]{l}%
\tilde{\rho}_{00}^{\hat{n}_{1}}=\dfrac{1}{4}\cos^{2}\theta\left(  \left\vert
00\right\rangle +\left\vert 11\right\rangle \right)  \left(  \left\langle
00\right\vert +\left\langle 11\right\vert \right)  ,\\[9pt]%
\tilde{\rho}_{01}^{\hat{n}_{1}}=\dfrac{1}{4}\sin^{2}\theta\left(  \left\vert
00\right\rangle +\left\vert 11\right\rangle \right)  \left(  \left\langle
00\right\vert +\left\langle 11\right\vert \right)  ,\\[9pt]%
\tilde{\rho}_{10}^{\hat{n}_{1}}=\dfrac{1}{4}\sin^{2}\theta\left(  \left\vert
00\right\rangle -\left\vert 11\right\rangle \right)  \left(  \left\langle
00\right\vert -\left\langle 11\right\vert \right)  ,\\[9pt]%
\tilde{\rho}_{11}^{\hat{n}_{1}}=\dfrac{1}{4}\cos^{2}\theta\left(  \left\vert
00\right\rangle -\left\vert 11\right\rangle \right)  \left(  \left\langle
00\right\vert -\left\langle 11\right\vert \right)  ,\\[9pt]%
\tilde{\rho}_{00}^{\hat{n}_{2}}=\dfrac{1}{8}\left(  \left\vert 00\right\rangle
+\mathrm{i}\left\vert 11\right\rangle \right)  \left(  \left\langle
00\right\vert -\mathrm{i}\left\langle 11\right\vert \right)  ,\\[9pt]%
\tilde{\rho}_{01}^{\hat{n}_{2}}=\dfrac{1}{8}\left(  \left\vert 00\right\rangle
-\mathrm{i}\left\vert 11\right\rangle \right)  \left(  \left\langle
00\right\vert +\mathrm{i}\left\langle 11\right\vert \right)  ,\\[9pt]%
\tilde{\rho}_{10}^{\hat{n}_{2}}=\dfrac{1}{8}\left(  \left\vert 00\right\rangle
-\mathrm{i}\left\vert 11\right\rangle \right)  \left(  \left\langle
00\right\vert +\mathrm{i}\left\langle 11\right\vert \right)  ,\\[9pt]%
\tilde{\rho}_{11}^{\hat{n}_{2}}=\dfrac{1}{8}\left(  \left\vert 00\right\rangle
+\mathrm{i}\left\vert 11\right\rangle \right)  \left(  \left\langle
00\right\vert -\mathrm{i}\left\langle 11\right\vert \right)  ,
\end{array}
\right.  \label{2.09}%
\end{equation}
where%

\begin{align*}
\tilde{\rho}_{00}^{\hat{n}_{1}} &  =\mathrm{tr}_{A}\left[  \left(
P_{00}^{\hat{n}_{1}}{\otimes\mathds{1}}\right)  \rho_{AB}\right]  \\
&  =\mathrm{tr}_{A}\left[  \left(  \left\vert 00\right\rangle \left\langle
00\right\vert {\otimes\mathds{1}}\right)  \left(  \cos^{2}\theta\left\vert
LC_{4}\right\rangle \left\langle LC_{4}\right\vert +\sin^{2}\theta\left\vert
LC_{4}^{\prime}\right\rangle \left\langle LC_{4}^{\prime}\right\vert \right)
\right]  \\
&  =\dfrac{1}{4}\mathrm{tr}_{A}\left[  \left(  \left\vert 00\right\rangle
_{A}{\otimes\mathds{1}}\right)  \left(  \cos^{2}\theta\left(  \left\vert
00\right\rangle +\left\vert 11\right\rangle \right)  \right)  \left(
\left\langle 0000\right\vert +\left\langle 1100\right\vert +\left\langle
0011\right\vert -\left\langle 1111\right\vert \right)  \right]  \\
&  =\dfrac{1}{4}\left(  \left\langle 00\right\vert _{A}{\otimes\mathds{1}}%
\right)  \left(  \left\vert 00\right\rangle _{A}{\otimes\mathds{1}}\right)
\left(  \cos^{2}\theta\left(  \left\vert 00\right\rangle +\left\vert
11\right\rangle \right)  \right)  \left[  \left\langle 00\right\vert
_{A}\left(  \left\langle 00\right\vert +\left\langle 11\right\vert \right)
+\left\langle 11\right\vert _{A}\left(  \left\langle 00\right\vert
-\left\langle 11\right\vert \right)  \right]  \left(  \left\vert
00\right\rangle _{A}{\otimes\mathds{1}}\right)  \\
&  =\dfrac{1}{4}\cos^{2}\theta\left(  \left\vert 00\right\rangle +\left\vert
11\right\rangle \right)  \left(  \left\langle 00\right\vert +\left\langle
11\right\vert \right)  ,
\end{align*}

\begin{align*}
\tilde{\rho}_{01}^{\hat{n}_{1}} &  =\mathrm{tr}_{A}\left[  \left(
P_{01}^{\hat{n}_{1}}{\otimes\mathds{1}}\right)  \rho_{AB}\right]  \\
&  =\mathrm{tr}_{A}\left[  \left(  \left\vert 01\right\rangle \left\langle
01\right\vert {\otimes\mathds{1}}\right)  \left(  \cos^{2}\theta\left\vert
LC_{4}\right\rangle \left\langle LC_{4}\right\vert +\sin^{2}\theta\left\vert
LC_{4}^{\prime}\right\rangle \left\langle LC_{4}^{\prime}\right\vert \right)
\right]  \\
&  =\dfrac{1}{4}\mathrm{tr}_{A}\left[  \left\vert 01\right\rangle _{A}\left(
\sin^{2}\theta\left(  \left\vert 00\right\rangle +\left\vert 11\right\rangle
\right)  \right)  \left(  \left\langle 0100\right\vert +\left\langle
1000\right\vert +\left\langle 0111\right\vert -\left\langle 1011\right\vert
\right)  \right]  \\
&  =\dfrac{1}{4}\left(  \left\langle 01\right\vert _{A}{\otimes\mathds{1}}%
\right)  \left(  \left\vert 01\right\rangle _{A}{\otimes\mathds{1}}\right)
\left(  \sin^{2}\theta\left(  \left\vert 00\right\rangle +\left\vert
11\right\rangle \right)  \right)  \left[  \left\langle 01\right\vert
_{A}\left(  \left\langle 00\right\vert +\left\langle 11\right\vert \right)
+\left\langle 10\right\vert _{A}\left(  \left\langle 00\right\vert
-\left\langle 11\right\vert \right)  \right]  \left(  \left\vert
01\right\rangle _{A}{\otimes\mathds{1}}\right)  \\
&  =\dfrac{1}{4}\sin^{2}\theta\left(  \left\vert 00\right\rangle +\left\vert
11\right\rangle \right)  \left(  \left\langle 00\right\vert +\left\langle
11\right\vert \right)  ,
\end{align*}

\begin{align*}
\tilde{\rho}_{10}^{\hat{n}_{1}} &  =\mathrm{tr}_{A}\left[  \left(
P_{10}^{\hat{n}_{1}}{\otimes\mathds{1}}\right)  \rho_{AB}\right]  \\
&  =\mathrm{tr}_{A}\left[  \left(  \left\vert 10\right\rangle \left\langle
10\right\vert {\otimes\mathds{1}}\right)  \left(  \cos^{2}\theta\left\vert
LC_{4}\right\rangle \left\langle LC_{4}\right\vert +\sin^{2}\theta\left\vert
LC_{4}^{\prime}\right\rangle \left\langle LC_{4}^{\prime}\right\vert \right)
\right]  \\
&  =\dfrac{1}{4}\mathrm{tr}_{A}\left[  \left\vert 10\right\rangle _{A}\left(
\sin^{2}\theta\left(  \left\vert 00\right\rangle -\left\vert 11\right\rangle
\right)  \right)  \left(  \left\langle 0100\right\vert +\left\langle
1000\right\vert +\left\langle 0111\right\vert -\left\langle 1011\right\vert
\right)  \right]  \\
&  =\dfrac{1}{4}\left(  \left\langle 10\right\vert _{A}{\otimes\mathds{1}}%
\right)  \left(  \left\vert 10\right\rangle _{A}{\otimes\mathds{1}}\right)
\left(  \sin^{2}\theta\left(  \left\vert 00\right\rangle -\left\vert
11\right\rangle \right)  \right)  \left[  \left\langle 01\right\vert
_{A}\left(  \left\langle 00\right\vert +\left\langle 11\right\vert \right)
+\left\langle 10\right\vert _{A}\left(  \left\langle 00\right\vert
-\left\langle 11\right\vert \right)  \right]  \left(  \left\vert
10\right\rangle _{A}{\otimes\mathds{1}}\right)  \\
&  =\dfrac{1}{4}\sin^{2}\theta\left(  \left\vert 00\right\rangle -\left\vert
11\right\rangle \right)  \left(  \left\langle 00\right\vert -\left\langle
11\right\vert \right)  ,
\end{align*}%
\begin{align*}
\tilde{\rho}_{11}^{\hat{n}_{1}} &  =\mathrm{tr}_{A}\left[  \left(
P_{11}^{\hat{n}_{1}}{\otimes\mathds{1}}\right)  \rho_{AB}\right]  \\
&  =\mathrm{tr}_{A}\left[  \left(  \left\vert 11\right\rangle \left\langle
11\right\vert {\otimes\mathds{1}}\right)  \left(  \cos^{2}\theta\left\vert
LC_{4}\right\rangle \left\langle LC_{4}\right\vert +\sin^{2}\theta\left\vert
LC_{4}^{\prime}\right\rangle \left\langle LC_{4}^{\prime}\right\vert \right)
\right]  \\
&  =\dfrac{1}{4}\mathrm{tr}_{A}\left[  \left\vert 11\right\rangle _{A}\left(
\cos^{2}\theta\left(  \left\vert 00\right\rangle -\left\vert 11\right\rangle
\right)  \right)  \left(  \left\langle 0000\right\vert +\left\langle
1100\right\vert +\left\langle 0011\right\vert -\left\langle 1111\right\vert
\right)  \right]  \\
&  =\dfrac{1}{4}\left(  \left\langle 11\right\vert _{A}{\otimes\mathds{1}}%
\right)  \left(  \left\vert 11\right\rangle _{A}{\otimes\mathds{1}}\right)
\left(  \cos^{2}\theta\left(  \left\vert 00\right\rangle -\left\vert
11\right\rangle \right)  \right)  \left[  \left\langle 00\right\vert
_{A}\left(  \left\langle 00\right\vert +\left\langle 11\right\vert \right)
+\left\langle 11\right\vert _{A}\left(  \left\langle 00\right\vert
-\left\langle 11\right\vert \right)  \right]  \left(  \left\vert
11\right\rangle _{A}{\otimes\mathds{1}}\right)  \\
&  =\dfrac{1}{4}\cos^{2}\theta\left(  \left\vert 00\right\rangle -\left\vert
11\right\rangle \right)  \left(  \left\langle 00\right\vert -\left\langle
11\right\vert \right)  ,
\end{align*}

\begin{align*}
\tilde{\rho}_{00}^{\hat{n}_{2}} &  =\mathrm{tr}_{A}\left[  \left(
P_{00}^{\hat{n}_{2}}{\otimes\mathds{1}}\right)  \rho_{AB}\right]  \\
&  =\mathrm{tr}_{A}\left[  \left(  \left\vert \uparrow+\right\rangle
\left\langle \uparrow+\right\vert {\otimes\mathds{1}}\right)  \left(  \cos
^{2}\theta\left\vert LC_{4}\right\rangle \left\langle LC_{4}\right\vert
+\sin^{2}\theta\left\vert LC_{4}^{\prime}\right\rangle \left\langle
LC_{4}^{\prime}\right\vert \right)  \right]  \\
&  =\dfrac{1}{4}\mathrm{tr}_{A}\left\{  \left(  \left\vert \uparrow
+\right\rangle _{A}{\otimes\mathds{1}}\right)  \left[  \cos^{2}\theta\left(
1-\mathrm{i}\right)  \left(  \left\vert 00\right\rangle +\mathrm{i}\left\vert
11\right\rangle \newline\right)  \left\langle LC_{4}\right\vert +\sin
^{2}\theta\left(  1-\mathrm{i}\right)  \left(  \left\vert 00\right\rangle
+\mathrm{i}\left\vert 11\right\rangle \newline\right)  \left\langle
LC_{4}^{\prime}\right\vert \right]  \right\}  \\
&  =\dfrac{1}{4}\left(  \left\langle \uparrow+\right\vert _{A}{\otimes
\mathds{1}}\right)  \left(  \left\vert \uparrow+\right\rangle _{A}%
{\otimes\mathds{1}}\right)  \left[  \cos^{2}\theta\left(  1-\mathrm{i}\right)
\left(  \left\vert 00\right\rangle +\mathrm{i}\left\vert 11\right\rangle
\newline\right)  \left\langle LC_{4}\right\vert +\sin^{2}\theta\left(
1-\mathrm{i}\right)  \left(  \left\vert 00\right\rangle +\mathrm{i}\left\vert
11\right\rangle \newline\right)  \left\langle LC_{4}^{\prime}\right\vert
\right]  \left(  \left\vert \uparrow+\right\rangle _{A}{\otimes\mathds{1}}%
\right)  \\
&  =\dfrac{1}{16}\left(  \left\langle \uparrow+\right\vert _{A}{\otimes
\mathds{1}}\right)  \left(  \left\vert \uparrow+\right\rangle _{A}%
{\otimes\mathds{1}}\right)  \cos^{2}\theta\left(  1-\mathrm{i}\right)  \left(
\left\vert 00\right\rangle +\mathrm{i}\left\vert 11\right\rangle
\newline\right)  \left[  \left(  \left\langle \uparrow+\right\vert
+\left\langle \downarrow-\right\vert \right)  \left(  1+\mathrm{i}\right)
\left(  \left\langle 00\right\vert -\mathrm{i}\left\langle 11\right\vert
\newline\right)  \right.  \\
&  \left.  +\left(  \left\langle \left\vert \uparrow-\right\rangle \right\vert
+\left\langle \downarrow+\right\vert \right)  \left(  1-\mathrm{i}\right)
\left(  \left\langle 00\right\vert +\mathrm{i}\left\langle 11\right\vert
\right)  \right]  \newline\left(  \left\vert \uparrow+\right\rangle
_{A}{\otimes\mathds{1}}\right)  \\
&  +\left(  \left\langle \uparrow+\right\vert _{A}{\otimes\mathds{1}}\right)
\left(  \left\vert \uparrow+\right\rangle _{A}{\otimes\mathds{1}}\right)
\sin^{2}\theta\left(  1-\mathrm{i}\right)  \left(  \left\vert 00\right\rangle
+\mathrm{i}\left\vert 11\right\rangle \newline\right)  \left[  \left(
\left\langle \uparrow+\right\vert -\left\langle \downarrow-\right\vert
\right)  \left(  1+\mathrm{i}\right)  \left(  \left\langle 00\right\vert
-\mathrm{i}\left\langle 11\right\vert \newline\right)  \right.  \\
&  \left.  +\left(  -\left\langle \uparrow-\right\vert +\left\langle
\downarrow+\right\vert \right)  \left(  1-\mathrm{i}\right)  \left(
\left\langle 00\right\vert +\mathrm{i}\left\langle 11\right\vert \right)
\right]  \newline\left(  \left\vert \uparrow+\right\rangle _{A}{\otimes
\mathds{1}}\right)  \\
&  =\dfrac{1}{16}\left(  1-\mathrm{i}\right)  \left(  \left\vert
00\right\rangle +\mathrm{i}\left\vert 11\right\rangle \newline\right)
\left\langle \uparrow+\right\vert _{A}\left\{  \left\vert \uparrow
+\right\rangle \left(  1+\mathrm{i}\right)  \left(  \left\langle 00\right\vert
-\mathrm{i}\left\langle 11\right\vert \newline\right)  \newline\right\}  \\
&  =\dfrac{1}{16}\left(  1-\mathrm{i}\right)  \left(  \left\vert
00\right\rangle +\mathrm{i}\left\vert 11\right\rangle \newline\right)  \left(
1+\mathrm{i}\right)  \left(  \left\langle 00\right\vert -\mathrm{i}%
\left\langle 11\right\vert \newline\right)  \newline\\
&  =\dfrac{1}{8}\left(  \left\vert 00\right\rangle +\mathrm{i}\left\vert
11\right\rangle \right)  \left(  \left\langle 00\right\vert -\mathrm{i}%
\left\langle 11\right\vert \right)  ,
\end{align*}

\begin{align*}
\tilde{\rho}_{01}^{\hat{n}_{2}} &  =\mathrm{tr}_{A}\left[  \left(
P_{01}^{\hat{n}_{2}}{\otimes\mathds{1}}\right)  \rho_{AB}\right]  \\
&  =\mathrm{tr}_{A}\left[  \left(  \left\vert \uparrow-\right\rangle
\left\langle \uparrow-\right\vert {\otimes\mathds{1}}\right)  \left(  \cos
^{2}\theta\left\vert LC_{4}\right\rangle \left\langle LC_{4}\right\vert
+\sin^{2}\theta\left\vert LC_{4}^{\prime}\right\rangle \left\langle
LC_{4}^{\prime}\right\vert \right)  \right]  \\
&  =\dfrac{1}{4}\mathrm{tr}_{A}\left\{  \left(  \left\vert \uparrow
-\right\rangle _{A}{\otimes\mathds{1}}\right)  \left[  \cos^{2}\theta\left(
1+\mathrm{i}\right)  \left(  \left\vert 00\right\rangle -\mathrm{i}\left\vert
11\right\rangle \right)  \left\langle LC_{4}\right\vert -\sin^{2}\theta\left(
1+\mathrm{i}\right)  \left(  \left\vert 00\right\rangle -\mathrm{i}\left\vert
11\right\rangle \right)  \left\langle LC_{4}^{\prime}\right\vert \right]
\right\}  ,\\
&  =\dfrac{1}{4}\left(  \left\langle \uparrow-\right\vert _{A}{\otimes
\mathds{1}}\right)  \left(  \left\vert \uparrow-\right\rangle _{A}%
{\otimes\mathds{1}}\right)  \left[  \cos^{2}\theta\left(  1+\mathrm{i}\right)
\left(  \left\vert 00\right\rangle -\mathrm{i}\left\vert 11\right\rangle
\right)  \left\langle LC_{4}\right\vert -\sin^{2}\theta\left(  1+\mathrm{i}%
\right)  \left(  \left\vert 00\right\rangle -\mathrm{i}\left\vert
11\right\rangle \right)  \left\langle LC_{4}^{\prime}\right\vert \right]
\left(  \left\vert \uparrow-\right\rangle _{A}{\otimes\mathds{1}}\right)  ,\\
&  =\dfrac{1}{16}\left(  \left\langle \uparrow-\right\vert _{A}{\otimes
\mathds{1}}\right)  \left(  \left\vert \uparrow-\right\rangle _{A}%
{\otimes\mathds{1}}\right)  \cos^{2}\theta\left(  1+\mathrm{i}\right)  \left(
\left\vert 00\right\rangle -\mathrm{i}\left\vert 11\right\rangle \right)
\left[  \left(  \left\langle \uparrow+\right\vert +\left\langle \downarrow
-\right\vert \right)  \left(  1+\mathrm{i}\right)  \left(  \left\langle
00\right\vert -\mathrm{i}\left\langle 11\right\vert \newline\right)  \right.
\\
&  \left.  +\left(  \left\langle \left\vert \uparrow-\right\rangle \right\vert
+\left\langle \downarrow+\right\vert \right)  \left(  1-\mathrm{i}\right)
\left(  \left\langle 00\right\vert +\mathrm{i}\left\langle 11\right\vert
\right)  \right]  \newline\left(  \left\vert \uparrow-\right\rangle
_{A}{\otimes\mathds{1}}\right)  \\
&  -\left(  \left\langle \uparrow-\right\vert _{A}{\otimes\mathds{1}}\right)
\left(  \left\vert \uparrow-\right\rangle _{A}{\otimes\mathds{1}}\right)
\sin^{2}\theta\left(  1+\mathrm{i}\right)  \left(  \left\vert 00\right\rangle
-\mathrm{i}\left\vert 11\right\rangle \right)  \left[  \left(  \left\langle
\uparrow+\right\vert -\left\langle \downarrow-\right\vert \right)  \left(
1+\mathrm{i}\right)  \left(  \left\langle 00\right\vert -\mathrm{i}%
\left\langle 11\right\vert \newline\right)  \right.  \\
&  \left.  +\left(  -\left\langle \uparrow-\right\vert +\left\langle
\downarrow+\right\vert \right)  \left(  1-\mathrm{i}\right)  \left(
\left\langle 00\right\vert +\mathrm{i}\left\langle 11\right\vert \right)
\right]  \newline\left(  \left\vert \uparrow-\right\rangle _{A}{\otimes
\mathds{1}}\right)  ,\\
&  =\dfrac{1}{16}\cos^{2}\theta\left(  1+\mathrm{i}\right)  \left(  \left\vert
00\right\rangle -\mathrm{i}\left\vert 11\right\rangle \right)  \left(
1-\mathrm{i}\right)  \left(  \left\langle 00\right\vert +\mathrm{i}%
\left\langle 11\right\vert \right)  +\sin^{2}\theta\left(  1+\mathrm{i}%
\right)  \left(  \left\vert 00\right\rangle -\mathrm{i}\left\vert
11\right\rangle \right)  \left(  1-\mathrm{i}\right)  \left(  \left\langle
00\right\vert +\mathrm{i}\left\langle 11\right\vert \right)  \newline%
\newline,\\
&  =\dfrac{1}{16}\left(  1+\mathrm{i}\right)  \left(  1-\mathrm{i}\right)
\left(  \left\vert 00\right\rangle -\mathrm{i}\left\vert 11\right\rangle
\right)  \left(  \left\langle 00\right\vert +\mathrm{i}\left\langle
11\right\vert \right)  \newline,\\
&  =\dfrac{1}{8}\left(  \left\vert 00\right\rangle -\mathrm{i}\left\vert
11\right\rangle \right)  \left(  \left\langle 00\right\vert +\mathrm{i}%
\left\langle 11\right\vert \right)  ,
\end{align*}

\begin{align*}
\tilde{\rho}_{10}^{\hat{n}_{2}} &  =\mathrm{tr}_{A}\left[  \left(
P_{10}^{\hat{n}_{2}}{\otimes\mathds{1}}\right)  \rho_{AB}\right]  ,\\
&  =\mathrm{tr}_{A}\left[  \left(  \left\vert \downarrow+\right\rangle
\left\langle \downarrow+\right\vert {\otimes\mathds{1}}\right)  \left(
\cos^{2}\theta\left\vert LC_{4}\right\rangle \left\langle LC_{4}\right\vert
+\sin^{2}\theta\left\vert LC_{4}^{\prime}\right\rangle \left\langle
LC_{4}^{\prime}\right\vert \right)  \right]  ,\\
&  =\dfrac{1}{4}\mathrm{tr}_{A}\left\{  \left(  \left\vert \downarrow
+\right\rangle _{A}{\otimes\mathds{1}}\right)  \left[  \cos^{2}\theta\left(
1+\mathrm{i}\right)  \left(  \left\vert 00\right\rangle -\mathrm{i}\left\vert
11\right\rangle \right)  \left\langle LC_{4}\right\vert +\sin^{2}\theta\left(
1+\mathrm{i}\right)  \left(  \left\vert 00\right\rangle -\mathrm{i}\left\vert
11\right\rangle \right)  \left\langle LC_{4}^{\prime}\right\vert \right]
\right\}  \\
&  =\dfrac{1}{4}\left(  \left\langle \downarrow+\right\vert _{A}%
{\otimes\mathds{1}}\right)  \left(  \left\vert \downarrow+\right\rangle
_{A}{\otimes\mathds{1}}\right)  \left[  \cos^{2}\theta\left(  1+\mathrm{i}%
\right)  \left(  \left\vert 00\right\rangle -\mathrm{i}\left\vert
11\right\rangle \right)  \left\langle LC_{4}\right\vert +\sin^{2}\theta\left(
1+\mathrm{i}\right)  \left(  \left\vert 00\right\rangle -\mathrm{i}\left\vert
11\right\rangle \right)  \left\langle LC_{4}^{\prime}\right\vert \right]
\left(  \left\vert \downarrow+\right\rangle _{A}{\otimes\mathds{1}}\right)
,\\
&  =\dfrac{1}{16}\left(  \left\langle \downarrow+\right\vert _{A}%
{\otimes\mathds{1}}\right)  \left(  \left\vert \downarrow+\right\rangle
_{A}{\otimes\mathds{1}}\right)  \cos^{2}\theta\left(  1+\mathrm{i}\right)
\left(  \left\vert 00\right\rangle -\mathrm{i}\left\vert 11\right\rangle
\right)  \left[  \left(  \left\langle \uparrow+\right\vert +\left\langle
\downarrow-\right\vert \right)  \left(  1+\mathrm{i}\right)  \left(
\left\langle 00\right\vert -\mathrm{i}\left\langle 11\right\vert
\newline\right)  \right.  \\
&  \left.  +\left(  \left\langle \left\vert \uparrow-\right\rangle \right\vert
+\left\langle \downarrow+\right\vert \right)  \left(  1-\mathrm{i}\right)
\left(  \left\langle 00\right\vert +\mathrm{i}\left\langle 11\right\vert
\right)  \right]  \newline\left(  \left\vert \downarrow+\right\rangle
_{A}{\otimes\mathds{1}}\right)  \\
&  +\left(  \left\langle \downarrow+\right\vert _{A}{\otimes\mathds{1}}%
\right)  \left(  \left\vert \downarrow+\right\rangle _{A}{\otimes
\mathds{1}}\right)  \sin^{2}\theta\left(  1+\mathrm{i}\right)  \left(
\left\vert 00\right\rangle -\mathrm{i}\left\vert 11\right\rangle \right)
\left[  \left(  \left\langle \uparrow+\right\vert -\left\langle \downarrow
-\right\vert \right)  \left(  1+\mathrm{i}\right)  \left(  \left\langle
00\right\vert -\mathrm{i}\left\langle 11\right\vert \newline\right)  \right.
\\
&  \left.  +\left(  -\left\langle \uparrow-\right\vert +\left\langle
\downarrow+\right\vert \right)  \left(  1-\mathrm{i}\right)  \left(
\left\langle 00\right\vert +\mathrm{i}\left\langle 11\right\vert \right)
\right]  \newline\left(  \left\vert \downarrow+\right\rangle _{A}%
{\otimes\mathds{1}}\right)  ,\\
&  =\dfrac{1}{16}\cos^{2}\theta\left(  1+\mathrm{i}\right)  \left(  \left\vert
00\right\rangle -\mathrm{i}\left\vert 11\right\rangle \right)  \left(
1-\mathrm{i}\right)  \left(  \left\langle 00\right\vert +\mathrm{i}%
\left\langle 11\right\vert \right)  +\sin^{2}\theta\left(  1+\mathrm{i}%
\right)  \left(  \left\vert 00\right\rangle -\mathrm{i}\left\vert
11\right\rangle \right)  \left(  1-\mathrm{i}\right)  \left(  \left\langle
00\right\vert +\mathrm{i}\left\langle 11\right\vert \right)  \newline%
\newline,\\
&  =\dfrac{1}{16}\left(  1+\mathrm{i}\right)  \left(  \left\vert
00\right\rangle -\mathrm{i}\left\vert 11\right\rangle \right)  \left(
1-\mathrm{i}\right)  \left(  \left\langle 00\right\vert +\mathrm{i}%
\left\langle 11\right\vert \right)  \newline,\\
&  =\dfrac{1}{8}\left(  \left\vert 00\right\rangle -\mathrm{i}\left\vert
11\right\rangle \right)  \left(  \left\langle 00\right\vert +\mathrm{i}%
\left\langle 11\right\vert \right)  ,
\end{align*}

\begin{align*}
\tilde{\rho}_{11}^{\hat{n}_{2}} &  =\mathrm{tr}_{A}\left[  \left(
P_{11}^{\hat{n}_{2}}{\otimes\mathds{1}}\right)  \rho_{AB}\right]  ,\\
&  =\mathrm{tr}_{A}\left[  \left(  \left\vert \downarrow-\right\rangle
\left\langle \downarrow-\right\vert {\otimes\mathds{1}}\right)  \left(
\cos^{2}\theta\left\vert LC_{4}\right\rangle \left\langle LC_{4}\right\vert
+\sin^{2}\theta\left\vert LC_{4}^{\prime}\right\rangle \left\langle
LC_{4}^{\prime}\right\vert \right)  \right]  ,\\
&  =\dfrac{1}{4}\mathrm{tr}_{A}\left\{  \left(  \left\vert \downarrow
-\right\rangle _{A}{\otimes\mathds{1}}\right)  \left[  \cos^{2}\theta\left(
1-\mathrm{i}\right)  \left(  \left\vert 00\right\rangle +\mathrm{i}\left\vert
11\right\rangle \newline\right)  \left\langle LC_{4}\right\vert -\sin
^{2}\theta\left(  1-\mathrm{i}\right)  \left(  \left\vert 00\right\rangle
+\mathrm{i}\left\vert 11\right\rangle \newline\right)  \left\langle
LC_{4}^{\prime}\right\vert \right]  \right\}  ,\\
&  =\dfrac{1}{4}\left(  \left\langle \downarrow-\right\vert _{A}%
{\otimes\mathds{1}}\right)  \left(  \left\vert \downarrow-\right\rangle
_{A}{\otimes\mathds{1}}\right)  \left[  \cos^{2}\theta\left(  1-\mathrm{i}%
\right)  \left(  \left\vert 00\right\rangle +\mathrm{i}\left\vert
11\right\rangle \newline\right)  \left\langle LC_{4}\right\vert -\sin
^{2}\theta\left(  1-\mathrm{i}\right)  \left(  \left\vert 00\right\rangle
+\mathrm{i}\left\vert 11\right\rangle \newline\right)  \left\langle
LC_{4}^{\prime}\right\vert \right]  \left(  \left\vert \downarrow
-\right\rangle _{A}{\otimes\mathds{1}}\right)  \\
&  =\dfrac{1}{16}\left(  \left\langle \downarrow-\right\vert _{A}%
{\otimes\mathds{1}}\right)  \left(  \left\vert \downarrow-\right\rangle
_{A}{\otimes\mathds{1}}\right)  \cos^{2}\theta\left(  1-\mathrm{i}\right)
\left(  \left\vert 00\right\rangle +\mathrm{i}\left\vert 11\right\rangle
\newline\right)  \left[  \left(  \left\langle \uparrow+\right\vert
+\left\langle \downarrow-\right\vert \right)  \left(  1+\mathrm{i}\right)
\left(  \left\langle 00\right\vert -\mathrm{i}\left\langle 11\right\vert
\newline\right)  \right.  \\
&  \left.  +\left(  \left\langle \left\vert \uparrow-\right\rangle \right\vert
+\left\langle \downarrow+\right\vert \right)  \left(  1-\mathrm{i}\right)
\left(  \left\langle 00\right\vert +\mathrm{i}\left\langle 11\right\vert
\right)  \right]  \newline\left(  \left\vert \downarrow-\right\rangle
_{A}{\otimes\mathds{1}}\right)  \\
&  -\left(  \left\langle \downarrow-\right\vert _{A}{\otimes\mathds{1}}%
\right)  \left(  \left\vert \downarrow-\right\rangle _{A}{\otimes
\mathds{1}}\right)  \sin^{2}\theta\left(  1-\mathrm{i}\right)  \left(
\left\vert 00\right\rangle +\mathrm{i}\left\vert 11\right\rangle
\newline\right)  \left[  \left(  \left\langle \uparrow+\right\vert
-\left\langle \downarrow-\right\vert \right)  \left(  1+\mathrm{i}\right)
\left(  \left\langle 00\right\vert -\mathrm{i}\left\langle 11\right\vert
\newline\right)  \right.  \\
&  \left.  +\left(  -\left\langle \uparrow-\right\vert +\left\langle
\downarrow+\right\vert \right)  \left(  1-\mathrm{i}\right)  \left(
\left\langle 00\right\vert +\mathrm{i}\left\langle 11\right\vert \right)
\right]  \newline\left(  \left\vert \downarrow-\right\rangle _{A}%
{\otimes\mathds{1}}\right)  \\
&  =\dfrac{1}{16}\cos^{2}\theta\left(  1-\mathrm{i}\right)  \left(  \left\vert
00\right\rangle +\mathrm{i}\left\vert 11\right\rangle \newline\right)  \left(
1+\mathrm{i}\right)  \left(  \left\langle 00\right\vert -\mathrm{i}%
\left\langle 11\right\vert \newline\right)  +\sin^{2}\theta\left(
1-\mathrm{i}\right)  \left(  \left\vert 00\right\rangle +\mathrm{i}\left\vert
11\right\rangle \newline\right)  \left(  1+\mathrm{i}\right)  \left(
\left\langle 00\right\vert -\mathrm{i}\left\langle 11\right\vert
\newline\right)  \newline\newline\\
&  =\dfrac{1}{16}\left(  1-\mathrm{i}\right)  \left(  \left\vert
00\right\rangle +\mathrm{i}\left\vert 11\right\rangle \newline\right)  \left(
1+\mathrm{i}\right)  \left(  \left\langle 00\right\vert -\mathrm{i}%
\left\langle 11\right\vert \newline\right)  \newline\\
&  =\dfrac{1}{8}\left(  \left\vert 00\right\rangle +\mathrm{i}\left\vert
11\right\rangle \right)  \left(  \left\langle 00\right\vert -\mathrm{i}%
\left\langle 11\right\vert \right)  .
\end{align*}
In the quantum result, Bob's states are all pure. Although there are identical
results in the same set of measurements, Bob's two sets of results are
completely different, which satisfies the requirements in the theorem.

If Bob's states have a LHS description, they must satisfy Eqs. (\ref{eq2.1})
and (\ref{eq2.2}). Because there are only 4 different states in the quantum
result, it is sufficient to take $\xi$ from $1$ to $4$, one has%
\[
\left\{
\begin{array}
[c]{c}%
\tilde{\rho}_{00}^{\hat{n}_{1}}=\wp\left(  00|\hat{n}_{1},1\right)  \wp
_{1}\rho_{1}+\wp\left(  00|\hat{n}_{1},2\right)  \wp_{2}\rho_{2}+\wp\left(
00|\hat{n}_{1},3\right)  \wp_{3}\rho_{3}+\wp\left(  00|\hat{n}_{1},4\right)
\wp_{4}\rho_{4},\\
\tilde{\rho}_{01}^{\hat{n}_{1}}=\wp\left(  01|\hat{n}_{1},1\right)  \wp
_{1}\rho_{1}+\wp\left(  01|\hat{n}_{1},2\right)  \wp_{2}\rho_{2}+\wp\left(
01|\hat{n}_{1},3\right)  \wp_{3}\rho_{3}+\wp\left(  01|\hat{n}_{1},4\right)
\wp_{4}\rho_{4},\\
\tilde{\rho}_{10}^{\hat{n}_{1}}=\wp\left(  10|\hat{n}_{1},1\right)  \wp
_{1}\rho_{1}+\wp\left(  10|\hat{n}_{1},2\right)  \wp_{2}\rho_{2}+\wp\left(
10|\hat{n}_{1},3\right)  \wp_{3}\rho_{3}+\wp\left(  10|\hat{n}_{1},4\right)
\wp_{4}\rho_{4},\\
\tilde{\rho}_{11}^{\hat{n}_{1}}=\wp\left(  11|\hat{n}_{1},1\right)  \wp
_{1}\rho_{1}+\wp\left(  11|\hat{n}_{1},2\right)  \wp_{2}\rho_{2}+\wp\left(
11|\hat{n}_{1},3\right)  \wp_{3}\rho_{3}+\wp\left(  11\hat{n}_{1},4\right)
\wp_{4}\rho_{4},\\
\tilde{\rho}_{00}^{\hat{n}_{2}}=\wp\left(  00|\hat{n}_{2},1\right)  \wp
_{1}\rho_{1}+\wp\left(  00|\hat{n}_{2},2\right)  \wp_{2}\rho_{2}+\wp\left(
00|\hat{n}_{2},3\right)  \wp_{3}\rho_{3}+\wp\left(  00|\hat{n}_{2},4\right)
\wp_{4}\rho_{4},\\
\tilde{\rho}_{01}^{\hat{n}_{2}}=\wp\left(  01|\hat{n}_{2},1\right)  \wp
_{1}\rho_{1}+\wp\left(  01|\hat{n}_{2},2\right)  \wp_{2}\rho_{2}+\wp\left(
01|\hat{n}_{2},3\right)  \wp_{3}\rho_{3}+\wp\left(  01|\hat{n}_{2},4\right)
\wp_{4}\rho_{4},\\
\tilde{\rho}_{10}^{\hat{n}_{2}}=\wp\left(  10|\hat{n}_{2},1\right)  \wp
_{1}\rho_{1}+\wp\left(  10|\hat{n}_{2},2\right)  \wp_{2}\rho_{2}+\wp\left(
10|\hat{n}_{2},3\right)  \wp_{3}\rho_{3}+\wp\left(  10|\hat{n}_{2},4\right)
\wp_{4}\rho_{4},\\
\tilde{\rho}_{11}^{\hat{n}_{2}}=\wp\left(  11|\hat{n}_{2},1\right)  \wp
_{1}\rho_{1}+\wp\left(  11|\hat{n}_{2},2\right)  \wp_{2}\rho_{2}+\wp\left(
11|\hat{n}_{2},3\right)  \wp_{3}\rho_{3}+\wp\left(  11|\hat{n}_{2},4\right)
\wp_{4}\rho_{4}.
\end{array}
\right.
\]
Since the eight states of Eq. (\ref{2.09}) are pure states, and a pure state
cannot be obtained by convex combination of other pure states, one has%
\begin{equation}
\left\{
\begin{array}
[c]{c}%
\tilde{\rho}_{00}^{\hat{n}_{1}}=\wp\left(  00|\hat{n}_{1},1\right)  \wp
_{1}\rho_{1},\\
\tilde{\rho}_{01}^{\hat{n}_{1}}=\wp\left(  01|\hat{n}_{1},1\right)  \wp
_{1}\rho_{1},\\
\tilde{\rho}_{10}^{\hat{n}_{1}}=\wp\left(  10|\hat{n}_{1},2\right)  \wp
_{2}\rho_{2},\\
\tilde{\rho}_{11}^{\hat{n}_{1}}=\wp\left(  11|\hat{n}_{1},2\right)  \wp
_{2}\rho_{2},\\
\tilde{\rho}_{00}^{\hat{n}_{2}}=\wp\left(  00|\hat{n}_{2},3\right)  \wp
_{3}\rho_{3},\\
\tilde{\rho}_{01}^{\hat{n}_{2}}=\wp\left(  01|\hat{n}_{2},4\right)  \wp
_{4}\rho_{4},\\
\tilde{\rho}_{10}^{\hat{n}_{2}}=\wp\left(  10|\hat{n}_{2},4\right)  \wp
_{4}\rho_{4},\\
\tilde{\rho}_{11}^{\hat{n}_{2}}=\wp\left(  11|\hat{n}_{2},3\right)  \wp
_{3}\rho_{3}.
\end{array}
\right.  \label{eq2.10}%
\end{equation}
Since in quantum result $\tilde{\rho}_{00}^{\hat{n}_{1}}=\tilde{\rho}%
_{01}^{\hat{n}_{1}}$, $\tilde{\rho}_{10}^{\hat{n}_{1}}=\tilde{\rho}_{11}%
^{\hat{n}_{1}}$, $\tilde{\rho}_{00}^{\hat{n}_{2}}=\tilde{\rho}_{11}^{\hat
{n}_{2}}$, and $\tilde{\rho}_{01}^{\hat{n}_{2}}=\tilde{\rho}_{10}^{\hat{n}%
_{2}}$, in the LHS description we describe the identical terms by the same
hidden state. And ${\sum\limits_{a}}\wp\left(  a|\hat{n}_{1},\xi\right)  =1$
and ${\sum\limits_{a}}\wp\left(  a|\hat{n}_{2},\xi\right)  =1$, so we have
$\wp\left(  00|\hat{n}_{1},1\right)  +\wp\left(  01|\hat{n}_{1},1\right)  =1$,
$\wp\left(  10|\hat{n}_{1},2\right)  +\wp\left(  11|\hat{n}_{1},2\right)  $,
$\wp\left(  00|\hat{n}_{2},3\right)  +\wp\left(  11|\hat{n}_{2},3\right)  =1$,
and $\wp\left(  01|\hat{n}_{2},4\right)  +\wp\left(  10|\hat{n}_{2},4\right)
=1$. Then we take the sum of Eq. (\ref{eq2.10}) and take the trace, and we get
\textquotedblleft$2_{Q}=1_{C}$\textquotedblright. In the same way, in this
case we get the EPR steering paradox \textquotedblleft$2_{Q}=1_{C}%
$\textquotedblright\ and at this moment $\delta=0$.

\subsection{There are identical states in different sets of measurements}

Suppose that none of the elements in $\left\{  \left\vert \eta_{i}%
\right\rangle \right\}  $ are the same, and that none of the elements in
$\left\{  \left\vert \varepsilon_{j}\right\rangle \right\}  $ are the same,
but that an element $\left\vert \eta_{m}\right\rangle $ in $\left\{
\left\vert \eta_{i}\right\rangle \right\}  $ is the same as an element
$\left\vert \varepsilon_{n}\right\rangle $ in $\left\{  \left\vert
\varepsilon_{j}\right\rangle \right\}  $, i.e., $\left\vert \eta
_{m}\right\rangle =\left\vert \varepsilon_{n}\right\rangle $, where
$m,n=1,2,\cdots,2^{M}$. In this case, Bob's conditional states are%
\begin{equation}
\left\{
\begin{array}
[c]{l}%
\tilde{\rho}_{a_{1}}^{\hat{n}_{1}}=\sum_{\alpha}p_{\alpha}\left\vert
s_{1}^{\left(  \alpha\right)  }\right\vert ^{2}\left\vert \eta_{1}%
\right\rangle \left\langle \eta_{1}\right\vert ,\\
\cdots\\
\tilde{\rho}_{a_{m}}^{\hat{n}_{1}}=\sum_{\alpha}p_{\alpha}\left\vert
s_{m}^{\left(  \alpha\right)  }\right\vert ^{2}\left\vert \eta_{m}%
\right\rangle \left\langle \eta_{m}\right\vert ,\\
\cdots\\
\tilde{\rho}_{a_{2^{M}}}^{\hat{n}_{1}}=\sum_{\alpha}p_{\alpha}\left\vert
s_{2^{M}}^{\left(  \alpha\right)  }\right\vert ^{2}\left\vert \eta_{2^{M}%
}\right\rangle \left\langle \eta_{2^{M}}\right\vert ,\\
\tilde{\rho}_{a_{1}^{\prime}}^{\hat{n}_{2}}=\sum_{\alpha}p_{\alpha}\left\vert
t_{1}^{\left(  \alpha\right)  }\right\vert ^{2}\left\vert \varepsilon
_{1}\right\rangle \left\langle \varepsilon_{1}\right\vert ,\\
\cdots\\
\tilde{\rho}_{a_{n}^{\prime}}^{\hat{n}_{2}}=\sum_{\alpha}p_{\alpha}\left\vert
t_{n}^{\left(  \alpha\right)  }\right\vert ^{2}\left\vert \eta_{m}%
\right\rangle \left\langle \eta_{m}\right\vert ,\\
\cdots\\
\tilde{\rho}_{a_{2^{M}}^{\prime}}^{\hat{n}_{2}}=\sum_{\alpha}p_{\alpha
}\left\vert t_{2^{M}}^{\left(  \alpha\right)  }\right\vert ^{2}\left\vert
\varepsilon_{2^{M}}\right\rangle \left\langle \varepsilon_{2^{M}}\right\vert .
\end{array}
\right.  \label{c3.1}%
\end{equation}
there are $2^{M+1}-1$ different\ pure states in the quantum result Eq.
(\ref{c3.1}). We can take $\xi$ from 1 to $2^{M+1}-1$ in the LHS description,
e.g.,
\begin{equation}
\left\{
\begin{array}
[c]{l}%
\tilde{\rho}_{a_{1}}^{\hat{n}_{1}}=%
{\displaystyle\sum\limits_{\xi=1}^{2^{M+1}-1}}
\wp\left(  a_{1}|\hat{n}_{1},\xi\right)  \wp_{\xi}\rho_{\xi},\\
\cdots\\
\tilde{\rho}_{a_{m}}^{\hat{n}_{1}}=%
{\displaystyle\sum\limits_{\xi=1}^{2^{M+1}-1}}
\wp\left(  a_{m}|\hat{n}_{1},\xi\right)  \wp_{\xi}\rho_{\xi},\\
\cdots\\
\tilde{\rho}_{a_{2^{M}}}^{\hat{n}_{1}}=%
{\displaystyle\sum\limits_{\xi=1}^{2^{M+1}-1}}
\wp\left(  a_{2^{M}}|\hat{n}_{1},\xi\right)  \wp_{\xi}\rho_{\xi},\\
\tilde{\rho}_{a_{1}^{\prime}}^{\hat{n}_{2}}=%
{\displaystyle\sum\limits_{\xi=1}^{2^{M+1}-1}}
\wp\left(  a_{1}^{\prime}|\hat{n}_{2},\xi\right)  \wp_{\xi}\rho_{\xi},\\
\cdots\\
\tilde{\rho}_{a_{n}^{\prime}}^{\hat{n}_{2}}=%
{\displaystyle\sum\limits_{\xi=1}^{2^{M+1}-1}}
\wp\left(  a_{n}^{\prime}|\hat{n}_{2},\xi\right)  \wp_{\xi}\rho_{\xi},\\
\cdots\\
\tilde{\rho}_{a_{2^{M}}^{\prime}}^{\hat{n}_{2}}=%
{\displaystyle\sum\limits_{\xi=1}^{2^{M+1}-1}}
\wp\left(  a_{2^{M}}^{\prime}|\hat{n}_{2},\xi\right)  \wp_{\xi}\rho_{\xi}.
\end{array}
\right.  \label{c3.2}%
\end{equation}
Each $\tilde{\rho}_{a}^{\hat{n}_{\ell}}$ in Eq. (\ref{c3.2}) contains only one
term because Bob's states are all pure. Hence, Eq. (\ref{c3.2}) can be
expressed as:%
\begin{equation}
\left\{
\begin{array}
[c]{l}%
\tilde{\rho}_{a_{1}}^{\hat{n}_{1}}=\wp_{1}\rho_{1},\\
\cdots\\
\tilde{\rho}_{a_{m}}^{\hat{n}_{1}}=\wp\left(  a_{m}|\hat{n}_{1},m\right)
\wp_{m}\rho_{m},\\
\cdots\\
\tilde{\rho}_{a_{2^{M}}}^{\hat{n}_{1}}=\wp_{2^{M}-1}\rho_{2^{M}-1},\\
\tilde{\rho}_{a_{1}^{\prime}}^{\hat{n}_{2}}=\wp_{2^{M}}\rho_{2^{M}},\\
\cdots\\
\tilde{\rho}_{a_{n}^{\prime}}^{\hat{n}_{2}}=\wp\left(  a_{n}^{\prime}|\hat
{n}_{2},m\right)  \wp_{m}\rho_{m},\\
\cdots\\
\tilde{\rho}_{a_{2^{M}}^{\prime}}^{\hat{n}_{2}}=\wp_{2^{M+1}-1}\rho
_{2^{M+1}-1}.
\end{array}
\right.  \label{c3.3}%
\end{equation}
Since we assume that $\left\vert \eta_{m}\right\rangle =\left\vert
\varepsilon_{n}\right\rangle $, we describe $\tilde{\rho}_{a_{m}}^{\hat{n}%
_{1}}$ and $\tilde{\rho}_{a_{n}^{\prime}}^{\hat{n}_{2}}$ by the same hidden
state $\wp_{m}\rho_{m}$. And ${\sum\limits_{a}}\wp\left(  a|\hat{n}%
,\xi\right)  =1$, so we have $\wp\left(  a_{1}|\hat{n}_{1},1\right)
=\cdots=\wp\left(  a_{m-1}|\hat{n}_{1},m-1\right)  =\wp\left(  a_{m+1}|\hat
{n}_{1},m+1\right)  =\cdots=\wp\left(  a_{2^{M}}|\hat{n}_{1},2^{M}-1\right)
=\wp\left(  a_{1}^{\prime}|\hat{n}_{2},2^{M}\right)  \cdots\cdots=\wp\left(
a_{n-1}^{\prime}|\hat{n}_{2},n-1\right)  =\wp\left(  a_{n+1}^{\prime}|\hat
{n}_{2},n\right)  =\cdots=\wp\left(  a_{2^{M}}^{\prime}|\hat{n}_{2}%
,2^{M+1}-1\right)  =1$, and the others $\wp\left(  a|\hat{n},\xi\right)  =0$,
with the peculiarity that in this case $\wp\left(  a_{m}|\hat{n}_{1},m\right)
=\wp\left(  a_{n}^{\prime}|\hat{n}_{2},m\right)  =1$, e.g., Eq. (\ref{c3.2})
should be written as%
\begin{equation}
\left\{
\begin{array}
[c]{l}%
\tilde{\rho}_{a_{1}}^{\hat{n}_{1}}=\wp_{1}\rho_{1},\\
\cdots\\
\tilde{\rho}_{a_{m}}^{\hat{n}_{1}}=\wp_{m}\rho_{m},\\
\cdots\\
\tilde{\rho}_{a_{2^{M}}}^{\hat{n}_{1}}=\wp_{2^{M}-1}\rho_{2^{M}-1},\\
\tilde{\rho}_{a_{1}^{\prime}}^{\hat{n}_{2}}=\wp_{2^{M}}\rho_{2^{M}},\\
\cdots\\
\tilde{\rho}_{a_{n}^{\prime}}^{\hat{n}_{2}}=\wp_{m}\rho_{m},\\
\cdots\\
\tilde{\rho}_{a_{2^{M}}^{\prime}}^{\hat{n}_{2}}=\wp_{2^{M+1}-1}\rho
_{2^{M+1}-1}.
\end{array}
\right.  \label{c3.4}%
\end{equation}
We take the sum of Eq. (\ref{c3.4}), the quantum result on the left gives
$\tilde{\rho}_{a_{1}}^{\hat{n}_{1}}+$ $\tilde{\rho}_{a_{2}}^{\hat{n}_{1}%
}+\cdots+\tilde{\rho}_{a_{2^{M}}}^{\hat{n}_{1}}+\tilde{\rho}_{a_{1}^{\prime}%
}^{\hat{n}_{2}}+\tilde{\rho}_{a_{2}^{\prime}}^{\hat{n}_{2}}+\cdots+\tilde
{\rho}_{a_{2^{M}}^{\prime}}^{\hat{n}_{2}}=2\rho_{B}$, and the classical result
on the right gives $\wp_{1}\rho_{1}+\wp_{2}\rho_{2}+\cdots+\wp_{m}\rho
_{m}+\cdots+\wp_{m}\rho_{m}+\cdots+\wp_{2^{M+1}-1}\rho_{2^{M+1}-1}=\rho
_{B}+\wp_{m}\rho_{m}$, since $%
{\displaystyle\sum\limits_{\xi=1}^{2^{M}}}
\wp_{\xi}\rho_{\xi}=\rho_{B}$. We then take the trace to get the result
\textquotedblleft$2_{Q}=\left(  1+\wp_{m}\right)  _{C}$\textquotedblright.
Since ${\sum\limits_{\xi}}\wp_{\xi}=1$, and $\wp_{1}$, $\cdots$, $\wp_{m-1}$,
$\wp_{m+1}$, $\cdots$, $\wp_{2^{M+1}-1}\neq0$, $0<\wp_{m}<1$. In this case, we
get the EPR steering paradox \textquotedblleft$2_{Q}=\left(  1+\delta\right)
_{C}$\textquotedblright\ and $\delta=\wp_{m}$, with $0<\delta<1$.

\subsubsection{Example 3}

We consider the example of a 3-qubit pure entangled state $\rho_{AB}%
=\left\vert \Psi^{\prime}\right\rangle \left\langle \Psi^{\prime}\right\vert
$, where
\begin{equation}
\left\vert \Psi^{\prime}\right\rangle =\dfrac{1}{\sqrt{6}}\left[  \left\vert
001\right\rangle +\left\vert 01\right\rangle \left(  \left\vert 0\right\rangle
+\left\vert 1\right\rangle \right)  +\left\vert 10\right\rangle \left(
\left\vert 0\right\rangle -\left\vert 1\right\rangle \right)  -\left\vert
110\right\rangle \right]  ,\label{eq3.1}%
\end{equation}
or%
\begin{align*}
\left\vert \Psi^{\prime}\right\rangle  &  =\dfrac{1}{2\sqrt{6}}\left(
\left\vert ++\right\rangle +\left\vert +-\right\rangle +\left\vert
-+\right\rangle +\left\vert --\right\rangle \right)  \left\vert 1\right\rangle
\\
&  +\dfrac{1}{2\sqrt{6}}\left(  \left\vert ++\right\rangle -\left\vert
+-\right\rangle +\left\vert -+\right\rangle -\left\vert --\right\rangle
\right)  \left(  \left\vert 0\right\rangle +\left\vert 1\right\rangle \right)
\\
&  +\dfrac{1}{2\sqrt{6}}\left(  \left\vert ++\right\rangle +\left\vert
+-\right\rangle -\left\vert -+\right\rangle -\left\vert --\right\rangle
\right)  \left(  \left\vert 0\right\rangle -\left\vert 1\right\rangle \right)
\\
&  -\dfrac{1}{2\sqrt{6}}\left(  \left\vert ++\right\rangle -\left\vert
+-\right\rangle -\left\vert -+\right\rangle +\left\vert --\right\rangle
\right)  \left\vert 0\right\rangle \\
&  =\dfrac{1}{2\sqrt{6}}\left[  \left\vert ++\right\rangle \left(  \left\vert
0\right\rangle +\left\vert 1\right\rangle \right)  +\left\vert +-\right\rangle
\left(  \left\vert 0\right\rangle -\left\vert 1\right\rangle \right)
+\left\vert -+\right\rangle \left(  \left\vert 0\right\rangle +3\left\vert
1\right\rangle \right)  -\left\vert --\right\rangle \left(  3\left\vert
0\right\rangle -\left\vert 1\right\rangle \right)  \right]  .
\end{align*}
Alice prepares the state $\rho_{AB}$ as in Eq. (\ref{eq3.1}). She keeps
particles 1, 2, and sends particle 3 to Bob. In the 2-setting steering
protocol $\left\{  \hat{n}_{1},\hat{n}_{2}\right\}  $ $\left(  \hat{n}_{1}%
\neq\hat{n}_{2}\right)  $, with%
\begin{equation}%
\begin{array}
[c]{c}%
\hat{n}_{1}=\sigma_{z}\sigma_{z}\equiv zz,\\
\hat{n}_{2}=\sigma_{x}\sigma_{x}\equiv xx,
\end{array}
\end{equation}
i.e.,%
\begin{equation}
\left\{
\begin{array}
[c]{l}%
P_{00}^{\hat{n}_{1}}=\left\vert 00\right\rangle \left\langle 00\right\vert ,\\
P_{01}^{\hat{n}_{1}}=\left\vert 01\right\rangle \left\langle 01\right\vert ,\\
P_{10}^{\hat{n}_{1}}=\left\vert 10\right\rangle \left\langle 10\right\vert ,\\
P_{11}^{\hat{n}_{1}}=\left\vert 11\right\rangle \left\langle 11\right\vert ,\\
P_{00}^{\hat{n}_{2}}=\left\vert ++\right\rangle \left\langle ++\right\vert ,\\
P_{01}^{\hat{n}_{2}}=\left\vert +-\right\rangle \left\langle +-\right\vert ,\\
P_{10}^{\hat{n}_{2}}=\left\vert -+\right\rangle \left\langle -+\right\vert ,\\
P_{11}^{\hat{n}_{2}}=\left\vert --\right\rangle \left\langle --\right\vert ,
\end{array}
\right.
\end{equation}
After Alice's measurement, Bob's unnormalized conditional states are
\begin{equation}
\left\{
\begin{array}
[c]{l}%
\tilde{\rho}_{00}^{\hat{n}_{1}}=\dfrac{1}{6}\left\vert 1\right\rangle
\left\langle 1\right\vert ,\\[9pt]%
\tilde{\rho}_{01}^{\hat{n}_{1}}=\dfrac{1}{6}\left(  \left\vert 0\right\rangle
+\left\vert 1\right\rangle \right)  \left(  \left\langle 0\right\vert
+\left\langle 1\right\vert \right)  ,\\[9pt]%
\tilde{\rho}_{10}^{\hat{n}_{1}}=\dfrac{1}{6}\left(  \left\vert 0\right\rangle
-\left\vert 1\right\rangle \right)  \left(  \left\langle 0\right\vert
-\left\langle 1\right\vert \right)  ,\\[9pt]%
\tilde{\rho}_{11}^{\hat{n}_{1}}=\dfrac{1}{6}\left\vert 0\right\rangle
\left\langle 0\right\vert ,\\[9pt]%
\tilde{\rho}_{00}^{\hat{n}_{2}}=\dfrac{1}{24}\left(  \left\vert 0\right\rangle
+\left\vert 1\right\rangle \right)  \left(  \left\langle 0\right\vert
+\left\langle 1\right\vert \right)  ,\\[9pt]%
\tilde{\rho}_{01}^{\hat{n}_{2}}=\dfrac{1}{24}\left(  \left\vert 0\right\rangle
-\left\vert 1\right\rangle \right)  \left(  \left\langle 0\right\vert
-\left\langle 1\right\vert \right)  ,\\[9pt]%
\tilde{\rho}_{10}^{\hat{n}_{2}}=\dfrac{1}{24}\left(  \left\vert 0\right\rangle
+3\left\vert 1\right\rangle \right)  \left(  \left\langle 0\right\vert
+3\left\langle 1\right\vert \right)  ,\\[9pt]%
\tilde{\rho}_{11}^{\hat{n}_{2}}=\dfrac{1}{24}\left(  3\left\vert
0\right\rangle -\left\vert 1\right\rangle \right)  \left(  3\left\langle
0\right\vert -\left\langle 1\right\vert \right)  ,
\end{array}
\right.  \label{eq3.2}%
\end{equation}
where,%

\begin{align*}
\tilde{\rho}_{00}^{\hat{n}_{1}} &  =\mathrm{tr}_{A}\left[  \left(
P_{00}^{\hat{n}_{1}}{\otimes\mathds{1}}\right)  \rho_{AB}\right]  \\
&  =\mathrm{tr}_{A}\left[  \left(  \left\vert 00\right\rangle \left\langle
00\right\vert {\otimes\mathds{1}}\right)  \left\vert \Psi^{\prime
}\right\rangle \left\langle \Psi^{\prime}\right\vert \right]  \\
&  =\mathrm{tr}_{A}\left[  \dfrac{1}{\sqrt{6}}\left(  \left\vert
00\right\rangle _{A}{\otimes\mathds{1}}\right)  \left\vert 1\right\rangle
\left\langle \Psi^{\prime}\right\vert \right]  \\
&  =\dfrac{1}{\sqrt{6}}\left\langle 00\right\vert _{A}{\otimes\mathds{1}}%
\left(  \left\vert 00\right\rangle _{A}{\otimes\mathds{1}}\right)  \left\vert
1\right\rangle \left\langle \Psi^{\prime}\right\vert \left(  \left\vert
00\right\rangle _{A}{\otimes\mathds{1}}\right)  \\
&  =\dfrac{1}{6}\left\langle 00\right\vert _{A}{\otimes\mathds{1}}\left(
\left\vert 00\right\rangle _{A}{\otimes\mathds{1}}\right)  \left\vert
1\right\rangle \left[  \left\langle 00\right\vert \left\langle 1\right\vert
+\left\langle 01\right\vert \left(  \left\langle 0\right\vert +\left\langle
1\right\vert \right)  +\left\langle 10\right\vert \left(  \left\langle
0\right\vert -\left\langle 1\right\vert \right)  -\left\langle 11\right\vert
\left\langle 0\right\vert \right]  \left(  \left\vert 00\right\rangle
_{A}{\otimes\mathds{1}}\right)  \\
&  =\dfrac{1}{6}\left\vert 1\right\rangle \left\langle 1\right\vert ,
\end{align*}

\begin{align*}
\tilde{\rho}_{01}^{\hat{n}_{1}} &  =\mathrm{tr}_{A}\left[  \left(
P_{01}^{\hat{n}_{1}}{\otimes\mathds{1}}\right)  \rho_{AB}\right]  \\
&  =\mathrm{tr}_{A}\left[  \left(  \left\vert 01\right\rangle \left\langle
01\right\vert {\otimes\mathds{1}}\right)  \left\vert \Psi^{\prime
}\right\rangle \left\langle \Psi^{\prime}\right\vert \right]  \\
&  =\mathrm{tr}_{A}\left[  \dfrac{1}{\sqrt{6}}\left(  \left\vert
01\right\rangle _{A}{\otimes\mathds{1}}\right)  \left(  \left\vert
0\right\rangle +\left\vert 1\right\rangle \right)  \left\langle \Psi^{\prime
}\right\vert \right]  \\
&  =\dfrac{1}{\sqrt{6}}\left(  \left\langle 01\right\vert _{A}{\otimes
\mathds{1}}\right)  \left(  \left\vert 01\right\rangle _{A}{\otimes
\mathds{1}}\right)  \left(  \left\vert 0\right\rangle +\left\vert
1\right\rangle \right)  \left\langle \Psi^{\prime}\right\vert \left(
\left\vert 01\right\rangle _{A}{\otimes\mathds{1}}\right)  \\
&  =\dfrac{1}{6}\left(  \left\langle 01\right\vert _{A}{\otimes\mathds{1}}%
\right)  \left(  \left\vert 01\right\rangle _{A}{\otimes\mathds{1}}\right)
\left(  \left\vert 0\right\rangle +\left\vert 1\right\rangle \right)  \left[
\left\langle 00\right\vert \left\langle 1\right\vert +\left\langle
01\right\vert \left(  \left\langle 0\right\vert +\left\langle 1\right\vert
\right)  \right.  \\
&  \left.  +\left\langle 10\right\vert \left(  \left\langle 0\right\vert
-\left\langle 1\right\vert \right)  -\left\langle 11\right\vert \left\langle
0\right\vert \right]  \left(  \left\vert 01\right\rangle _{A}{\otimes
\mathds{1}}\right)  \\
&  =\dfrac{1}{6}\left(  \left\vert 0\right\rangle +\left\vert 1\right\rangle
\right)  \left(  \left\langle 0\right\vert +\left\langle 1\right\vert \right)
,
\end{align*}

\begin{align*}
\tilde{\rho}_{10}^{\hat{n}_{1}} &  =\mathrm{tr}_{A}\left[  \left(
P_{10}^{\hat{n}_{1}}{\otimes\mathds{1}}\right)  \rho_{AB}\right]  \\
&  =\mathrm{tr}_{A}\left[  \left(  \left\vert 10\right\rangle \left\langle
10\right\vert {\otimes\mathds{1}}\right)  \left\vert \Psi^{\prime
}\right\rangle \left\langle \Psi^{\prime}\right\vert \right]  \\
&  =\mathrm{tr}_{A}\left[  \dfrac{1}{\sqrt{6}}\left(  \left\vert
10\right\rangle _{A}{\otimes\mathds{1}}\right)  \left(  \left\vert
0\right\rangle -\left\vert 1\right\rangle \right)  \left\langle \Psi^{\prime
}\right\vert \right]  \\
&  =\dfrac{1}{\sqrt{6}}\left(  \left\langle 10\right\vert _{A}{\otimes
\mathds{1}}\right)  \left(  \left\vert 10\right\rangle _{A}{\otimes
\mathds{1}}\right)  \left(  \left\vert 0\right\rangle -\left\vert
1\right\rangle \right)  \left(  \left\langle \Psi^{\prime}\right\vert
\left\vert 10\right\rangle _{A}{\otimes\mathds{1}}\right)  \\
&  =\dfrac{1}{6}\left(  \left\langle 10\right\vert _{A}{\otimes\mathds{1}}%
\right)  \left(  \left\vert 10\right\rangle _{A}{\otimes\mathds{1}}\right)
\left(  \left\vert 0\right\rangle -\left\vert 1\right\rangle \right)  \left[
\left\langle 00\right\vert \left\langle 1\right\vert +\left\langle
01\right\vert \left(  \left\langle 0\right\vert +\left\langle 1\right\vert
\right)  \right.  \\
&  \left.  +\left\langle 10\right\vert \left(  \left\langle 0\right\vert
-\left\langle 1\right\vert \right)  -\left\langle 11\right\vert \left\langle
0\right\vert \right]  \left(  \left\vert 10\right\rangle _{A}{\otimes
\mathds{1}}\right)  \\
&  =\dfrac{1}{6}\left(  \left\vert 0\right\rangle -\left\vert 1\right\rangle
\right)  \left(  \left\langle 0\right\vert -\left\langle 1\right\vert \right)
,
\end{align*}

\begin{align*}
\tilde{\rho}_{11}^{\hat{n}_{1}} &  =\mathrm{tr}_{A}\left[  \left(
P_{11}^{\hat{n}_{1}}{\otimes\mathds{1}}\right)  \rho_{AB}\right]  \\
&  =\mathrm{tr}_{A}\left[  \left(  \left\vert 11\right\rangle \left\langle
11\right\vert {\otimes\mathds{1}}\right)  \left\vert \Psi^{\prime
}\right\rangle \left\langle \Psi^{\prime}\right\vert \right]  \\
&  =\mathrm{tr}_{A}\left[  -\dfrac{1}{\sqrt{6}}\left(  \left\vert
11\right\rangle _{A}{\otimes\mathds{1}}\right)  \left\vert 0\right\rangle
\left\langle \Psi^{\prime}\right\vert \right]  \\
&  =-\dfrac{1}{\sqrt{6}}\left(  \left\langle 11\right\vert _{A}{\otimes
\mathds{1}}\right)  \left(  \left\vert 11\right\rangle _{A}{\otimes
\mathds{1}}\right)  \left\vert 0\right\rangle \left\langle \Psi^{\prime
}\right\vert \left(  \left\vert 11\right\rangle _{A}{\otimes\mathds{1}}%
\right)  \\
&  =-\dfrac{1}{6}\left(  \left\langle 11\right\vert _{A}{\otimes
\mathds{1}}\right)  \left(  \left\vert 11\right\rangle _{A}{\otimes
\mathds{1}}\right)  \left\vert 0\right\rangle \left[  \left\langle
00\right\vert \left\langle 1\right\vert +\left\langle 01\right\vert \left(
\left\langle 0\right\vert +\left\langle 1\right\vert \right)  +\left\langle
10\right\vert \left(  \left\langle 0\right\vert -\left\langle 1\right\vert
\right)  -\left\langle 11\right\vert \left\langle 0\right\vert \right]
\left(  \left\vert 11\right\rangle _{A}{\otimes\mathds{1}}\right)  \\
&  =\dfrac{1}{6}\left\vert 0\right\rangle \left\langle 0\right\vert ,
\end{align*}

\begin{align*}
\tilde{\rho}_{00}^{\hat{n}_{2}} &  =\mathrm{tr}_{A}\left[  \left(
P_{00}^{\hat{n}_{2}}{\otimes\mathds{1}}\right)  \rho_{AB}\right]  \\
&  =\mathrm{tr}_{A}\left[  \left(  \left\vert ++\right\rangle \left\langle
++\right\vert {\otimes\mathds{1}}\right)  \left\vert \Psi^{\prime
}\right\rangle \left\langle \Psi^{\prime}\right\vert \right]  \\
&  =\mathrm{tr}_{A}\left[  \dfrac{1}{2\sqrt{6}}\left(  \left\vert
++\right\rangle _{A}{\otimes\mathds{1}}\right)  \left(  \left\vert
0\right\rangle +\left\vert 1\right\rangle \right)  \left\langle \Psi^{\prime
}\right\vert \right]  \\
&  =\dfrac{1}{2\sqrt{6}}\left(  \left\langle ++\right\vert _{A}{\otimes
\mathds{1}}\right)  \left(  \left\vert ++\right\rangle _{A}{\otimes
\mathds{1}}\right)  \left(  \left\vert 0\right\rangle +\left\vert
1\right\rangle \right)  \left\langle \Psi^{\prime}\right\vert \left(
\left\vert ++\right\rangle _{A}{\otimes\mathds{1}}\right)  \\
&  =\dfrac{1}{24}\left(  \left\langle ++\right\vert _{A}{\otimes
\mathds{1}}\right)  \left(  \left\vert ++\right\rangle _{A}{\otimes
\mathds{1}}\right)  \left(  \left\vert 0\right\rangle +\left\vert
1\right\rangle \right)  \left[  \left\langle ++\right\vert \left(
\left\langle 0\right\vert +\left\langle 1\right\vert \right)  +\left\langle
+-\right\vert \left(  \left\langle 0\right\vert -\left\langle 1\right\vert
\right)  \right.  \\
&  \left.  +\left\langle -+\right\vert \left(  \left\langle 0\right\vert
+3\left\langle 1\right\vert \right)  -\left\langle --\right\vert \left(
3\left\langle 0\right\vert -\left\langle 1\right\vert \right)  \right]
\left(  \left\vert ++\right\rangle _{A}{\otimes\mathds{1}}\right)  \\
&  =\dfrac{1}{24}\left(  \left\vert 0\right\rangle +\left\vert 1\right\rangle
\right)  \left(  \left\langle 0\right\vert +\left\langle 1\right\vert \right)
,
\end{align*}

\begin{align*}
\tilde{\rho}_{01}^{\hat{n}_{2}} &  =\mathrm{tr}_{A}\left[  \left(
P_{01}^{\hat{n}_{2}}{\otimes\mathds{1}}\right)  \rho_{AB}\right]  \\
&  =\mathrm{tr}_{A}\left[  \left(  \left\vert +-\right\rangle \left\langle
+-\right\vert {\otimes\mathds{1}}\right)  \left\vert \Psi^{\prime
}\right\rangle \left\langle \Psi^{\prime}\right\vert \right]  \\
&  =\mathrm{tr}_{A}\left[  \dfrac{1}{2\sqrt{6}}\left(  \left\vert
+-\right\rangle _{A}{\otimes\mathds{1}}\right)  \left(  \left\vert
0\right\rangle -\left\vert 1\right\rangle \right)  \left\langle \Psi^{\prime
}\right\vert \right]  \\
&  =\dfrac{1}{2\sqrt{6}}\left(  \left\langle +-\right\vert _{A}{\otimes
\mathds{1}}\right)  \left(  \left\vert +-\right\rangle _{A}{\otimes
\mathds{1}}\right)  \left(  \left\vert 0\right\rangle -\left\vert
1\right\rangle \right)  \left\langle \Psi^{\prime}\right\vert \left(
\left\vert +-\right\rangle _{A}{\otimes\mathds{1}}\right)  \\
&  =\dfrac{1}{24}\left(  \left\langle +-\right\vert _{A}{\otimes
\mathds{1}}\right)  \left(  \left\vert +-\right\rangle _{A}{\otimes
\mathds{1}}\right)  \left(  \left\vert 0\right\rangle -\left\vert
1\right\rangle \right)  \left[  \left\langle ++\right\vert \left(
\left\langle 0\right\vert +\left\langle 1\right\vert \right)  +\left\langle
+-\right\vert \left(  \left\langle 0\right\vert -\left\langle 1\right\vert
\right)  \right.  \\
&  \left.  +\left\langle -+\right\vert \left(  \left\langle 0\right\vert
+3\left\langle 1\right\vert \right)  -\left\langle --\right\vert \left(
3\left\langle 0\right\vert -\left\langle 1\right\vert \right)  \right]
\left(  \left\vert +-\right\rangle _{A}{\otimes\mathds{1}}\right)  \\
&  =\dfrac{1}{24}\left(  \left\vert 0\right\rangle -\left\vert 1\right\rangle
\right)  \left(  \left\langle 0\right\vert -\left\langle 1\right\vert \right)
,
\end{align*}

\begin{align*}
\tilde{\rho}_{10}^{\hat{n}_{2}} &  =\mathrm{tr}_{A}\left[  \left(
P_{10}^{\hat{n}_{2}}{\otimes\mathds{1}}\right)  \rho_{AB}\right]  \\
&  =\mathrm{tr}_{A}\left[  \left(  \left\vert -+\right\rangle \left\langle
-+\right\vert {\otimes\mathds{1}}\right)  \left\vert \Psi^{\prime
}\right\rangle \left\langle \Psi^{\prime}\right\vert \right]  \\
&  =\mathrm{tr}_{A}\left[  \dfrac{1}{2\sqrt{6}}\left(  \left\vert
-+\right\rangle _{A}{\otimes\mathds{1}}\right)  \left(  \left\vert
0\right\rangle +3\left\vert 1\right\rangle \right)  \left\langle \Psi^{\prime
}\right\vert \right]  \\
&  =\dfrac{1}{2\sqrt{6}}\left(  \left\langle -+\right\vert _{A}{\otimes
\mathds{1}}\right)  \left(  \left\vert -+\right\rangle _{A}{\otimes
\mathds{1}}\right)  \left(  \left\vert 0\right\rangle +3\left\vert
1\right\rangle \right)  \left\langle \Psi^{\prime}\right\vert \left(
\left\vert -+\right\rangle _{A}{\otimes\mathds{1}}\right)  \\
&  =\dfrac{1}{24}\left(  \left\langle -+\right\vert _{A}{\otimes
\mathds{1}}\right)  \left(  \left\vert -+\right\rangle _{A}{\otimes
\mathds{1}}\right)  \left(  \left\vert 0\right\rangle +3\left\vert
1\right\rangle \right)  \left[  \left\langle ++\right\vert \left(
\left\langle 0\right\vert +\left\langle 1\right\vert \right)  +\left\langle
+-\right\vert \left(  \left\langle 0\right\vert -\left\langle 1\right\vert
\right)  \right.  \\
&  \left.  +\left\langle -+\right\vert \left(  \left\langle 0\right\vert
+3\left\langle 1\right\vert \right)  -\left\langle --\right\vert \left(
3\left\langle 0\right\vert -\left\langle 1\right\vert \right)  \right]
\left(  \left\vert -+\right\rangle _{A}{\otimes\mathds{1}}\right)  \\
&  =\dfrac{1}{24}\left(  \left\vert 0\right\rangle +3\left\vert 1\right\rangle
\right)  \left(  \left\langle 0\right\vert +3\left\langle 1\right\vert
\right)  ,
\end{align*}

\begin{align*}
\tilde{\rho}_{11}^{\hat{n}_{2}} &  =\mathrm{tr}_{A}\left[  \left(
P_{11}^{\hat{n}_{2}}{\otimes\mathds{1}}\right)  \rho_{AB}\right]  \\
&  =\mathrm{tr}_{A}\left[  \left(  \left\vert --\right\rangle \left\langle
--\right\vert {\otimes\mathds{1}}\right)  \left\vert \Psi^{\prime
}\right\rangle \left\langle \Psi^{\prime}\right\vert \right]  \\
&  =\mathrm{tr}_{A}\left[  -\dfrac{1}{2\sqrt{6}}\left(  \left\vert
--\right\rangle _{A}{\otimes\mathds{1}}\right)  \left(  3\left\vert
0\right\rangle -\left\vert 1\right\rangle \right)  \left\langle \Psi^{\prime
}\right\vert \right]  \\
&  =-\dfrac{1}{2\sqrt{6}}\left(  \left\langle --\right\vert _{A}%
{\otimes\mathds{1}}\right)  \left(  \left\vert --\right\rangle _{A}%
{\otimes\mathds{1}}\right)  \left(  3\left\vert 0\right\rangle -\left\vert
1\right\rangle \right)  \left\langle \Psi^{\prime}\right\vert \left(
\left\vert --\right\rangle _{A}{\otimes\mathds{1}}\right)  \\
&  =-\dfrac{1}{24}\left(  \left\langle --\right\vert _{A}{\otimes
\mathds{1}}\right)  \left(  \left\vert --\right\rangle _{A}{\otimes
\mathds{1}}\right)  \left(  3\left\vert 0\right\rangle -\left\vert
1\right\rangle \right)  \left[  \left\langle ++\right\vert \left(
\left\langle 0\right\vert +\left\langle 1\right\vert \right)  +\left\langle
+-\right\vert \left(  \left\langle 0\right\vert -\left\langle 1\right\vert
\right)  \right.  \\
&  \left.  -\left\langle -+\right\vert \left(  \left\langle 0\right\vert
-3\left\langle 1\right\vert \right)  -\left\langle --\right\vert \left(
3\left\langle 0\right\vert -\left\langle 1\right\vert \right)  \right]
\left(  \left\vert --\right\rangle _{A}{\otimes\mathds{1}}\right)  \\
&  =\dfrac{1}{24}\left(  3\left\vert 0\right\rangle -\left\vert 1\right\rangle
\right)  \left(  3\left\langle 0\right\vert -\left\langle 1\right\vert
\right)  .
\end{align*}
In the quantum result, all of Bob's states are pure. While there are identical
results in different sets of measurements, Bob's two sets of results are not
identical, satisfying the theorem's requirements.

Suppose Bob's states have a LHS description, they must satisfy Eqs.
(\ref{eq2.1}) and (\ref{eq2.2}). Because there are only 6 different states in
the quantum result, it is sufficient to take $\xi$ from $1$ to $6$, one has%
\[
\left\{
\begin{array}
[c]{c}%
\tilde{\rho}_{00}^{\hat{n}_{1}}=\wp\left(  00|\hat{n}_{1},1\right)  \wp
_{1}\rho_{1}+\wp\left(  00|\hat{n}_{1},2\right)  \wp_{2}\rho_{2}+\wp\left(
00|\hat{n}_{1},3\right)  \wp_{3}\rho_{3}+\wp\left(  00|\hat{n}_{1},4\right)
\wp_{4}\rho_{4}+\wp\left(  00|\hat{n}_{1},5\right)  \wp_{5}\rho_{5}+\wp\left(
00|\hat{n}_{1},6\right)  \wp_{6}\rho_{6},\\
\tilde{\rho}_{01}^{\hat{n}_{1}}=\wp\left(  01|\hat{n}_{1},1\right)  \wp
_{1}\rho_{1}+\wp\left(  01|\hat{n}_{1},2\right)  \wp_{2}\rho_{2}+\wp\left(
01|\hat{n}_{1},3\right)  \wp_{3}\rho_{3}+\wp\left(  01|\hat{n}_{1},4\right)
\wp_{4}\rho_{4}+\wp\left(  01|\hat{n}_{1},5\right)  \wp_{5}\rho_{5}+\wp\left(
01|\hat{n}_{1},6\right)  \wp_{6}\rho_{6},\\
\tilde{\rho}_{10}^{\hat{n}_{1}}=\wp\left(  10|\hat{n}_{1},1\right)  \wp
_{1}\rho_{1}+\wp\left(  10|\hat{n}_{1},2\right)  \wp_{2}\rho_{2}+\wp\left(
10|\hat{n}_{1},3\right)  \wp_{3}\rho_{3}+\wp\left(  10|\hat{n}_{1},4\right)
\wp_{4}\rho_{4}+\wp\left(  10|\hat{n}_{1},5\right)  \wp_{5}\rho_{5}+\wp\left(
10|\hat{n}_{1},6\right)  \wp_{6}\rho_{6},\\
\tilde{\rho}_{11}^{\hat{n}_{1}}=\wp\left(  11|\hat{n}_{1},1\right)  \wp
_{1}\rho_{1}+\wp\left(  11|\hat{n}_{1},2\right)  \wp_{2}\rho_{2}+\wp\left(
11|\hat{n}_{1},3\right)  \wp_{3}\rho_{3}+\wp\left(  11\hat{n}_{1},4\right)
\wp_{4}\rho_{4}+\wp\left(  11|\hat{n}_{1},5\right)  \wp_{5}\rho_{5}+\wp\left(
11|\hat{n}_{1},6\right)  \wp_{6}\rho_{6},\\
\tilde{\rho}_{00}^{\hat{n}_{2}}=\wp\left(  00|\hat{n}_{2},1\right)  \wp
_{1}\rho_{1}+\wp\left(  00|\hat{n}_{2},2\right)  \wp_{2}\rho_{2}+\wp\left(
00|\hat{n}_{2},3\right)  \wp_{3}\rho_{3}+\wp\left(  00|\hat{n}_{2},4\right)
\wp_{4}\rho_{4}+\wp\left(  00|\hat{n}_{2},5\right)  \wp_{5}\rho_{5}+\wp\left(
00|\hat{n}_{2},6\right)  \wp_{6}\rho_{6},\\
\tilde{\rho}_{01}^{\hat{n}_{2}}=\wp\left(  01|\hat{n}_{2},1\right)  \wp
_{1}\rho_{1}+\wp\left(  01|\hat{n}_{2},2\right)  \wp_{2}\rho_{2}+\wp\left(
01|\hat{n}_{2},3\right)  \wp_{3}\rho_{3}+\wp\left(  01|\hat{n}_{2},4\right)
\wp_{4}\rho_{4}+\wp\left(  01|\hat{n}_{2},5\right)  \wp_{5}\rho5+\wp\left(
01|\hat{n}_{2},6\right)  \wp_{6}\rho_{6},\\
\tilde{\rho}_{10}^{\hat{n}_{2}}=\wp\left(  10|\hat{n}_{2},1\right)  \wp
_{1}\rho_{1}+\wp\left(  10|\hat{n}_{2},2\right)  \wp_{2}\rho_{2}+\wp\left(
10|\hat{n}_{2},3\right)  \wp_{3}\rho_{3}+\wp\left(  10|\hat{n}_{2},4\right)
\wp_{4}\rho_{4}+\wp\left(  10|\hat{n}_{2},5\right)  \wp_{5}\rho_{5}+\wp\left(
10|\hat{n}_{2},6\right)  \wp_{6}\rho_{6},\\
\tilde{\rho}_{11}^{\hat{n}_{2}}=\wp\left(  11|\hat{n}_{2},1\right)  \wp
_{1}\rho_{1}+\wp\left(  11|\hat{n}_{2},2\right)  \wp_{2}\rho_{2}+\wp\left(
11|\hat{n}_{2},3\right)  \wp_{3}\rho_{3}+\wp\left(  11|\hat{n}_{2},4\right)
\wp_{4}\rho_{4}+\wp\left(  11|\hat{n}_{2},5\right)  \wp_{5}\rho_{5}+\wp\left(
11|\hat{n}_{2},6\right)  \wp_{6}\rho_{6}.
\end{array}
\right.
\]
Since the 8 states of Eq. (\ref{eq3.2}) are pure states, and a pure state
cannot be obtained by convex combination of other pure states, one has%
\[
\left\{
\begin{array}
[c]{l}%
\tilde{\rho}_{00}^{\hat{n}_{1}}=\wp\left(  00|\hat{n}_{1},1\right)  \wp
_{1}\rho_{1},\\
\tilde{\rho}_{01}^{\hat{n}_{1}}=\wp\left(  01|\hat{n}_{1},2\right)  \wp
_{2}\rho_{2},\\
\tilde{\rho}_{10}^{\hat{n}_{1}}=\wp\left(  10|\hat{n}_{1},3\right)  \wp
_{3}\rho_{3},\\
\tilde{\rho}_{11}^{\hat{n}_{1}}=\wp\left(  11\hat{n}_{1},4\right)  \wp_{4}%
\rho_{4},\\
\tilde{\rho}_{00}^{\hat{n}_{2}}=\wp\left(  00|\hat{n}_{2},2\right)  \wp
_{2}\rho_{2},\\
\tilde{\rho}_{01}^{\hat{n}_{2}}=\wp\left(  01|\hat{n}_{2},3\right)  \wp
_{3}\rho_{3},\\
\tilde{\rho}_{10}^{\hat{n}_{2}}=\wp\left(  10|\hat{n}_{2},5\right)  \wp
_{5}\rho_{5},\\
\tilde{\rho}_{11}^{\hat{n}_{2}}=\wp\left(  11|\hat{n}_{2},6\right)  \wp
_{6}\rho_{6},
\end{array}
\right.
\]
where the others $\wp\left(  a|\hat{n},\xi\right)  =0$, and we describe the
identical terms by the same hidden state. Because, for $\hat{n}_{1}$ and
$\hat{n}_{2}$, ${\sum\limits_{a}}\wp\left(  a|\hat{n}_{1},\xi\right)  =1$ and
${\sum\limits_{a}}\wp\left(  a|\hat{n}_{2},\xi\right)  =1$, we have
$\wp\left(  00|\hat{n}_{1},1\right)  =\wp\left(  01|\hat{n}_{1},2\right)
=\wp\left(  10|\hat{n}_{1},3\right)  =\wp\left(  11\hat{n}_{1},4\right)
=\wp\left(  00|\hat{n}_{2},2\right)  =\wp\left(  01|\hat{n}_{2},3\right)
=\wp\left(  10|\hat{n}_{2},5\right)  =\wp\left(  11|\hat{n}_{2},6\right)  =1$.
Then we have%

\begin{equation}
\left\{
\begin{array}
[c]{c}%
\tilde{\rho}_{00}^{\hat{n}_{1}}=\wp_{1}\rho_{1},\\
\tilde{\rho}_{01}^{\hat{n}_{1}}=\wp_{2}\rho_{2},\\
\tilde{\rho}_{10}^{\hat{n}_{1}}=\wp_{3}\rho_{3},\\
\tilde{\rho}_{11}^{\hat{n}_{1}}=\wp_{4}\rho_{4},\\
\tilde{\rho}_{00}^{\hat{n}_{2}}=\wp_{2}\rho_{2},\\
\tilde{\rho}_{01}^{\hat{n}_{2}}=\wp_{3}\rho_{3},\\
\tilde{\rho}_{10}^{\hat{n}_{2}}=\wp_{5}\rho_{5},\\
\tilde{\rho}_{11}^{\hat{n}_{2}}=\wp_{6}\rho_{6}.
\end{array}
\right.  \label{eq3.3}%
\end{equation}
Summing Eq. (\ref{eq3.3}), the left quantum result is $2\rho_{B}$, and the
right classical result is $\rho_{B}+\wp_{2}\rho_{2}+\wp_{3}\rho_{3}$. Then
taking the trace, we obtain \textquotedblleft$2_{Q}=\left(  1+\wp_{2}+\wp
_{3}\right)  _{C}$\textquotedblright. Because $%
{\displaystyle\sum\limits_{\xi=1}^{6}}
\wp_{\xi}=1$ and $\wp_{1}$, $\wp_{2}$, $\wp_{3}$, $\wp_{4}$, $\wp_{5}$, and
$\wp_{6}$ are non-zero, $0<$ $\wp_{2}+\wp_{3}<1$. In this case, we get the EPR
steering paradox \textquotedblleft$2_{Q}=\left(  1+\delta\right)  _{C}%
$\textquotedblright\ and $\delta=\wp_{2}+\wp_{3}$, with $0<\delta<1$.

\subsection{The same states appear in both identical and distinct sets of
measurements}

Here, we consider the case where $\left\{  \left\vert \eta_{i}\right\rangle
\right\}  $ is a subset of $\left\{  \left\vert \varepsilon_{j}\right\rangle
\right\}  $, but $\left\{  \left\vert \eta_{i}\right\rangle \right\}  $ and
$\left\{  \left\vert \varepsilon_{j}\right\rangle \right\}  $ are
nonidentical, i.e., combining cases 2 and 3. We consider the more extreme case
by assuming that the first $h$ terms in $\left\{  \left\vert \eta
_{i}\right\rangle \right\}  $ correspond equally to the first $h$ terms in
$\left\{  \left\vert \varepsilon_{j}\right\rangle \right\}  $ and that the
remaining $2^{M}-h$ terms in $\left\{  \left\vert \eta_{i}\right\rangle
\right\}  $ are all the same and equal to the $\left(  h+1\right)  $st term in
$\left\{  \left\vert \varepsilon_{j}\right\rangle \right\}  $, but that none
of the remaining $2^{M}-h$ terms in $\left\{  \left\vert \varepsilon
_{j}\right\rangle \right\}  $ are the same. In this scenario, Bob's
conditional states are%
\begin{equation}
\left\{
\begin{array}
[c]{l}%
\tilde{\rho}_{a_{1}}^{\hat{n}_{1}}=\sum_{\alpha}p_{\alpha}\left\vert
s_{1}^{\left(  \alpha\right)  }\right\vert ^{2}\left\vert \eta_{1}%
\right\rangle \left\langle \eta_{1}\right\vert ,\\
\tilde{\rho}_{a_{2}}^{\hat{n}_{1}}=\sum_{\alpha}p_{\alpha}\left\vert
s_{2}^{\left(  \alpha\right)  }\right\vert ^{2}\left\vert \eta_{2}%
\right\rangle \left\langle \eta_{2}\right\vert ,\\
\cdots\\
\tilde{\rho}_{a_{h}}^{\hat{n}_{1}}=\sum_{\alpha}p_{\alpha}\left\vert
s_{h}^{\left(  \alpha\right)  }\right\vert ^{2}\left\vert \eta_{h}%
\right\rangle \left\langle \eta_{h}\right\vert ,\\
\tilde{\rho}_{a_{h+1}}^{\hat{n}_{1}}=\sum_{\alpha}p_{\alpha}\left\vert
s_{h+1}^{\left(  \alpha\right)  }\right\vert ^{2}\left\vert \eta
_{h+1}\right\rangle \left\langle \eta_{h+1}\right\vert ,\\
\tilde{\rho}_{a_{h+2}}^{\hat{n}_{1}}=\sum_{\alpha}p_{\alpha}\left\vert
s_{h+2}^{\left(  \alpha\right)  }\right\vert ^{2}\left\vert \eta
_{h+1}\right\rangle \left\langle \eta_{h+1}\right\vert ,\\
\cdots\\
\tilde{\rho}_{a_{2^{M}}}^{\hat{n}_{1}}=\sum_{\alpha}p_{\alpha}\left\vert
s_{2^{M}}^{\left(  \alpha\right)  }\right\vert ^{2}\left\vert \eta
_{h+1}\right\rangle \left\langle \eta_{h+1}\right\vert ,
\end{array}
\right.  \left\{
\begin{array}
[c]{l}%
\tilde{\rho}_{a_{1}^{\prime}}^{\hat{n}_{2}}=\sum_{\alpha}p_{\alpha}\left\vert
t_{1}^{\left(  \alpha\right)  }\right\vert ^{2}\left\vert \eta_{1}%
\right\rangle \left\langle \eta_{1}\right\vert ,\\
\tilde{\rho}_{a_{2}^{\prime}}^{\hat{n}_{2}}=\sum_{\alpha}p_{\alpha}\left\vert
t_{2}^{\left(  \alpha\right)  }\right\vert ^{2}\left\vert \eta_{2}%
\right\rangle \left\langle \eta_{2}\right\vert ,\\
\cdots\\
\tilde{\rho}_{a_{h}^{\prime}}^{\hat{n}_{2}}=\sum_{\alpha}p_{\alpha}\left\vert
t_{h}^{\left(  \alpha\right)  }\right\vert ^{2}\left\vert \eta_{h}%
\right\rangle \left\langle \eta_{h}\right\vert ,\\
\tilde{\rho}_{a_{h+1}^{\prime}}^{\hat{n}_{2}}=\sum_{\alpha}p_{\alpha
}\left\vert t_{h+1}^{\left(  \alpha\right)  }\right\vert ^{2}\left\vert
\eta_{h+1}\right\rangle \left\langle \eta_{h+1}\right\vert ,\\
\tilde{\rho}_{a_{h+2}^{\prime}}^{\hat{n}_{2}}=\sum_{\alpha}p_{\alpha
}\left\vert t_{h+2}^{\left(  \alpha\right)  }\right\vert ^{2}\left\vert
\varepsilon_{h+2}\right\rangle \left\langle \varepsilon_{h+2}\right\vert ,\\
\cdots\\
\tilde{\rho}_{a_{2^{M}}^{\prime}}^{\hat{n}_{2}}=\sum_{\alpha}p_{\alpha
}\left\vert t_{2^{M}}^{\left(  \alpha\right)  }\right\vert ^{2}\left\vert
\varepsilon_{2^{M}}\right\rangle \left\langle \varepsilon_{2^{M}}\right\vert .
\end{array}
\right.  \label{c4.1}%
\end{equation}
In equation Eq. (\ref{c4.1}), there are $2^{M}$ different pure states in the
quantum result. For the LHS description, values of $\xi$ from $1$ to $2^{M}$
are sufficient. That is, the LHS model describes Bob's state as%
\begin{equation}
\left\{
\begin{array}
[c]{l}%
\tilde{\rho}_{a_{1}}^{\hat{n}_{1}}=%
{\displaystyle\sum\limits_{\xi=1}^{2^{M}}}
\wp\left(  a_{1}|\hat{n}_{1},\xi\right)  \wp_{\xi}\rho_{\xi},\\
\tilde{\rho}_{a_{2}}^{\hat{n}_{1}}=%
{\displaystyle\sum\limits_{\xi=1}^{2^{M}}}
\wp\left(  a_{2}|\hat{n}_{1},\xi\right)  \wp_{\xi}\rho_{\xi},\\
\cdots\\
\tilde{\rho}_{a_{h}}^{\hat{n}_{1}}=%
{\displaystyle\sum\limits_{\xi=1}^{2^{M}}}
\wp\left(  a_{h}|\hat{n}_{1},\xi\right)  \wp_{\xi}\rho_{\xi},\\
\tilde{\rho}_{a_{h+1}}^{\hat{n}_{1}}=%
{\displaystyle\sum\limits_{\xi=1}^{2^{M}}}
\wp\left(  a_{h+1}|\hat{n}_{1},\xi\right)  \wp_{\xi}\rho_{\xi},\\
\tilde{\rho}_{a_{h+2}}^{\hat{n}_{1}}=%
{\displaystyle\sum\limits_{\xi=1}^{2^{M}}}
\wp\left(  a_{h+2}|\hat{n}_{1},\xi\right)  \wp_{\xi}\rho_{\xi},\\
\cdots\\
\tilde{\rho}_{a_{2^{M}}}^{\hat{n}_{1}}=%
{\displaystyle\sum\limits_{\xi=1}^{2^{M}}}
\wp\left(  a_{2^{M}}|\hat{n}_{1},\xi\right)  \wp_{\xi}\rho_{\xi},
\end{array}
\right.  \left\{
\begin{array}
[c]{l}%
\tilde{\rho}_{a_{1}^{\prime}}^{\hat{n}_{2}}=%
{\displaystyle\sum\limits_{\xi=1}^{2^{M}}}
\wp\left(  a_{1}^{\prime}|\hat{n}_{2},\xi\right)  \wp_{\xi}\rho_{\xi},\\
\tilde{\rho}_{a_{2}^{\prime}}^{\hat{n}_{2}}=%
{\displaystyle\sum\limits_{\xi=1}^{2^{M}}}
\wp\left(  a_{2}^{\prime}|\hat{n}_{2},\xi\right)  \wp_{\xi}\rho_{\xi},\\
\cdots\\
\tilde{\rho}_{a_{h}^{\prime}}^{\hat{n}_{2}}=%
{\displaystyle\sum\limits_{\xi=1}^{2^{M}}}
\wp\left(  a_{h}^{\prime}|\hat{n}_{2},\xi\right)  \wp_{\xi}\rho_{\xi},\\
\tilde{\rho}_{a_{h+1}^{\prime}}^{\hat{n}_{2}}=%
{\displaystyle\sum\limits_{\xi=1}^{2^{M}}}
\wp\left(  a_{h+1}^{\prime}|\hat{n}_{2},\xi\right)  \wp_{\xi}\rho_{\xi},\\
\tilde{\rho}_{a_{h+2}^{\prime}}^{\hat{n}_{2}}=%
{\displaystyle\sum\limits_{\xi=1}^{2^{M}}}
\wp\left(  a_{h+2}^{\prime}|\hat{n}_{2},\xi\right)  \wp_{\xi}\rho_{\xi},\\
\cdots\\
\tilde{\rho}_{a_{2^{M}}^{\prime}}^{\hat{n}_{2}}=%
{\displaystyle\sum\limits_{\xi=1}^{2^{M}}}
\wp\left(  a_{2^{M}}^{\prime}|\hat{n}_{2},\xi\right)  \wp_{\xi}\rho_{\xi}.
\end{array}
\right.  \label{c4.2}%
\end{equation}
Eq. (\ref{c4.2}) can be expressed as each $\tilde{\rho}_{a}^{\hat{n}_{\ell}}$
containing only one term due to Bob's states being pure, i.e.,%
\begin{equation}
\left\{
\begin{array}
[c]{l}%
\tilde{\rho}_{a_{1}}^{\hat{n}_{1}}=\wp_{1}\rho_{1},\\
\tilde{\rho}_{a_{2}}^{\hat{n}_{1}}=\wp_{2}\rho_{2},\\
\cdots\\
\tilde{\rho}_{a_{h}}^{\hat{n}_{1}}=\wp_{h}\rho_{h},\\
\tilde{\rho}_{a_{h+1}}^{\hat{n}_{1}}=\wp\left(  a_{h+1}|\hat{n}_{1}%
,h+1\right)  \wp_{h+1}\rho_{h+1},\\
\tilde{\rho}_{a_{h+2}}^{\hat{n}_{1}}=\wp\left(  a_{h+2}|\hat{n}_{1}%
,h+1\right)  \wp_{h+1}\rho_{h+1},\\
\cdots\\
\tilde{\rho}_{a_{2^{M}}}^{\hat{n}_{1}}=\wp\left(  a_{2^{M}}|\hat{n}%
_{1},h+1\right)  \wp_{h+1}\rho_{h+1},
\end{array}
\right.  \left\{
\begin{array}
[c]{l}%
\tilde{\rho}_{a_{1}^{\prime}}^{\hat{n}_{2}}=\wp_{1}\rho_{1},\\
\tilde{\rho}_{a_{2}^{\prime}}^{\hat{n}_{2}}=\wp_{2}\rho_{2},\\
\cdots\\
\tilde{\rho}_{a_{h}^{\prime}}^{\hat{n}_{2}}=\wp_{h}\rho_{h},\\
\tilde{\rho}_{a_{h+1}^{\prime}}^{\hat{n}_{2}}=\wp_{h+1}\rho_{h+1},\\
\tilde{\rho}_{a_{h+2}^{\prime}}^{\hat{n}_{2}}=\wp_{h+2}\rho_{h+2},\\
\cdots\\
\tilde{\rho}_{a_{2^{M}}^{\prime}}^{\hat{n}_{2}}=\wp_{2^{M}}\rho_{2^{M}}.
\end{array}
\right.  \label{c4.3}%
\end{equation}
We assume that the first $h$ terms in $\left\{  \left\vert \eta_{i}%
\right\rangle \right\}  $ correspond equally to the first $h$ terms in
$\left\{  \left\vert \varepsilon_{j}\right\rangle \right\}  $, therefore we
describe the identical terms in $\left\{  \left\vert \eta_{i}\right\rangle
\right\}  $ and $\left\{  \left\vert \varepsilon_{j}\right\rangle \right\}  $
by the same hidden state. The remaining $2^{M}-h$ terms in $\left\{
\left\vert \eta_{i}\right\rangle \right\}  $ are all the same and equal to the
$\left(  h+1\right)  $st term in $\left\{  \left\vert \varepsilon
_{j}\right\rangle \right\}  $, hence describe them with $\wp_{h+1}\rho_{h+1}$.
And ${\sum\limits_{a}}\wp\left(  a|\hat{n},\xi\right)  =1$, so we have
$\wp\left(  a_{1}|\hat{n}_{1},1\right)  =\wp\left(  a_{2}|\hat{n}%
_{2},2\right)  =\cdots=\wp\left(  a_{h}|\hat{n}_{2},h\right)  =\wp\left(
a_{1}^{\prime}|\hat{n}_{2},1\right)  =\wp\left(  a_{2}^{\prime}|\hat{n}%
_{2},2\right)  =\cdots=\wp\left(  a_{h}^{\prime}|\hat{n}_{2},h\right)
=\wp\left(  a_{h+1}^{\prime}|\hat{n}_{2},h+1\right)  =\wp\left(
a_{h+2}^{\prime}|\hat{n}_{2},h+2\right)  =\cdots=\wp\left(  a_{2^{M}}^{\prime
}|\hat{n}_{2},2^{M}\right)  =1$, $\wp\left(  a_{h+1}|\hat{n}_{1},h+1\right)
+\wp\left(  a_{h+2}|\hat{n}_{1},h+1\right)  +\cdots+\wp\left(  a_{2^{M}}%
|\hat{n}_{1},h+1\right)  =1$ and the others $\wp\left(  a|\hat{n},\xi\right)
=0$. Summing Eq. (\ref{c4.3}), the left quantum result is $2\rho_{B}$, and the
right classical result is $\rho_{B}+%
{\displaystyle\sum\limits_{\xi=1}^{h+1}}
\wp_{\xi}\rho_{\xi}$, here $%
{\displaystyle\sum\limits_{\xi=1}^{2^{M}}}
\wp_{\xi}\rho_{\xi}=\rho_{B}$. Then taking the trace, we obtain
\textquotedblleft$2_{Q}=\left(  1+%
{\displaystyle\sum\limits_{\xi=1}^{h+1}}
\wp_{\xi}\right)  _{C}$\textquotedblright. Because $%
{\displaystyle\sum\limits_{\xi=1}^{2^{M}}}
\wp_{\xi}=1$ and there are $\left(  2^{M}-h\right)  $ $\wp_{\xi}$ non-zero,
$0<$ $%
{\displaystyle\sum\limits_{\xi=1}^{h+1}}
\wp_{\xi}<1$. In this case, we get the EPR steering paradox \textquotedblleft%
$2_{Q}=\left(  1+\delta\right)  _{C}$\textquotedblright\ and $\delta=%
{\displaystyle\sum\limits_{\xi=1}^{h+1}}
\wp_{\xi}$, with $0<\delta<1$.

\subsubsection{Example 4}

We consider the W state shared by Alice and Bob. Suppose Alice prepares the W
state $\rho_{AB}=\left\vert W\right\rangle \left\langle W\right\vert $, where%
\[
\left\vert W\right\rangle =\frac{1}{\sqrt{3}}\left[  \left\vert
100\right\rangle +\left\vert 010\right\rangle +\left\vert 001\right\rangle
\right]  ,
\]
or%
\begin{align*}
\left\vert W\right\rangle  &  =\frac{1}{2\sqrt{3}}\left(  \left\vert
++\right\rangle +\left\vert +-\right\rangle -\left\vert -+\right\rangle
-\left\vert --\right\rangle \right)  \left\vert 0\right\rangle \\
&  +\frac{1}{2\sqrt{3}}\left(  \left\vert ++\right\rangle -\left\vert
+-\right\rangle +\left\vert -+\right\rangle -\left\vert --\right\rangle
\right)  \left\vert 0\right\rangle \\
&  +\frac{1}{2\sqrt{3}}\left(  \left\vert ++\right\rangle +\left\vert
+-\right\rangle +\left\vert -+\right\rangle +\left\vert --\right\rangle
\right)  \left\vert 1\right\rangle \\
&  =\frac{1}{2\sqrt{3}}\left[  \left\vert ++\right\rangle \left(  2\left\vert
0\right\rangle +\left\vert 1\right\rangle \right)  +\left\vert +-\right\rangle
\left\vert 1\right\rangle +\left\vert -+\right\rangle \left\vert
1\right\rangle -\left\vert --\right\rangle \left(  2\left\vert 0\right\rangle
-\left\vert 1\right\rangle \right)  \right]  .
\end{align*}
She keeps particles 1, 2, and sends the other to Bob. In the two-setting
steering protocol $\left\{  \hat{n}_{1},\hat{n}_{2}\right\}  $ with%
\begin{equation}%
\begin{array}
[c]{c}%
\hat{n}_{1}=\sigma_{z}\sigma_{z}\equiv zz,\\
\hat{n}_{2}=\sigma_{x}\sigma_{x}\equiv xx.
\end{array}
\end{equation}
In the protocol, Bob asks Alice to carry out either one of two possible
projective measurements on her qubits, i.e.,%
\begin{equation}
\left\{
\begin{array}
[c]{l}%
\hat{P}_{00}^{zz}=\left\vert 00\right\rangle \left\langle 00\right\vert ,\\
\hat{P}_{01}^{zz}=\left\vert 01\right\rangle \left\langle 01\right\vert ,\\
\hat{P}_{10}^{zz}=\left\vert 10\right\rangle \left\langle 10\right\vert ,\\
\hat{P}_{11}^{zz}=\left\vert 11\right\rangle \left\langle 11\right\vert ,\\
\hat{P}_{00}^{xx}=\left\vert ++\right\rangle \left\langle ++\right\vert ,\\
\hat{P}_{01}^{xx}=\left\vert +-\right\rangle \left\langle +-\right\vert ,\\
\hat{P}_{10}^{xx}=\left\vert -+\right\rangle \left\langle -+\right\vert ,\\
\hat{P}_{11}^{xx}=\left\vert --\right\rangle \left\langle --\right\vert ,
\end{array}
\right.
\end{equation}
where $|\pm\rangle=(1/\sqrt{2})(\left\vert 0\right\rangle \pm\left\vert
1\right\rangle )$. After Alice's measurement, Bob's unnormalized conditional
states are%
\begin{equation}
\left\{
\begin{array}
[c]{l}%
\tilde{\rho}_{00}^{zz}=\frac{1}{3}\left\vert 1\right\rangle \left\langle
1\right\vert ,\\
\tilde{\rho}_{01}^{zz}=\frac{1}{3}\left\vert 0\right\rangle \left\langle
0\right\vert ,\\
\tilde{\rho}_{10}^{zz}=\frac{1}{3}\left\vert 0\right\rangle \left\langle
0\right\vert ,\\
\tilde{\rho}_{11}^{zz}=0,\\
\tilde{\rho}_{00}^{xx}=\frac{1}{12}\left(  2\left\vert 0\right\rangle
+\left\vert 1\right\rangle \right)  \left(  2\left\langle 0\right\vert
+\left\langle 1\right\vert \right)  ,\\
\tilde{\rho}_{01}^{xx}=\frac{1}{12}\left\vert 1\right\rangle \left\langle
1\right\vert ,\\
\tilde{\rho}_{10}^{xx}=\frac{1}{12}\left\vert 1\right\rangle \left\langle
1\right\vert ,\\
\tilde{\rho}_{11}^{xx}=\frac{1}{12}\left(  2\left\vert 0\right\rangle
-\left\vert 1\right\rangle \right)  \left(  2\left\langle 0\right\vert
-\left\langle 1\right\vert \right)  ,
\end{array}
\right.  \label{d1}%
\end{equation}
where%

\begin{align*}
\tilde{\rho}_{00}^{\hat{n}_{1}} &  =\mathrm{tr}_{A}\left[  \left(
P_{00}^{\hat{n}_{1}}{\otimes\mathds{1}}\right)  \rho_{AB}\right]  \\
&  =\mathrm{tr}_{A}\left[  \left(  \left\vert 00\right\rangle \left\langle
00\right\vert {\otimes\mathds{1}}\right)  \left\vert W\right\rangle
\left\langle W\right\vert \right]  \\
&  =\frac{1}{\sqrt{3}}\mathrm{tr}_{A}\left[  \left(  \left\vert
00\right\rangle _{A}{\otimes\mathds{1}}\right)  \left\vert 1\right\rangle
\left\langle W\right\vert \right]  \\
&  =\frac{1}{\sqrt{3}}\left(  \left\langle 00\right\vert _{A}{\otimes
\mathds{1}}\right)  \left(  \left\vert 00\right\rangle _{A}{\otimes
\mathds{1}}\right)  \left\vert 1\right\rangle \left\langle W\right\vert
\left(  \left\vert 00\right\rangle _{A}{\otimes\mathds{1}}\right)  \\
&  =\frac{1}{3}\left(  \left\langle 00\right\vert _{A}{\otimes\mathds{1}}%
\right)  \left(  \left\vert 00\right\rangle _{A}{\otimes\mathds{1}}\right)
\left\vert 1\right\rangle \left[  \left\langle 100\right\vert +\left\langle
010\right\vert +\left\langle 001\right\vert \right]  \left(  \left\vert
00\right\rangle _{A}{\otimes\mathds{1}}\right)  \\
&  =\frac{1}{3}\left\vert 1\right\rangle \left\langle 1\right\vert ,
\end{align*}

\begin{align*}
\tilde{\rho}_{01}^{\hat{n}_{1}} &  =\mathrm{tr}_{A}\left[  \left(
P_{01}^{\hat{n}_{1}}{\otimes\mathds{1}}\right)  \rho_{AB}\right]  \\
&  =\mathrm{tr}_{A}\left[  \left(  \left\vert 01\right\rangle \left\langle
01\right\vert {\otimes\mathds{1}}\right)  \left\vert W\right\rangle
\left\langle W\right\vert \right]  \\
&  =\frac{1}{\sqrt{3}}\mathrm{tr}_{A}\left[  \left(  \left\vert
01\right\rangle _{A}{\otimes\mathds{1}}\right)  \left\vert 0\right\rangle
\left\langle W\right\vert \right]  \\
&  =\frac{1}{\sqrt{3}}\left(  \left\langle 01\right\vert _{A}{\otimes
\mathds{1}}\right)  \left(  \left\vert 01\right\rangle _{A}{\otimes
\mathds{1}}\right)  \left\vert 0\right\rangle \left\langle W\right\vert
\left(  \left\vert 01\right\rangle _{A}{\otimes\mathds{1}}\right)  \\
&  =\frac{1}{3}\left(  \left\langle 01\right\vert _{A}{\otimes\mathds{1}}%
\right)  \left(  \left\vert 01\right\rangle _{A}{\otimes\mathds{1}}\right)
\left\vert 0\right\rangle \left[  \left\langle 100\right\vert +\left\langle
010\right\vert +\left\langle 001\right\vert \right]  \left(  \left\vert
01\right\rangle _{A}{\otimes\mathds{1}}\right)  \\
&  =\frac{1}{3}\left\vert 0\right\rangle \left\langle 0\right\vert ,
\end{align*}

\begin{align*}
\tilde{\rho}_{10}^{\hat{n}_{1}} &  =\mathrm{tr}_{A}\left[  \left(
P_{10}^{\hat{n}_{1}}{\otimes\mathds{1}}\right)  \rho_{AB}\right]  \\
&  =\mathrm{tr}_{A}\left[  \left(  \left\vert 10\right\rangle \left\langle
10\right\vert {\otimes\mathds{1}}\right)  \left\vert W\right\rangle
\left\langle W\right\vert \right]  \\
&  =\frac{1}{\sqrt{3}}\mathrm{tr}_{A}\left[  \left(  \left\vert
10\right\rangle _{A}{\otimes\mathds{1}}\right)  \left\vert 0\right\rangle
\left\langle W\right\vert \right]  \\
&  =\frac{1}{\sqrt{3}}\left(  \left\langle 10\right\vert _{A}{\otimes
\mathds{1}}\right)  \left(  \left\vert 10\right\rangle _{A}{\otimes
\mathds{1}}\right)  \left\vert 0\right\rangle \left\langle W\right\vert
\left(  \left\vert 10\right\rangle _{A}{\otimes\mathds{1}}\right)  \\
&  =\frac{1}{3}\left(  \left\langle 10\right\vert _{A}{\otimes\mathds{1}}%
\right)  \left(  \left\vert 10\right\rangle _{A}{\otimes\mathds{1}}\right)
\left\vert 0\right\rangle \left[  \left\langle 100\right\vert +\left\langle
010\right\vert +\left\langle 001\right\vert \right]  \left(  \left\vert
10\right\rangle _{A}{\otimes\mathds{1}}\right)  \\
&  =\frac{1}{3}\left\vert 0\right\rangle \left\langle 0\right\vert ,
\end{align*}

\begin{align*}
\tilde{\rho}_{11}^{\hat{n}_{1}} &  =\mathrm{tr}_{A}\left[  \left(
P_{11}^{\hat{n}_{1}}{\otimes\mathds{1}}\right)  \rho_{AB}\right]  \\
&  =\mathrm{tr}_{A}\left[  \left(  \left\vert 11\right\rangle \left\langle
11\right\vert {\otimes\mathds{1}}\right)  \left\vert W\right\rangle
\left\langle W\right\vert \right]  \\
&  =0,
\end{align*}

\begin{align*}
\tilde{\rho}_{00}^{\hat{n}_{2}} &  =\mathrm{tr}_{A}\left[  \left(
P_{00}^{\hat{n}_{2}}{\otimes\mathds{1}}\right)  \rho_{AB}\right]  \\
&  =\mathrm{tr}_{A}\left[  \left(  \left\vert ++\right\rangle \left\langle
++\right\vert {\otimes\mathds{1}}\right)  \left\vert W\right\rangle
\left\langle W\right\vert \right]  \\
&  =\frac{1}{2\sqrt{3}}\mathrm{tr}_{A}\left[  \left(  \left\vert
++\right\rangle _{A}{\otimes\mathds{1}}\right)  \left(  2\left\vert
0\right\rangle +\left\vert 1\right\rangle \right)  \left\langle W\right\vert
\right]  \\
&  =\frac{1}{2\sqrt{3}}\left(  \left\langle ++\right\vert _{A}{\otimes
\mathds{1}}\right)  \left(  \left\vert ++\right\rangle _{A}{\otimes
\mathds{1}}\right)  \left(  2\left\vert 0\right\rangle +\left\vert
1\right\rangle \right)  \left\langle W\right\vert \left(  \left\vert
++\right\rangle _{A}{\otimes\mathds{1}}\right)  \\
&  =\frac{1}{12}\left(  \left\langle ++\right\vert _{A}{\otimes\mathds{1}}%
\right)  \left(  \left\vert ++\right\rangle _{A}{\otimes\mathds{1}}\right)
\left(  2\left\vert 0\right\rangle +\left\vert 1\right\rangle \right)  \left[
\left\langle ++\right\vert \left(  2\left\langle 0\right\vert +\left\langle
1\right\vert \right)  +\left\langle +-\right\vert \left\langle 1\right\vert
\right.  \\
&  \left.  +\left\langle -+\right\vert \left\langle 1\right\vert -\left\langle
--\right\vert \left(  2\left\langle 0\right\vert -\left\langle 1\right\vert
\right)  \right]  \left(  \left\vert ++\right\rangle _{A}{\otimes
\mathds{1}}\right)  \\
&  =\frac{1}{12}\left(  2\left\vert 0\right\rangle +\left\vert 1\right\rangle
\right)  \left(  2\left\langle 0\right\vert +\left\langle 1\right\vert
\right)  ,
\end{align*}

\begin{align*}
\tilde{\rho}_{01}^{\hat{n}_{2}} &  =\mathrm{tr}_{A}\left[  \left(
P_{01}^{\hat{n}_{2}}{\otimes\mathds{1}}\right)  \rho_{AB}\right]  \\
&  =\mathrm{tr}_{A}\left[  \left(  \left\vert +-\right\rangle \left\langle
+-\right\vert {\otimes\mathds{1}}\right)  \left\vert W\right\rangle
\left\langle W\right\vert \right]  \\
&  =\frac{1}{2\sqrt{3}}\mathrm{tr}_{A}\left[  \left(  \left\vert
+-\right\rangle _{A}{\otimes\mathds{1}}\right)  \left\vert 1\right\rangle
\left\langle W\right\vert \right]  \\
&  =\frac{1}{2\sqrt{3}}\left(  \left\langle +-\right\vert _{A}{\otimes
\mathds{1}}\right)  \left(  \left\vert +-\right\rangle _{A}{\otimes
\mathds{1}}\right)  \left\vert 1\right\rangle \left\langle W\right\vert
\left(  \left\vert +-\right\rangle _{A}{\otimes\mathds{1}}\right)  \\
&  =\frac{1}{12}\left(  \left\langle +-\right\vert _{A}{\otimes\mathds{1}}%
\right)  \left(  \left\vert +-\right\rangle _{A}{\otimes\mathds{1}}\right)
\left\vert 1\right\rangle \left[  \left\langle ++\right\vert \left(
2\left\langle 0\right\vert +\left\langle 1\right\vert \right)  +\left\langle
+-\right\vert \left\langle 1\right\vert \right.  \\
&  \left.  +\left\langle -+\right\vert \left\langle 1\right\vert -\left\langle
--\right\vert \left(  2\left\langle 0\right\vert -\left\langle 1\right\vert
\right)  \right]  \left(  \left\vert +-\right\rangle _{A}{\otimes
\mathds{1}}\right)  \\
&  =\frac{1}{12}\left\vert 1\right\rangle \left\langle 1\right\vert ,
\end{align*}

\begin{align*}
\tilde{\rho}_{10}^{\hat{n}_{2}} &  =\mathrm{tr}_{A}\left[  \left(
P_{10}^{\hat{n}_{2}}{\otimes\mathds{1}}\right)  \rho_{AB}\right]  \\
&  =\mathrm{tr}_{A}\left[  \left(  \left\vert -+\right\rangle \left\langle
-+\right\vert {\otimes\mathds{1}}\right)  \left\vert W\right\rangle
\left\langle W\right\vert \right]  \\
&  =\frac{1}{2\sqrt{3}}\mathrm{tr}_{A}\left[  \left(  \left\vert
-+\right\rangle _{A}{\otimes\mathds{1}}\right)  \left\vert 1\right\rangle
\left\langle W\right\vert \right]  \\
&  =\frac{1}{2\sqrt{3}}\left(  \left\langle -+\right\vert _{A}{\otimes
\mathds{1}}\right)  \left(  \left\vert -+\right\rangle _{A}{\otimes
\mathds{1}}\right)  \left\vert 1\right\rangle \left\langle W\right\vert
\left(  \left\vert -+\right\rangle _{A}{\otimes\mathds{1}}\right)  \\
&  =\frac{1}{12}\left(  \left\langle -+\right\vert _{A}{\otimes\mathds{1}}%
\right)  \left(  \left\vert -+\right\rangle _{A}{\otimes\mathds{1}}\right)
\left\vert 1\right\rangle \left[  \left\langle ++\right\vert \left(
2\left\langle 0\right\vert +\left\langle 1\right\vert \right)  +\left\langle
+-\right\vert \left\langle 1\right\vert \right.  \\
&  \left.  +\left\langle -+\right\vert \left\langle 1\right\vert -\left\langle
--\right\vert \left(  2\left\langle 0\right\vert -\left\langle 1\right\vert
\right)  \right]  \left(  \left\vert -+\right\rangle _{A}{\otimes
\mathds{1}}\right)  \\
&  =\frac{1}{12}\left\vert 1\right\rangle \left\langle 1\right\vert ,
\end{align*}

\begin{align*}
\tilde{\rho}_{11}^{\hat{n}_{2}} &  =\mathrm{tr}_{A}\left[  \left(
P_{11}^{\hat{n}_{2}}{\otimes\mathds{1}}\right)  \rho_{AB}\right]  \\
&  =\mathrm{tr}_{A}\left[  \left(  \left\vert --\right\rangle \left\langle
--\right\vert {\otimes\mathds{1}}\right)  \left\vert W\right\rangle
\left\langle W\right\vert \right]  \\
&  =-\frac{1}{2\sqrt{3}}\mathrm{tr}_{A}\left[  \left(  \left\vert
--\right\rangle _{A}{\otimes\mathds{1}}\right)  \left(  2\left\vert
0\right\rangle -\left\vert 1\right\rangle \right)  \left\langle W\right\vert
\right]  \\
&  =-\frac{1}{2\sqrt{3}}\left(  \left\langle --\right\vert _{A}{\otimes
\mathds{1}}\right)  \left(  \left\vert --\right\rangle _{A}{\otimes
\mathds{1}}\right)  \left(  2\left\vert 0\right\rangle -\left\vert
1\right\rangle \right)  \left\langle W\right\vert \left(  \left\vert
--\right\rangle _{A}{\otimes\mathds{1}}\right)  \\
&  =-\frac{1}{12}\left(  \left\langle --\right\vert _{A}{\otimes
\mathds{1}}\right)  \left(  \left\vert --\right\rangle _{A}{\otimes
\mathds{1}}\right)  \left(  2\left\vert 0\right\rangle -\left\vert
1\right\rangle \right)  \left[  \left\langle ++\right\vert \left(
2\left\langle 0\right\vert +\left\langle 1\right\vert \right)  +\left\langle
+-\right\vert \left\langle 1\right\vert \right.  \\
&  \left.  +\left\langle -+\right\vert \left\langle 1\right\vert -\left\langle
--\right\vert \left(  2\left\langle 0\right\vert -\left\langle 1\right\vert
\right)  \right]  \left(  \left\vert --\right\rangle _{A}{\otimes
\mathds{1}}\right)  \\
&  =\frac{1}{12}\left(  2\left\vert 0\right\rangle -\left\vert 1\right\rangle
\right)  \left(  2\left\langle 0\right\vert -\left\langle 1\right\vert
\right)  .
\end{align*}
In the quantum result, Bob obtains seven pure states. Summing the Eq.
(\ref{d1}) and taking the trace, we get \emph{tr}$\left(  2\rho_{B}\right)
=2$.

If Bob's states have a LHS description, they satisfy Eqs. (\ref{eq2.1}) and
(\ref{eq2.2}). In the quantum result, there are $4$ pure states in Eq.
(\ref{d1}). It is sufficient to take $\xi$ from $1$ to $4$. Bob's states can
be write as
\begin{equation}
\left\{
\begin{array}
[c]{l}%
\tilde{\rho}_{00}^{zz}=\wp\left(  00|\hat{z}\hat{z},1\right)  \wp_{1}\rho
_{1}+\wp\left(  00|\hat{z}\hat{z},2\right)  \wp_{2}\rho_{2}+\wp\left(
00|\hat{z}\hat{z},3\right)  \wp_{3}\rho_{3}+\wp\left(  00|\hat{z}\hat
{z},4\right)  \wp_{4}\rho_{4},\\
\tilde{\rho}_{01}^{zz}=\wp\left(  01|\hat{z}\hat{z},1\right)  \wp_{1}\rho
_{1}+\wp\left(  01|\hat{z}\hat{z},2\right)  \wp_{2}\rho_{2}+\wp\left(
01|\hat{z}\hat{z},3\right)  \wp_{3}\rho_{3}+\wp\left(  01|\hat{z}\hat
{z},4\right)  \wp_{4}\rho_{4},\\
\tilde{\rho}_{10}^{zz}=\wp\left(  10|\hat{z}\hat{z},1\right)  \wp_{1}\rho
_{1}+\wp\left(  10|\hat{z}\hat{z},2\right)  \wp_{2}\rho_{2}+\wp\left(
10|\hat{z}\hat{z},3\right)  \wp_{3}\rho_{3}+\wp\left(  10|\hat{z}\hat
{z},4\right)  \wp_{4}\rho_{4},\\
\tilde{\rho}_{11}^{zz}=0,\\
\tilde{\rho}_{00}^{xx}=\wp\left(  00|\hat{x}\hat{x},1\right)  \wp_{1}\rho
_{1}+\wp\left(  00|\hat{x}\hat{x},2\right)  \wp_{2}\rho_{2}+\wp\left(
00|\hat{x}\hat{x},3\right)  \wp_{3}\rho_{3}+\wp\left(  00|\hat{x}\hat
{x},4\right)  \wp_{4}\rho_{4},\\
\tilde{\rho}_{01}^{xx}=\wp\left(  01|\hat{x}\hat{x},1\right)  \wp_{1}\rho
_{1}+\wp\left(  01|\hat{x}\hat{x},2\right)  \wp_{2}\rho_{2}+\wp\left(
01|\hat{x}\hat{x},3\right)  \wp_{3}\rho_{3}+\wp\left(  01|\hat{x}\hat
{x},4\right)  \wp_{4}\rho_{4},\\
\tilde{\rho}_{10}^{xx}=\wp\left(  10|\hat{x}\hat{x},1\right)  \wp_{1}\rho
_{1}+\wp\left(  10|\hat{x}\hat{x},2\right)  \wp_{2}\rho_{2}+\wp\left(
10|\hat{x}\hat{x},3\right)  \wp_{3}\rho_{3}+\wp\left(  10|\hat{x}\hat
{x},4\right)  \wp_{4}\rho_{4},\\
\tilde{\rho}_{11}^{xx}=\wp\left(  11|\hat{x}\hat{x},1\right)  \wp_{1}\rho
_{1}+\wp\left(  11|\hat{x}\hat{x},2\right)  \wp_{2}\rho_{2}+\wp\left(
11|\hat{x}\hat{x},3\right)  \wp_{3}\rho_{3}+\wp\left(  11|\hat{x}\hat
{x},4\right)  \wp_{4}\rho_{4}.
\end{array}
\right.  \label{eq2.7}%
\end{equation}

Because the eight states on the left-hand side of Eq. (\ref{eq2.7}) are all
pure states, and a pure state can only be expanded by itself, then each
$\tilde{\rho}_{a}^{\hat{n}_{\ell}}$ in Eq. (\ref{eq2.7}) contains only one
term, e.g.,%
\begin{equation}
\left\{
\begin{array}
[c]{l}%
\tilde{\rho}_{00}^{zz}=\wp\left(  00|\hat{z}\hat{z},1\right)  \wp_{1}\rho
_{1},\\
\tilde{\rho}_{01}^{zz}=\wp\left(  01|\hat{z}\hat{z},2\right)  \wp_{2}\rho
_{2},\\
\tilde{\rho}_{10}^{zz}=\wp\left(  10|\hat{z}\hat{z},2\right)  \wp_{2}\rho
_{2},\\
\tilde{\rho}_{11}^{zz}=0,\\
\tilde{\rho}_{00}^{xx}=\wp\left(  00|\hat{x}\hat{x},3\right)  \wp_{3}\rho
_{3},\\
\tilde{\rho}_{01}^{xx}=\wp\left(  01|\hat{x}\hat{x},1\right)  \wp_{1}\rho
_{1},\\
\tilde{\rho}_{10}^{xx}=\wp\left(  10|\hat{x}\hat{x},1\right)  \wp_{1}\rho
_{1},\\
\tilde{\rho}_{11}^{xx}=\wp\left(  11|\hat{x}\hat{x},4\right)  \wp_{4}\rho_{4}.
\end{array}
\right.  \label{d3}%
\end{equation}
The other $\wp\left(  a|\hat{n},\xi\right)  $ are $0$. Because in the quantum
result Eq. (\ref{d1}), $\tilde{\rho}_{00}^{zz}=\tilde{\rho}_{01}^{xx}%
=\tilde{\rho}_{10}^{xx}$, we take the same hide state $\wp_{1}\rho_{1}$ to
describe $\tilde{\rho}_{00}^{zz}$, $\tilde{\rho}_{01}^{xx}$, and $\tilde{\rho
}_{10}^{xx}$. Similarly, $\tilde{\rho}_{01}^{zz}=\tilde{\rho}_{10}^{zz}$, we
take $\wp_{2}\rho_{2}$ to describe them. Then because $%
{\displaystyle\sum\limits_{a}}
\wp\left(  a|\hat{n},\xi\right)  =1$, for example, for $\tilde{\rho}_{01}%
^{zz}$ and $\tilde{\rho}_{10}^{zz}$, $\wp\left(  00|\hat{z}\hat{z},2\right)
=0$, we have $\wp\left(  01|\hat{z}\hat{z},2\right)  +\wp\left(  10|\hat
{z}\hat{z},2\right)  =1$. Similarly, we can get $\wp\left(  01|\hat{x}\hat
{x},1\right)  +\wp\left(  10|\hat{x}\hat{x},1\right)  =1$, and $\wp\left(
00|\hat{z}\hat{z},1\right)  =\wp\left(  00|\hat{x}\hat{x},3\right)
=\wp\left(  11|\hat{x}\hat{x},4\right)  =1$. Then Eq. (\ref{d3}) can be
written as%
\begin{equation}
\left\{
\begin{array}
[c]{l}%
\tilde{\rho}_{00}^{zz}=\wp_{1}\rho_{1},\\
\tilde{\rho}_{01}^{zz}=\wp\left(  01|\hat{z}\hat{z},2\right)  \wp_{2}\rho
_{2},\\
\tilde{\rho}_{10}^{zz}=\wp\left(  10|\hat{z}\hat{z},2\right)  \wp_{2}\rho
_{2},\\
\tilde{\rho}_{11}^{zz}=0,\\
\tilde{\rho}_{00}^{xx}=\wp_{3}\rho_{3},\\
\tilde{\rho}_{01}^{xx}=\wp\left(  01|\hat{x}\hat{x},1\right)  \wp_{1}\rho
_{1},\\
\tilde{\rho}_{10}^{xx}=\wp\left(  10|\hat{x}\hat{x},1\right)  \wp_{1}\rho
_{1},\\
\tilde{\rho}_{11}^{xx}=\wp_{4}\rho_{4}.
\end{array}
\right.  \label{d4}%
\end{equation}

Because of $%
{\displaystyle\sum\limits_{\xi}}
\wp_{\xi}\rho_{\xi}=\rho_{B}$, we sum up Eq. (\ref{d4}) and take the trace,
and the right-hand side gives%
\begin{equation}
tr\left[  2\wp_{1}\rho_{1}+\wp_{2}\rho_{2}+\wp_{3}\rho_{3}+\wp_{4}\rho
_{4}\right]  =tr\left[  \rho_{B}+\wp_{1}\rho_{1}\right]  =1+\wp_{1}.
\label{eq2.15}%
\end{equation}
Since $\tilde{\rho}_{00}^{zz}$, $\tilde{\rho}_{01}^{zz}$, $\tilde{\rho}%
_{10}^{zz}$, $\tilde{\rho}_{00}^{xx}$, $\tilde{\rho}_{01}^{xx}$, $\tilde{\rho
}_{10}^{xx}$, and $\tilde{\rho}_{11}^{xx}$ are nonzero in the quantum result
Eq. (\ref{d1}), $\wp_{1}$, $\wp_{2}$, $\wp_{3}$, and $\wp_{4}$ are all
nonzero. And because $%
{\displaystyle\sum\limits_{\xi}}
\wp_{\xi}=1$, $0<\wp_{1}<1$. However, the result of the left-hand side is 2.
We obtain the paradox \textquotedblleft$2_{Q}=\left(  1+\wp_{1}\right)  _{C}%
$\textquotedblright. In the case of the W state, we get the contradiction
\textquotedblleft$2_{Q}=\left(  1+\delta\right)  _{C}$\textquotedblright,
where $0<\delta<1$.

\subsection{Bob's two sets of results are identical}

Finally, we consider what the result would be if $\left\{  \left\vert \eta
_{i}\right\rangle \right\}  =\left\{  \left\vert \varepsilon_{j}\right\rangle
\right\}  $. Suppose that there are $2^{M}$ distinct elements in $\left\{
\left\vert \eta_{i}\right\rangle \right\}  $ and that they are the same as the
corresponding $2^{M}$ elements in $\left\{  \left\vert \varepsilon
_{j}\right\rangle \right\}  $. Bob's conditional states are%
\begin{equation}
\left\{
\begin{array}
[c]{l}%
\tilde{\rho}_{a_{1}}^{\hat{n}_{1}}=\sum_{\alpha}p_{\alpha}\left\vert
s_{1}^{\left(  \alpha\right)  }\right\vert ^{2}\left\vert \eta_{1}%
\right\rangle \left\langle \eta_{1}\right\vert ,\\
\cdots\\
\tilde{\rho}_{a_{2^{M}}}^{\hat{n}_{1}}=\sum_{\alpha}p_{\alpha}\left\vert
s_{2^{M}}^{\left(  \alpha\right)  }\right\vert ^{2}\left\vert \eta_{2^{M}%
}\right\rangle \left\langle \eta_{2^{M}}\right\vert ,\\
\tilde{\rho}_{a_{1}^{\prime}}^{\hat{n}_{2}}=\sum_{\alpha}p_{\alpha}\left\vert
t_{1}^{\left(  \alpha\right)  }\right\vert ^{2}\left\vert \eta_{1}%
\right\rangle \left\langle \eta_{1}\right\vert ,\\
\cdots\\
\tilde{\rho}_{a_{2^{M}}^{\prime}}^{\hat{n}_{2}}=\sum_{\alpha}p_{\alpha
}\left\vert t_{2^{M}}^{\left(  \alpha\right)  }\right\vert ^{2}\left\vert
\eta_{2^{M}}\right\rangle \left\langle \eta_{2^{M}}\right\vert .
\end{array}
\right.  \label{c5.1}%
\end{equation}
In this case, there are only $2^{M}$ different\ pure states in the quantum
result Eq. (\ref{c5.1}). It is sufficient to take $\xi$ from 1 to $2^{M}$,
e.g.,
\begin{equation}
\left\{
\begin{array}
[c]{l}%
\tilde{\rho}_{a_{1}}^{\hat{n}_{1}}=%
{\displaystyle\sum\limits_{\xi=1}^{2^{M}}}
\wp\left(  a_{1}|\hat{n}_{1},\xi\right)  \wp_{\xi}\rho_{\xi},\\
\cdots\\
\tilde{\rho}_{a_{2^{M}}}^{\hat{n}_{1}}=%
{\displaystyle\sum\limits_{\xi=1}^{2^{M}}}
\wp\left(  a_{2^{M}}|\hat{n}_{1},\xi\right)  \wp_{\xi}\rho_{\xi},\\
\tilde{\rho}_{a_{1}^{\prime}}^{\hat{n}_{2}}=%
{\displaystyle\sum\limits_{\xi=1}^{2^{M}}}
\wp\left(  a_{1}^{\prime}|\hat{n}_{2},\xi\right)  \wp_{\xi}\rho_{\xi},\\
\cdots\\
\tilde{\rho}_{a_{2^{M}}^{\prime}}^{\hat{n}_{2}}=%
{\displaystyle\sum\limits_{\xi=1}^{2^{M}}}
\wp\left(  a_{2^{M}}^{\prime}|\hat{n}_{2},\xi\right)  \wp_{\xi}\rho_{\xi}.
\end{array}
\right.  \label{c5.2}%
\end{equation}
Since Bob's states are all pure, each $\tilde{\rho}_{a}^{\hat{n}_{\ell}}$ in
Eq. (\ref{c5.2}) contains only one term. Eq. (\ref{c5.2}) can be written as%
\begin{equation}
\left\{
\begin{array}
[c]{l}%
\tilde{\rho}_{a_{1}}^{\hat{n}_{1}}=\wp_{1}\rho_{1},\\
\cdots\\
\tilde{\rho}_{a_{2^{M}}}^{\hat{n}_{1}}=\wp_{2^{M}}\rho_{2^{M}},\\
\tilde{\rho}_{a_{1}^{\prime}}^{\hat{n}_{2}}=\wp_{1}\rho_{1},\\
\cdots\\
\tilde{\rho}_{a_{2^{M}}^{\prime}}^{\hat{n}_{2}}=\wp_{2^{M}}\rho_{2^{M}}.
\end{array}
\right.  \label{c5.3}%
\end{equation}
Since in quantum result we assume that $\left\{  \left\vert \eta
_{i}\right\rangle \right\}  =\left\{  \left\vert \varepsilon_{j}\right\rangle
\right\}  $, in the LHS description we describe the same term with the same
hidden state. And ${\sum\limits_{a}}\wp\left(  a|\hat{n},\xi\right)  =1$, so
we have $\wp\left(  a_{1}|\hat{n}_{1},1\right)  =\cdots=\wp\left(  a_{2^{M}%
}|\hat{n}_{1},2^{M}\right)  =\wp\left(  a_{1}^{\prime}|\hat{n}_{2},1\right)
\cdots=\wp\left(  a_{2^{M}}^{\prime}|\hat{n}_{2},2^{M}\right)  =1$, and the
others $\wp\left(  a|\hat{n},\xi\right)  =0$. Then we take the sum of Eq.
(\ref{c5.3}) and take the trace, and we get \textquotedblleft$2_{Q}=2_{C}%
$\textquotedblright. That means that, in this case, there is no contradiction,
and we cannot conclude whether Alice can steer Bob.

\subsubsection{Example 5}

Suppose Alice prepares a 3-qubit state $\rho_{AB}=\left\vert \psi\right\rangle
\left\langle \psi\right\vert $, where%
\[
\left\vert \psi\right\rangle =\frac{1}{2}\left[  \left\vert 00\right\rangle
+\left\vert 01\right\rangle +\left\vert 10\right\rangle +\left\vert
11\right\rangle \right]  \otimes\left\vert 0\right\rangle .
\]
She keeps particles 1, 2, and sends the other to Bob. In the two-setting
steering protocol $\left\{  \hat{n}_{1},\hat{n}_{2}\right\}  $ with%
\begin{equation}%
\begin{array}
[c]{c}%
\hat{n}_{1}=\sigma_{z}\sigma_{z}\equiv zz,\\
\hat{n}_{2}=\sigma_{x}\sigma_{x}\equiv xx.
\end{array}
\label{e1}%
\end{equation}
In the protocol, Bob asks Alice to carry out either one of two possible
projective measurements on her qubits, i.e.,%
\begin{equation}
\left\{
\begin{array}
[c]{l}%
\hat{P}_{00}^{zz}=\left\vert 00\right\rangle \left\langle 00\right\vert ,\\
\hat{P}_{01}^{zz}=\left\vert 01\right\rangle \left\langle 01\right\vert ,\\
\hat{P}_{10}^{zz}=\left\vert 10\right\rangle \left\langle 10\right\vert ,\\
\hat{P}_{11}^{zz}=\left\vert 11\right\rangle \left\langle 11\right\vert ,\\
\hat{P}_{00}^{xx}=\left\vert ++\right\rangle \left\langle ++\right\vert ,\\
\hat{P}_{01}^{xx}=\left\vert +-\right\rangle \left\langle +-\right\vert ,\\
\hat{P}_{10}^{xx}=\left\vert -+\right\rangle \left\langle -+\right\vert ,\\
\hat{P}_{11}^{xx}=\left\vert --\right\rangle \left\langle --\right\vert ,
\end{array}
\right.  \label{e2}%
\end{equation}
where $|\pm\rangle=(1/\sqrt{2})(\left\vert 0\right\rangle \pm\left\vert
1\right\rangle )$. After Alice's measurement, Bob's unnormalized conditional
states are%
\begin{equation}
\left\{
\begin{array}
[c]{l}%
\tilde{\rho}_{00}^{zz}=\frac{1}{4}\left\vert 0\right\rangle \left\langle
0\right\vert ,\\
\tilde{\rho}_{01}^{zz}=\frac{1}{4}\left\vert 0\right\rangle \left\langle
0\right\vert ,\\
\tilde{\rho}_{10}^{zz}=\frac{1}{4}\left\vert 0\right\rangle \left\langle
0\right\vert ,\\
\tilde{\rho}_{11}^{zz}=\frac{1}{4}\left\vert 0\right\rangle \left\langle
0\right\vert ,\\
\tilde{\rho}_{00}^{xx}=\left\vert 0\right\rangle \left\langle 0\right\vert ,\\
\tilde{\rho}_{01}^{xx}=0,\\
\tilde{\rho}_{10}^{xx}=0,\\
\tilde{\rho}_{11}^{xx}=0.
\end{array}
\right.  \label{e3}%
\end{equation}
In the quantum result, Bob obtains 5 pure states. Summing the Eq. (\ref{e3})
and taking the trace, we get \emph{tr}$\left(  2\rho_{B}\right)  =2$.

If Bob's states have a LHS description, they satisfy Eqs. (\ref{eq2.1}) and
(\ref{eq2.2}). In the quantum result, there are $1$ pure state in Eq.
(\ref{e3}). It is sufficient to take $\xi=1$. Bob's states can be write as
\begin{equation}
\left\{
\begin{array}
[c]{l}%
\tilde{\rho}_{00}^{zz}=\wp\left(  00|\hat{z}\hat{z},1\right)  \wp_{1}\rho
_{1},\\
\tilde{\rho}_{01}^{zz}=\wp\left(  01|\hat{z}\hat{z},1\right)  \wp_{1}\rho
_{1},\\
\tilde{\rho}_{10}^{zz}=\wp\left(  10|\hat{z}\hat{z},1\right)  \wp_{1}\rho
_{1},\\
\tilde{\rho}_{11}^{zz}=\wp\left(  11|\hat{z}\hat{z},1\right)  \wp_{1}\rho
_{1},\\
\tilde{\rho}_{00}^{xx}=\wp\left(  00|\hat{x}\hat{x},1\right)  \wp_{1}\rho
_{1},\\
\tilde{\rho}_{01}^{xx}=0,\\
\tilde{\rho}_{10}^{xx}=0,\\
\tilde{\rho}_{11}^{xx}=0.
\end{array}
\right.  \label{e4}%
\end{equation}

Because in the quantum result Eq. (\ref{e3}), $\tilde{\rho}_{00}^{zz}%
=\tilde{\rho}_{01}^{zz}=\tilde{\rho}_{10}^{zz}=\tilde{\rho}_{00}^{zz}%
=\tilde{\rho}_{00}^{xx}$, we take the same hide state $\wp_{1}\rho_{1}$ to
describe them. Then because $%
{\displaystyle\sum\limits_{a}}
\wp\left(  a|\hat{n},\xi\right)  =1$, we have $\wp\left(  00|\hat{z}\hat
{z},1\right)  +\wp\left(  01|\hat{z}\hat{z},1\right)  +\wp\left(  10|\hat
{z}\hat{z},1\right)  +\wp\left(  11|\hat{z}\hat{z},1\right)  =1$, and
$\wp\left(  00|\hat{x}\hat{x},1\right)  =1$. Then Eq. (\ref{e4}) can be
written as%
\begin{equation}
\left\{
\begin{array}
[c]{l}%
\tilde{\rho}_{00}^{zz}=\wp\left(  00|\hat{z}\hat{z},1\right)  \wp_{1}\rho
_{1},\\
\tilde{\rho}_{01}^{zz}=\wp\left(  01|\hat{z}\hat{z},1\right)  \wp_{1}\rho
_{1},\\
\tilde{\rho}_{10}^{zz}=\wp\left(  10|\hat{z}\hat{z},1\right)  \wp_{1}\rho
_{1},\\
\tilde{\rho}_{11}^{zz}=\wp\left(  11|\hat{z}\hat{z},1\right)  \wp_{1}\rho
_{1},\\
\tilde{\rho}_{00}^{xx}=\wp_{1}\rho_{1},\\
\tilde{\rho}_{01}^{xx}=0,\\
\tilde{\rho}_{10}^{xx}=0,\\
\tilde{\rho}_{11}^{xx}=0.
\end{array}
\right.  \label{e5}%
\end{equation}

Because of $%
{\displaystyle\sum\limits_{\xi}}
\wp_{\xi}\rho_{\xi}=\rho_{B}$, e.g., $\wp_{1}\rho_{1}=\rho_{B}$, we sum up Eq.
(\ref{e5}) and take the trace, and the right hand side gives%
\begin{equation}
tr\left[  2\wp_{1}\rho_{1}\right]  =tr\left[  2\rho_{B}\right]  =2.
\end{equation}
We obtain \textquotedblleft$2_{Q}=2_{C}$\textquotedblright. There is no
contradiction in this case, so we cannot conclude whether Alice can steer Bob.

\section{$k=1+\delta_{k}$ for $N$-qubit state}

In the 2-setting protocol, we propose that the EPR steering paradox
\textquotedblleft$2_{Q}=\left(  1+\delta\right)  _{C}$\textquotedblright\ is
obtained when the N-qubit entangled state $\rho_{AB}$ shared by Alice and Bob
satisfies the measurement requirement. Here we show a more general steering
paradox \textquotedblleft$k_{Q}=\left(  1+\delta_{k}\right)  _{C}%
$\textquotedblright. In a $k$-setting steering protocol $\left\{  \hat{n}%
_{1},\hat{n}_{2},\cdots,\hat{n}_{k}\right\}  $, consider Alice and Bob share
an $N$-qubit\ entangled state
\begin{equation}
\rho_{AB}=\sum_{\alpha}p_{\alpha}\left\vert \psi_{AB}^{\left(  \alpha\right)
}\right\rangle \left\langle \psi_{AB}^{\left(  \alpha\right)  }\right\vert ,
\end{equation}
where Alice retains $M$ $\left(  M<N\right)  $ particles and sends the
remaining $\left(  N-M\right)  $ particles to Bob, and
\begin{equation}
\left\vert \psi_{AB}^{\left(  \alpha\right)  }\right\rangle =\sum_{i}\left(
s_{i}^{\hat{n}_{\ell}\left(  \alpha\right)  }\left\vert \phi_{i}^{\hat
{n}_{\ell}}\right\rangle \left\vert \eta_{i}^{\hat{n}_{\ell}\left(
\alpha\right)  }\right\rangle \right)  ,
\end{equation}
in which $i=1,2,\cdots,2^{M}$ and $\hat{n}_{\ell}$ $\left(  \ell
=1,2,\cdots,k\right)  $ the measurement direction. In the k-setting steering
protocol $\left\{  \hat{n}_{1},\hat{n}_{2},\cdots,\hat{n}_{k}\right\}  $,
Alice performs $k\cdot2^{M}$ projective measurements $\hat{P}_{a}^{\hat
{n}_{\ell}}\,$, where $a$ is the measurement result. For each $\hat{P}%
_{a}^{\hat{n}_{\ell}}$, Bob obtains the corresponding unnormalized state
$\tilde{\rho}_{a}^{\hat{n}_{\ell}}=\mathrm{tr}_{A}\left[  \left(  {{P}%
_{a}^{\hat{n}_{\ell}}\otimes\mathds{1}}_{2^{N-M}\times2^{N-M}}\right)
\rho_{AB}\right]  $, where ${\mathds{1}}_{2^{N-M}\times2^{N-M}}$ is the
$2^{N-M}\times2^{N-M}$ identity matrix.

We propose a lemma for $N$-qubit in the k-setting steering protocol $\left\{
\hat{n}_{1},\hat{n}_{2},\cdots,\hat{n}_{k}\right\}  $.

\begin{lemma}
In the k-setting steering protocol $\left\{  \hat{n}_{1},\hat{n}_{2}%
,\cdots,\hat{n}_{k}\right\}  $, Alice and Bob share an $N$-qubit entangled
state.\ Assume that Alice performs $k\cdot2^{M}$ projective measurements
$\hat{P}_{a}^{\hat{n}_{\ell}}\,$ and Bob obtains the corresponding conditional
state $\tilde{\rho}_{a}^{\hat{n}_{\ell}}$. Considering the scenario where
Bob's conditional states $\{\tilde{\rho}_{a}^{\hat{n}_{\ell}}\}$ are all pure,
there will be a contradiction of \textquotedblleft$k_{Q}=\left(  1+\delta
_{k}\right)  _{C}$\textquotedblright\ $\left(  \text{with }0\leq\delta
_{k}<k-1\right)  $\ if Bob obtains $k$ sets of results $\{\tilde{\rho}%
_{a}^{\hat{n}_{\ell}}\}$ that are not exactly the same, i.e., there exists at
least one set of $\{\tilde{\rho}_{a}^{\hat{n}_{p}}\}$, with $p\in\left[
1,k\right]  $, different from the others.
\end{lemma}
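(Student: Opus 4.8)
The plan is to run the two-setting argument of the previous section in a single sweep, for all $k$ settings at once. Assume, toward a contradiction, that $\{\tilde{\rho}_a^{\hat{n}_\ell}\}$ admits an LHS description, i.e. there are probabilities $\wp_\xi$ and response functions $\wp(a|\hat{n}_\ell,\xi)$ with $\sum_\xi\wp_\xi=1$, $\sum_a\wp(a|\hat{n}_\ell,\xi)=1$ for every $\ell$ and $\xi$, $\sum_\xi\wp_\xi\rho_\xi=\rho_B$, and $\tilde{\rho}_a^{\hat{n}_\ell}=\sum_\xi\wp(a|\hat{n}_\ell,\xi)\wp_\xi\rho_\xi$ for every setting $\ell=1,\dots,k$ and outcome $a$. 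The first step is the purity reduction already used repeatedly above: a pure state can only be expanded by itself, so whenever $\tilde{\rho}_a^{\hat{n}_\ell}$ is a nonzero pure state exactly one $\xi$ contributes, namely the one with $\rho_\xi\propto\tilde{\rho}_a^{\hat{n}_\ell}$, and when $\tilde{\rho}_a^{\hat{n}_\ell}=0$ all the $\wp(a|\hat{n}_\ell,\xi)$ vanish by positivity. Let $D$ be the number of distinct pure states occurring among all $k\cdot 2^M$ conditional states $\{\tilde{\rho}_a^{\hat{n}_\ell}\}$; relabel so that $\rho_1,\dots,\rho_D$ are precisely those states — each with $\wp_\xi>0$, since it equals a nonzero $\wp(a|\hat{n}_\ell,\xi)\wp_\xi\rho_\xi$ — and observe that it suffices to let $\xi$ run from $1$ to $D$.

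Next I would use the normalization within a fixed setting. For fixed $\ell$, let $S_\ell$ be the set of indices $\xi$ such that $\rho_\xi$ occurs among the conditional states $\tilde{\rho}_a^{\hat{n}_\ell}$; for $\xi\in S_\ell$ only the outcomes $a$ with $\tilde{\rho}_a^{\hat{n}_\ell}\propto\rho_\xi$ may have $\wp(a|\hat{n}_\ell,\xi)\neq0$, so $\sum_a\wp(a|\hat{n}_\ell,\xi)=1$ collapses to a sum over exactly those outcomes, and summing the corresponding single-term identities gives $\sum_a\tilde{\rho}_a^{\hat{n}_\ell}=\sum_{\xi\in S_\ell}\wp_\xi\rho_\xi$. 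Independently, completeness of Alice's projective measurement gives $\sum_a\tilde{\rho}_a^{\hat{n}_\ell}=\mathrm{tr}_A[(\sum_a P_a^{\hat{n}_\ell}\otimes\mathds{1})\rho_{AB}]=\mathrm{tr}_A(\rho_{AB})=\rho_B$ for each $\ell$. Summing over all $k$ settings, the quantum side is $\sum_{\ell=1}^{k}\sum_a\tilde{\rho}_a^{\hat{n}_\ell}=k\,\rho_B$, while the hidden-state side is $\sum_{\ell=1}^{k}\sum_{\xi\in S_\ell}\wp_\xi\rho_\xi=\sum_{\xi=1}^{D}m_\xi\wp_\xi\rho_\xi$, where $m_\xi=\#\{\ell:\xi\in S_\ell\}$ is the number of settings in which $\rho_\xi$ appears. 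Writing $m_\xi=1+(m_\xi-1)$ and using $\sum_\xi\wp_\xi\rho_\xi=\rho_B$, the hidden-state side equals $\rho_B+\sum_{\xi=1}^{D}(m_\xi-1)\wp_\xi\rho_\xi$, so taking traces yields $k=1+\delta_k$ with $\delta_k=\sum_{\xi=1}^{D}(m_\xi-1)\wp_\xi$ — exactly the asserted paradox ``$k_Q=(1+\delta_k)_C$''.

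Finally I would bound $\delta_k$. From $m_\xi\geq1$ and $\wp_\xi>0$ one has $\delta_k\geq0$; from $m_\xi\leq k$ and $\sum_\xi\wp_\xi=1$ one has $\delta_k\leq k-1$. The strict inequality is where the hypothesis enters: $\delta_k=k-1$ would force $\sum_\xi(k-m_\xi)\wp_\xi=0$ with all summands nonnegative and all $\wp_\xi>0$, hence $m_\xi=k$ for every $\xi$, i.e. every distinct conditional state occurs in every setting, so the $k$ sets of distinct conditional states $\{\tilde{\rho}_a^{\hat{n}_\ell}\}$ all coincide — contrary to the standing assumption that some $\{\tilde{\rho}_a^{\hat{n}_p}\}$ differs from the rest. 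Hence $0\leq\delta_k<k-1$, so $k\neq1+\delta_k$, the two evaluations of the total trace disagree, no LHS model exists, and Alice steers Bob, the paradox taking the stated form; conversely, when all $k$ sets are identical one gets $m_\xi\equiv k$, $\delta_k=k-1$, and ``$k_Q=k_C$'' carries no contradiction, which is why the hypothesis is needed. The part needing the most care is the bookkeeping of the purity reduction and the within-setting normalization sums — in particular the principle that a convex mixture of density matrices is pure only if every state of positive weight already equals it, which makes the single-term collapse and the identification of the $\rho_\xi$ unambiguous — together with reading ``the $k$ sets of results are not all the same'' as ``the sets of distinct conditional states are not all equal'', which is what the argument actually uses.
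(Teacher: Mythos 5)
Your proof is correct and rests on the same two pillars as the paper's: the purity of Bob's conditional states collapses each LHS expansion to a single hidden-state term, and the per-setting normalization $\sum_a\wp(a|\hat n_\ell,\xi)=1$ turns each setting's sum of conditional states into a sum of the hidden states that actually occur in that setting. Where you differ is in organization: the paper proves an equivalence (``$k_Q=k_C$'' if and only if the $k$ sets coincide) by treating sufficiency and necessity separately and then invoking contrapositives, and it assumes ``without loss of generality'' that the conditional states within a single setting are all distinct (an assumption that is not actually harmless for the value of $\delta_k$, as the paper's own Cases 2, 4 and 5 show); you instead introduce the multiplicities $m_\xi$ counting in how many settings each distinct hidden state appears, and obtain in one stroke the explicit formula $\delta_k=\sum_\xi(m_\xi-1)\wp_\xi$ together with the bounds $0\le\delta_k\le k-1$ and the characterization of when equality holds. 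This buys you a single computation that subsumes all five cases of the paper's Section S1 and both directions of the lemma, and it correctly handles repeated states within one setting via the grouping $\sum_{a\mapsto\xi}\wp(a|\hat n_\ell,\xi)=1$. Two caveats, both of which you share with the paper rather than introduce: (i) for a hidden state $\xi\notin S_\ell$ the normalization $\sum_a\wp(a|\hat n_\ell,\xi)=1$ is already violated outright, so strictly speaking the LHS model dies before the trace is taken --- the ``$k_Q=(1+\delta_k)_C$'' bookkeeping is the conventional way of exhibiting the contradiction, not the only one; and (ii) your closing remark that ``not exactly the same'' must be read as ``the sets of \emph{distinct} conditional states are not all equal'' is exactly the reading the paper's proof also needs, and you are right to flag it, since sets agreeing in their distinct pure directions but not in weights would give $m_\xi\equiv k$ and $\delta_k=k-1$ with no contradiction from this argument.
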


\begin{proof}
The sets of projectors for Alice are as follows%
\[
{{P}_{a_{i}}^{\hat{n}_{\ell}}=}\left\vert \phi_{i}^{\hat{n}_{\ell}%
}\right\rangle \left\langle \phi_{i}^{\hat{n}_{\ell}}\right\vert
,i=1,2,\cdots,2^{M}.
\]
After Alice's measurement, Bob obtains
\begin{equation}%
\begin{array}
[c]{c}%
\tilde{\rho}_{a_{i}}^{\hat{n}_{\ell}}=\sum_{\alpha}p_{\alpha}|s_{i}^{\hat
{n}_{\ell}(\alpha)}|^{2}|\eta_{i}^{\hat{n}_{\ell}(\alpha)}\rangle\langle
\eta_{i}^{\hat{n}_{\ell}(\alpha)}|,
\end{array}
\label{s1}%
\end{equation}
If Bob's states have a LHS description, they satisfy Eqs. (\ref{eq2.1}) and
(\ref{eq2.2}), which lead to $k\cdot2^{M}$ equations:
\begin{equation}
\tilde{\rho}_{a_{i}}^{\hat{n}_{\ell}}=%
{\displaystyle\sum\limits_{\xi}}
\wp\left(  a_{i}|\hat{n}_{\ell},\xi\right)  \wp_{\xi}\rho_{\xi}.\label{s2}%
\end{equation}
The quantum value $k$ can be attained through summing Eq. (\ref{s1}) and then
taking the trace. However the maximum value of the LHS result is $k$ if and
only if
\begin{align*}
\sum_{i}\sum_{\xi}\wp\left(  a_{i}|\hat{n}_{1},\xi\right)  \wp_{\xi}\rho_{\xi}
&  =\sum_{i}\sum_{\xi}\wp\left(  a_{i}^{\prime}|\hat{n}_{2},\xi\right)
\wp_{\xi}\rho_{\xi}\\
&  =\cdots=\sum_{i}\sum_{\xi}\wp\left(  a_{i}^{\prime\prime}|\hat{n}_{k}%
,\xi\right)  \wp_{\xi}\rho_{\xi}\\
&  =\sum_{\xi}\wp_{\xi}\rho_{\xi}.
\end{align*}

Considering that the conditional states of Bob are all pure, which implies
that all $\left\vert \eta_{i}^{\hat{n}_{\ell}}\right\rangle $ are independent
of $\alpha$, i.e.,
\begin{equation}
\left\vert \psi_{AB}^{\left(  \alpha\right)  }\right\rangle =\sum_{i}\left(
s_{i}^{\hat{n}_{\ell}\left(  \alpha\right)  }\left\vert \phi_{i}^{\hat
{n}_{\ell}}\right\rangle \left\vert \eta_{i}^{\hat{n}_{\ell}}\right\rangle
\right)  ,
\end{equation}
Then the conditional states of Bob are%
\begin{equation}
\left\{
\begin{array}
[c]{l}%
\tilde{\rho}_{a_{1}}^{\hat{n}_{1}}=\sum_{\alpha}p_{\alpha}\left\vert
s_{1}^{\hat{n}_{1}\left(  \alpha\right)  }\right\vert ^{2}\left\vert \eta
_{1}^{\hat{n}_{1}}\right\rangle \left\langle \eta_{1}^{\hat{n}_{1}}\right\vert
,\\
\cdots\\
\tilde{\rho}_{a_{2^{M}}}^{\hat{n}_{1}}=\sum_{\alpha}p_{\alpha}\left\vert
s_{2^{M}}^{\hat{n}_{1}\left(  \alpha\right)  }\right\vert ^{2}\left\vert
\eta_{2^{M}}^{\hat{n}_{1}}\right\rangle \left\langle \eta_{2^{M}}^{\hat{n}%
_{1}}\right\vert ,\\
\tilde{\rho}_{a_{1}^{\prime}}^{\hat{n}_{2}}=\sum_{\alpha}p_{\alpha}\left\vert
s_{1}^{\hat{n}_{2}\left(  \alpha\right)  }\right\vert ^{2}\left\vert \eta
_{1}^{\hat{n}_{2}}\right\rangle \left\langle \eta_{1}^{\hat{n}_{2}}\right\vert
,\\
\cdots\\
\tilde{\rho}_{a_{2^{M}}^{\prime}}^{\hat{n}_{2}}=\sum_{\alpha}p_{\alpha
}\left\vert s_{2^{M}}^{\hat{n}_{2}\left(  \alpha\right)  }\right\vert
^{2}\left\vert \eta_{2^{M}}^{\hat{n}_{2}}\right\rangle \left\langle
\eta_{2^{M}}^{\hat{n}_{2}}\right\vert ,\\
\cdots\\
\tilde{\rho}_{a_{1}^{\prime\prime}}^{\hat{n}_{k}}=\sum_{\alpha}p_{\alpha
}\left\vert s_{1}^{\hat{n}_{k}\left(  \alpha\right)  }\right\vert
^{2}\left\vert \eta_{1}^{\hat{n}_{k}}\right\rangle \left\langle \eta_{1}%
^{\hat{n}_{k}}\right\vert ,\\
\cdots\\
\tilde{\rho}_{a_{2^{M}}^{\prime\prime}}^{\hat{n}_{k}}=\sum_{\alpha}p_{\alpha
}\left\vert s_{2^{M}}^{\hat{n}_{k}\left(  \alpha\right)  }\right\vert
^{2}\left\vert \eta_{2^{M}}^{\hat{n}_{k}}\right\rangle \left\langle
\eta_{2^{M}}^{\hat{n}_{k}}\right\vert .
\end{array}
\right.
\end{equation}
\emph{Sufficiency}---\textquotedblleft$k_{Q}=k_{C}$\textquotedblright%
$\Longrightarrow\{\tilde{\rho}_{a}^{\hat{n}_{1}}\}=\{\tilde{\rho}_{a^{\prime}%
}^{\hat{n}_{2}}\}=\cdots=\{\tilde{\rho}_{a^{\prime\prime}}^{\hat{n}_{k}}\}$.

Since the density matrix of a pure state can only be expanded by itself, that
is each $\tilde{\rho}_{a_{i}}^{\hat{n}_{\ell}}$ in Eq. (\ref{s2}) can only be
described by a definite hidden state. Without loss of generality, suppose that
Bob's conditional states are completely different in the set involving the
same measurement direction $\tilde{\rho}_{a}^{\hat{n}_{\ell}}$. Further Eq.
(\ref{s2}) can be written as
\begin{equation}
\left\{
\begin{array}
[c]{l}%
\tilde{\rho}_{a_{1}}^{\hat{n}_{1}}=\wp_{1}\rho_{1},\\
\cdots\\
\tilde{\rho}_{a_{2^{M}}}^{\hat{n}_{1}}=\wp_{2^{M}}\rho_{2^{M}},\\
\tilde{\rho}_{a_{1}^{\prime}}^{\hat{n}_{2}}=\wp_{1}^{\prime}\rho_{1}^{\prime
},\\
\cdots\\
\tilde{\rho}_{a_{2^{M}}^{\prime}}^{\hat{n}_{2}}=\wp_{2^{M}}^{\prime}%
\rho_{2^{M}}^{\prime},\\
\cdots\\
\tilde{\rho}_{a_{1}^{\prime\prime}}^{\hat{n}_{k}}=\wp_{1}^{\prime\prime}%
\rho_{1}^{\prime\prime},\\
\cdots\\
\tilde{\rho}_{a_{2^{M}}^{\prime\prime}}^{\hat{n}_{k}}=\wp_{2^{M}}%
^{\prime\prime}\rho_{2^{M}}^{\prime\prime}.
\end{array}
\right.  \label{s3}%
\end{equation}
Summing the Eq. (\ref{s3}), we have
\[
\tilde{\rho}_{a}^{\hat{n}_{1}}+\tilde{\rho}_{a^{\prime}}^{\hat{n}_{2}}%
+\cdots+\tilde{\rho}_{a^{\prime\prime}}^{\hat{n}_{k}}=\sum_{i=1}^{2^{M}}%
\wp_{i}\rho_{i}+\sum_{i=1}^{2^{M}}\wp_{i}^{\prime}\rho_{i}^{\prime}%
+\cdots+\sum_{i=1}^{2^{M}}\wp_{i}^{\prime\prime}\rho_{i}^{\prime\prime}.
\]
The equation \textquotedblleft$k_{Q}=k_{C}$\textquotedblright\ requests
\begin{equation}
\sum_{i=1}^{2^{M}}\wp_{i}\rho_{i}=\sum_{i=1}^{2^{M}}\wp_{i}^{\prime}\rho
_{i}^{\prime}=\cdots=\sum_{i=1}^{2^{M}}\wp_{i}^{\prime\prime}\rho_{i}%
^{\prime\prime}=\sum_{\xi}\wp_{\xi}\rho_{\xi}.
\end{equation}
It means that $\{\tilde{\rho}_{a}^{\hat{n}_{1}}\}=\{\tilde{\rho}_{a^{\prime}%
}^{\hat{n}_{2}}\}=\cdots=\{\tilde{\rho}_{a^{\prime\prime}}^{\hat{n}_{k}}\}$,
i.e., Bob's $k$ sets of results are identical. Note that there is also no
contradiction appearing if Bob has Bob has $k$ sets of identical results.
Therefore, its contrapositive also holds, and if at least one set of results
$\{\tilde{\rho}_{a}^{\hat{n}_{p}}\}$, with $p\in\left[  1,k\right]  $ is not
identical to the others, the paradoxical equality \textquotedblleft%
$k_{Q}=\left(  1+\delta_{k}\right)  _{C}$\textquotedblright\ can be obtained.

\emph{Necessity}---$\{\tilde{\rho}_{a}^{\hat{n}_{1}}\}=\{\tilde{\rho
}_{a^{\prime}}^{\hat{n}_{2}}\}=\cdots=\{\tilde{\rho}_{a^{\prime\prime}}%
^{\hat{n}_{k}}\}\Longrightarrow$\textquotedblleft$k_{Q}=k_{C}$%
\textquotedblright.

Suppose that there are $2^{M}$ distinct states in $\{\tilde{\rho}_{a}^{\hat
{n}_{1}}\}$ and that they are the same as the corresponding $\left(
k-1\right)  \cdot2^{M}$ elements in $\{\tilde{\rho}_{a}^{\hat{n}_{h}}\}$, with
$h=2,3,\cdots,k$. Bob's conditional states are
\begin{equation}
\left\{
\begin{array}
[c]{l}%
\tilde{\rho}_{a_{1}}^{\hat{n}_{1}}=\sum_{\alpha}p_{\alpha}\left\vert
s_{1}^{\hat{n}_{1}\left(  \alpha\right)  }\right\vert ^{2}\left\vert \eta
_{1}\right\rangle \left\langle \eta_{1}\right\vert ,\\
\cdots\\
\tilde{\rho}_{a_{2^{M}}}^{\hat{n}_{1}}=\sum_{\alpha}p_{\alpha}\left\vert
s_{2^{M}}^{\hat{n}_{1}\left(  \alpha\right)  }\right\vert ^{2}\left\vert
\eta_{2^{M}}\right\rangle \left\langle \eta_{2^{M}}\right\vert ,\\
\tilde{\rho}_{a_{1}^{\prime}}^{\hat{n}_{2}}=\sum_{\alpha}p_{\alpha}\left\vert
s_{1}^{\hat{n}_{2}\left(  \alpha\right)  }\right\vert ^{2}\left\vert \eta
_{1}\right\rangle \left\langle \eta_{1}\right\vert ,\\
\cdots\\
\tilde{\rho}_{a_{2^{M}}^{\prime}}^{\hat{n}_{2}}=\sum_{\alpha}p_{\alpha
}\left\vert s_{2^{M}}^{\hat{n}_{2}\left(  \alpha\right)  }\right\vert
^{2}\left\vert \eta_{2^{M}}\right\rangle \left\langle \eta_{2^{M}}\right\vert
,\\
\cdots\\
\tilde{\rho}_{a_{1}^{\prime\prime}}^{\hat{n}_{k}}=\sum_{\alpha}p_{\alpha
}\left\vert s_{1}^{\hat{n}_{k}\left(  \alpha\right)  }\right\vert
^{2}\left\vert \eta_{1}\right\rangle \left\langle \eta_{1}\right\vert ,\\
\cdots\\
\tilde{\rho}_{a_{2^{M}}^{\prime\prime}}^{\hat{n}_{k}}=\sum_{\alpha}p_{\alpha
}\left\vert s_{2^{M}}^{\hat{n}_{k}\left(  \alpha\right)  }\right\vert
^{2}\left\vert \eta_{2^{M}}\right\rangle \left\langle \eta_{2^{M}}\right\vert
.
\end{array}
\right.  \label{n1}%
\end{equation}
There are only $2^{M}$ different\ pure states in the quantum result Eq.
(\ref{n1}). It is sufficient to take $\xi$ from $1$ to $2^{M}$ in the LHS
description, e.g.,
\begin{equation}
\left\{
\begin{array}
[c]{l}%
\tilde{\rho}_{a}^{\hat{n}_{1}}=\sum\limits_{\xi=1}^{2^{M}}\wp\left(  a|\hat
{n}_{1},\xi\right)  \wp_{\xi}\rho_{\xi},\\
\tilde{\rho}_{a^{\prime}}^{\hat{n}_{2}}=\sum\limits_{\xi=1}^{2^{M}}\wp\left(
a^{\prime}|\hat{n}_{2},\xi\right)  \wp_{\xi}\rho_{\xi},\\
\cdots\\
\tilde{\rho}_{a^{\prime\prime}}^{\hat{n}_{k}}=\sum\limits_{\xi=1}^{2^{M}}%
\wp\left(  a^{\prime\prime}|\hat{n}_{k},\xi\right)  \wp_{\xi}\rho_{\xi}.
\end{array}
\right.  \label{n2}%
\end{equation}
Since Bob's states are all pure, each $\tilde{\rho}_{a}^{\hat{n}_{\ell}}$ in
Eq. (\ref{n2}) contains only one term. Eq. (\ref{n2}) can be written as
\begin{equation}
\left\{
\begin{array}
[c]{l}%
\tilde{\rho}_{a_{1}}^{\hat{n}_{1}}=\wp_{1}\rho_{1},\\
\cdots\\
\tilde{\rho}_{a_{2^{M}}}^{\hat{n}_{1}}=\wp_{2^{M}}\rho_{2^{M}},\\
\tilde{\rho}_{a_{1}^{\prime}}^{\hat{n}_{2}}=\wp_{1}\rho_{1},\\
\cdots\\
\tilde{\rho}_{a_{2^{M}}^{\prime}}^{\hat{n}_{2}}=\wp_{2^{M}}\rho_{2^{M}},\\
\cdots\\
\tilde{\rho}_{a_{1}^{\prime\prime}}^{\hat{n}_{k}}=\wp_{1}\rho_{1},\\
\cdots\\
\tilde{\rho}_{a_{2^{M}}^{\prime\prime}}^{\hat{n}_{k}}=\wp_{2^{M}}\rho_{2^{M}}.
\end{array}
\right.  \label{n3}%
\end{equation}
Here we describe the same term with the same hidden state. And ${\sum
\limits_{a}}\wp\left(  a|\hat{n},\xi\right)  =1$, so we have $\wp\left(
a_{1}|\hat{n}_{1},1\right)  =\cdots=\wp\left(  a_{2^{M}}|\hat{n}_{1}%
,2^{M}\right)  =\wp\left(  a_{1}^{\prime}|\hat{n}_{2},1\right)  \cdots
=\wp\left(  a_{2^{M}}^{\prime}|\hat{n}_{2},2^{M}\right)  =\cdots=\wp\left(
a_{1}^{\prime\prime}|\hat{n}_{k},1\right)  \cdots=\wp\left(  a_{2^{M}}%
^{\prime\prime}|\hat{n}_{k},2^{M}\right)  =1$, and the others $\wp\left(
a|\hat{n},\xi\right)  =0$. Then we take the sum of Eq. (\ref{n3}) and take the
trace, and get \textquotedblleft$k_{Q}=k_{C}$\textquotedblright. That means
that there is no contradiction if $\{\tilde{\rho}_{a}^{\hat{n}_{1}}%
\}=\{\tilde{\rho}_{a^{\prime}}^{\hat{n}_{2}}\}=\cdots=\{\tilde{\rho
}_{a^{\prime\prime}}^{\hat{n}_{k}}\}$, and we cannot conclude whether Alice
can steer Bob. Consequently, its contrapositive also holds, and if there is a
contradiction \textquotedblleft$k_{Q}=\left(  1+\delta_{k}\right)  _{C}%
$\textquotedblright, then it requests that at least one set of results
$\{\tilde{\rho}_{a}^{\hat{n}_{p}}\}$, with $p\in\left[  1,k\right]  $ is not
identical to the others.
\end{proof}

\section{$k=1+\delta_{k}$ for $N$-qudit state}

We now generalize the result to the most general case. In the k-setting
steering protocol, assume that Alice and Bob share a mixed entangled state of
$N$ $d$-dimensional systems ($N$-qudit). Then, we can obtain a similar simple
result for \textquotedblleft$k_{Q}=\left(  1+\delta_{k}\right)  _{C}%
$\textquotedblright\ $\left(  \text{with }0\leq\delta_{k}<k-1\right)  $.

We propose a lemma for $N$-qudit in the k-setting steering protocol $\left\{
\hat{n}_{1},\hat{n}_{2},\cdots,\hat{n}_{k}\right\}  $.

\begin{lemma}
In the k-setting steering protocol $\left\{  \hat{n}_{1},\hat{n}_{2}%
,\cdots,\hat{n}_{k}\right\}  $, Alice and Bob share an $N$-qudit entangled
state.\ Assume that Alice performs $k\cdot d^{M}$ projective measurements
$\hat{P}_{a}^{\hat{n}_{\ell}}\,$ and Bob obtains the corresponding conditional
state $\tilde{\rho}_{a}^{\hat{n}_{\ell}}$. Considering the scenario where
Bob's conditional states $\{\tilde{\rho}_{a}^{\hat{n}_{\ell}}\}$ are all pure,
there will be a contradiction of \textquotedblleft$k_{Q}=\left(  1+\delta
_{k}\right)  _{C}$\textquotedblright\ $\left(  \text{with }0\leq\delta
_{k}<k-1\right)  $\ if Bob obtains k sets of results $\{\tilde{\rho}_{a}%
^{\hat{n}_{\ell}}\}$ that are not exactly the same i.e., there exists at least
one set of $\{\tilde{\rho}_{a}^{\hat{n}_{p}}\}$, with $p\in\left[  1,k\right]
$, different from the others.
\end{lemma}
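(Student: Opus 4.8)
The plan is to run the proof of the $N$-qubit Lemma essentially verbatim, changing only the bookkeeping $2^{M}\to d^{M}$, since that argument never uses $d=2$. First I would record Alice's projectors $P_{a_{i}}^{\hat{n}_{\ell}}=\ket{\phi_{i}^{\hat{n}_{\ell}}}\bra{\phi_{i}^{\hat{n}_{\ell}}}$ with $i=1,\dots,d^{M}$, which hand Bob the unnormalized conditional states $\tilde{\rho}_{a_{i}}^{\hat{n}_{\ell}}=\sum_{\alpha}p_{\alpha}\,|s_{i}^{\hat{n}_{\ell}(\alpha)}|^{2}\,\ket{\eta_{i}^{\hat{n}_{\ell}(\alpha)}}\bra{\eta_{i}^{\hat{n}_{\ell}(\alpha)}}$, and then observe that purity of every $\tilde{\rho}_{a_{i}}^{\hat{n}_{\ell}}$ forces each $\ket{\eta_{i}^{\hat{n}_{\ell}}}$ to be independent of $\alpha$, exactly as in the qubit case. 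The key structural fact --- that a rank-one projector cannot be written as a nontrivial convex combination of density matrices --- is dimension independent, so from the constraint $\tilde{\rho}_{a_{i}}^{\hat{n}_{\ell}}=\sum_{\xi}\wp\left(a_{i}|\hat{n}_{\ell},\xi\right)\wp_{\xi}\rho_{\xi}$ one still concludes that each $\tilde{\rho}_{a_{i}}^{\hat{n}_{\ell}}$ equals a single $\wp_{\xi}\rho_{\xi}$, and that two conditional states are assigned the same hidden state precisely when they are proportional.

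Second, I would prove both directions as in the qubit lemma. For \emph{sufficiency} (the contrapositive of ``no paradox appears''): if $\{\tilde{\rho}_{a}^{\hat{n}_{1}}\}=\{\tilde{\rho}_{a'}^{\hat{n}_{2}}\}=\cdots=\{\tilde{\rho}_{a''}^{\hat{n}_{k}}\}$, then writing the analogue of Eq.~(\ref{s3}) with $d^{M}$ outcomes per setting and summing yields $\sum_{i}\wp_{i}\rho_{i}=\sum_{i}\wp_{i}'\rho_{i}'=\cdots=\sum_{\xi}\wp_{\xi}\rho_{\xi}=\rho_{B}$, so the classical side totals $k\rho_{B}$, i.e.\ ``$k_{Q}=k_{C}$'' with $\delta_{k}=k-1$; hence a paradox forces at least one set $\{\tilde{\rho}_{a}^{\hat{n}_{p}}\}$ to differ from the others. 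For \emph{necessity}: if the $k$ sets coincide one may take $\xi\in\{1,\dots,d^{M}\}$, describe each coincident family by a single hidden state, use $\sum_{a}\wp\left(a|\hat{n},\xi\right)=1$ to pin every surviving response function to $1$, and read off ``$k_{Q}=k_{C}$'' directly --- again no paradox. This reproduces the $N$-qubit proof with $2^{M}$ replaced by $d^{M}$.

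Third, I would pin down the range $0\le\delta_{k}<k-1$. Summing the $k\cdot d^{M}$ equations, the quantum side is $\sum_{\ell,i}\tilde{\rho}_{a_{i}}^{\hat{n}_{\ell}}=k\rho_{B}$, while the classical side is $\rho_{B}$ plus one extra copy of $\wp_{\xi}\rho_{\xi}$ for each repeated occurrence of a hidden state beyond its first appearance in the whole family; taking the trace gives ``$k_{Q}=(1+\delta_{k})_{C}$'' with $\delta_{k}$ the total weight of those repeated hidden states, so $\delta_{k}\ge 0$ is automatic. If the $k$ sets are \emph{not} all identical, then some state in some $\{\tilde{\rho}_{a}^{\hat{n}_{p}}\}$ fails to be matched across all $k$ directions, so strictly fewer than $k$ full copies of $\rho_{B}$ can accumulate on the classical side, whence $\delta_{k}<k-1$; combined with the lower bound this is the asserted $0\le\delta_{k}<k-1$.

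The only step needing genuine care --- and the main, albeit mild, obstacle --- is the combinatorics of which of the $k\cdot d^{M}$ pure conditional states coincide, since equalities may occur within one setting, between two settings, or across several settings at once. I would handle this uniformly by partitioning the multiset $\{\tilde{\rho}_{a}^{\hat{n}_{\ell}}\}$ into equivalence classes of mutually proportional states, assigning one hidden state $\wp_{\xi}\rho_{\xi}$ per class, and then verifying that the normalization $\sum_{a}\wp\left(a|\hat{n}_{\ell},\xi\right)=1$ is consistent with a class meeting a given setting $\hat{n}_{\ell}$ in possibly several outcomes (the responsibilities it carries within that setting simply have to sum to one, which is always arrangeable). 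Everything else is identical to the $N$-qubit argument and uses no property of $d$.
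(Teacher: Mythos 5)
Your proposal is correct and follows essentially the same route as the paper: the paper's own proof of the $N$-qudit lemma is the $N$-qubit argument repeated verbatim with $2^{M}$ replaced by $d^{M}$, resting on the same two dimension-independent facts you identify (purity forces $\left\vert \eta_{i}^{\hat{n}_{\ell}}\right\rangle$ to be $\alpha$-independent, and a rank-one density matrix admits no nontrivial convex decomposition), followed by the same sufficiency/necessity pair and contrapositive. Your explicit partition of the $k\cdot d^{M}$ conditional states into proportionality classes, each carrying one hidden state with responsibilities summing to one per setting, is a slightly more systematic bookkeeping of the coincidence patterns than the paper's case-by-case assignments, but it is the same argument.
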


\begin{proof}
Let us consider that Alice prepared an $N$-qudit state
\begin{equation}
\rho_{AB}=\sum_{\alpha}p_{\alpha}\left\vert \psi_{AB}^{\left(  \alpha\right)
}\right\rangle \left\langle \psi_{AB}^{\left(  \alpha\right)  }\right\vert .
\end{equation}
Similarly, Alice keeps $M\left(  M<N\right)  $ particles and sends the
remaining $(N-M)$ particles to Bob. In the k-setting steering protocol
$\left\{  \hat{n}_{1},\hat{n}_{2},\cdots,\hat{n}_{k}\right\}  $, Alice
performs $k\cdot d^{M}$ projective measurements $\hat{P}_{a}^{\hat{n}_{\ell}}$
($\ell=1,2,\cdots,k$). Bob obtains the corresponding unnormalized states are
$\tilde{\rho}_{a}^{\hat{n}_{\ell}}=\mathrm{tr}_{A}\left[  \left(  {{P}%
_{a}^{\hat{n}_{\ell}}\otimes\mathds{1}}_{d^{N-M}\times d^{N-M}}\right)
\rho_{AB}\right]  $. The difference is that ${\mathds{1}}_{d^{N-M}\times
d^{N-M}}$ is the $d^{N-M}\times d^{N-M}$ identity matrix. $\left\vert
\psi_{AB}^{\left(  \alpha\right)  }\right\rangle $ can be written as%
\begin{equation}
\left\vert \psi_{AB}^{\left(  \alpha\right)  }\right\rangle =\sum_{i}\left(
s_{i}^{\hat{n}_{\ell}\left(  \alpha\right)  }\left\vert \phi_{i}^{\hat
{n}_{\ell}}\right\rangle \left\vert \eta_{i}^{\hat{n}_{\ell}\left(
\alpha\right)  }\right\rangle \right)  ,
\end{equation}
here $i=1,2,\cdots,d^{M}$. We require Bob's conditional states are all pure,
then $\left\vert \eta_{i}^{\hat{n}_{\ell}\left(  \alpha\right)  }\right\rangle
$ independent of $\alpha$.

Suppose the sets of projectors for Alice are as follows%
\[
{{P}_{a}^{\hat{n}_{\ell}}=}\left\vert \phi_{i}^{\hat{n}_{\ell}}\right\rangle
\left\langle \phi_{i}^{\hat{n}_{\ell}}\right\vert ,i=1,2,\cdots,d^{M}.
\]
After Alice's measurement, Bob obtains
\begin{equation}
\tilde{\rho}_{a_{i}}^{\hat{n}_{\ell}}=\sum_{\alpha}p_{\alpha}|s_{i}^{\hat
{n}_{\ell}(\alpha)}|^{2}|\eta_{i}^{\hat{n}_{\ell}}\rangle\langle\eta_{i}%
^{\hat{n}_{\ell}}|.\label{s22}%
\end{equation}
If Bob's states have a LHS description, they satisfy Eqs. (\ref{eq2.1}) and
(\ref{eq2.2}), which lead to $k\cdot d^{M}$ equations:
\begin{equation}
\tilde{\rho}_{a_{i}}^{\hat{n}_{\ell}}=%
{\displaystyle\sum\limits_{\xi}}
\wp\left(  a_{i}|\hat{n}_{\ell},\xi\right)  \wp_{\xi}\rho_{\xi}.\label{s21}%
\end{equation}
The quantum value $k$ can be attained through summing Eq. (\ref{s21}) and then
taking the trace. However the maximum value of the LHS result is $k$ if and
only if
\begin{align*}
\sum_{i}\sum_{\xi}\wp\left(  a_{i}|\hat{n}_{1},\xi\right)  \wp_{\xi}\rho_{\xi}
&  =\sum_{i}\sum_{\xi}\wp\left(  a_{i}^{\prime}|\hat{n}_{2},\xi\right)
\wp_{\xi}\rho_{\xi}\\
&  =\cdots=\sum_{i}\sum_{\xi}\wp\left(  a_{i}^{\prime\prime}|\hat{n}_{k}%
,\xi\right)  \wp_{\xi}\rho_{\xi}\\
&  =\sum_{\xi}\wp_{\xi}\rho_{\xi}.
\end{align*}

\emph{Sufficiency}---\textquotedblleft$k_{Q}=k_{C}$\textquotedblright%
$\Longrightarrow\{\tilde{\rho}_{a}^{\hat{n}_{1}}\}=\{\tilde{\rho}_{a^{\prime}%
}^{\hat{n}_{2}}\}=\cdots=\{\tilde{\rho}_{a^{\prime\prime}}^{\hat{n}_{k}}\}$.

Since the density matrix of a pure state can only be expanded by itself, that
is each $\tilde{\rho}_{a_{i}}^{\hat{n}_{\ell}}$ in Eq. (\ref{s22}) can only be
described by a definite hidden state. Without loss of generality, suppose that
Bob's conditional states are completely different in the set involving the
same measurement direction $\tilde{\rho}_{a}^{\hat{n}_{\ell}}$. Further Eq.
(\ref{s22}) can be written as
\begin{equation}
\left\{
\begin{array}
[c]{l}%
\tilde{\rho}_{a_{1}}^{\hat{n}_{1}}=\wp_{1}\rho_{1},\\
\cdots\\
\tilde{\rho}_{a_{d^{M}}}^{\hat{n}_{1}}=\wp_{d^{M}}\rho_{d^{M}},\\
\tilde{\rho}_{a_{1}^{\prime}}^{\hat{n}_{2}}=\wp_{1}^{\prime}\rho_{1}^{\prime
},\\
\cdots\\
\tilde{\rho}_{a_{d^{M}}^{\prime}}^{\hat{n}_{2}}=\wp_{d^{M}}^{\prime}%
\rho_{d^{M}}^{\prime},\\
\cdots\\
\tilde{\rho}_{a_{1}^{\prime\prime}}^{\hat{n}_{k}}=\wp_{1}^{\prime\prime}%
\rho_{1}^{\prime\prime},\\
\cdots\\
\tilde{\rho}_{a_{d^{M}}^{\prime\prime}}^{\hat{n}_{k}}=\wp_{d^{M}}%
^{\prime\prime}\rho_{d^{M}}^{\prime\prime}.
\end{array}
\right.  \label{s23}%
\end{equation}
Summing the Eq. (\ref{s23}), we have
\[
\tilde{\rho}_{a}^{\hat{n}_{1}}+\tilde{\rho}_{a^{\prime}}^{\hat{n}_{2}}%
+\cdots+\tilde{\rho}_{a^{\prime\prime}}^{\hat{n}_{k}}=\sum_{i=1}^{d^{M}}%
\wp_{i}\rho_{i}+\sum_{i=1}^{d^{M}}\wp_{i}^{\prime}\rho_{i}^{\prime}%
+\cdots+\sum_{i=1}^{d^{M}}\wp_{i}^{\prime\prime}\rho_{i}^{\prime\prime}.
\]
The equation \textquotedblleft$k_{Q}=k_{C}$\textquotedblright\ requests
\begin{equation}
\sum_{i=1}^{d^{M}}\wp_{i}^{\prime}\rho_{i}^{\prime}=\sum_{i=1}^{d^{M}}\wp
_{i}\rho_{i}=\cdots=\sum_{i=1}^{d^{M}}\wp_{i}^{\prime\prime}\rho_{i}%
^{\prime\prime}=\sum_{\xi}\wp_{\xi}\rho_{\xi}.
\end{equation}
It means that $\{\tilde{\rho}_{a}^{\hat{n}_{1}}\}=\{\tilde{\rho}_{a^{\prime}%
}^{\hat{n}_{2}}\}=\cdots=\{\tilde{\rho}_{a^{\prime\prime}}^{\hat{n}_{k}}\}$,
i.e., Bob's $k$ sets of results are identical. Note that there is also no
contradiction appearing if Bob has Bob has $k$ sets of identical results.
Therefore, its contrapositive also holds, and if at least one set of results
$\{\tilde{\rho}_{a}^{\hat{n}_{p}}\}$, with $p\in\left[  1,k\right]  $ is not
identical to the others, the paradoxical equality \textquotedblleft%
$k_{Q}=\left(  1+\delta_{k}\right)  _{C}$\textquotedblright\ can be obtained.

\emph{Necessity}---$\{\tilde{\rho}_{a}^{\hat{n}_{1}}\}=\{\tilde{\rho
}_{a^{\prime}}^{\hat{n}_{2}}\}=\cdots=\{\tilde{\rho}_{a^{\prime\prime}}%
^{\hat{n}_{k}}\}\Longrightarrow$\textquotedblleft$k_{Q}=k_{C}$%
\textquotedblright.

Suppose that there are $2^{M}$ distinct states in $\{\tilde{\rho}_{a}^{\hat
{n}_{1}}\}$ and that they are the same as the corresponding $\left(
k-1\right)  \cdot d^{M}$ elements in $\{\tilde{\rho}_{a}^{\hat{n}_{h}}\}$,
with $h=2,3,\cdots,k$. Bob's conditional states are
\begin{equation}
\left\{
\begin{array}
[c]{l}%
\tilde{\rho}_{a_{1}}^{\hat{n}_{1}}=\sum_{\alpha}p_{\alpha}\left\vert
s_{1}^{\hat{n}_{1}\left(  \alpha\right)  }\right\vert ^{2}\left\vert \eta
_{1}\right\rangle \left\langle \eta_{1}\right\vert ,\\
\cdots\\
\tilde{\rho}_{a_{d^{M}}}^{\hat{n}_{1}}=\sum_{\alpha}p_{\alpha}\left\vert
s_{d^{M}}^{\hat{n}_{1}\left(  \alpha\right)  }\right\vert ^{2}\left\vert
\eta_{d^{M}}\right\rangle \left\langle \eta_{d^{M}}\right\vert ,\\
\tilde{\rho}_{a_{1}^{\prime}}^{\hat{n}_{2}}=\sum_{\alpha}p_{\alpha}\left\vert
s_{1}^{\hat{n}_{2}\left(  \alpha\right)  }\right\vert ^{2}\left\vert \eta
_{1}\right\rangle \left\langle \eta_{1}\right\vert ,\\
\cdots\\
\tilde{\rho}_{a_{d^{M}}^{\prime}}^{\hat{n}_{2}}=\sum_{\alpha}p_{\alpha
}\left\vert s_{d^{M}}^{\hat{n}_{2}\left(  \alpha\right)  }\right\vert
^{2}\left\vert \eta_{d^{M}}\right\rangle \left\langle \eta_{d^{M}}\right\vert
,\\
\cdots\\
\tilde{\rho}_{a_{1}^{\prime\prime}}^{\hat{n}_{k}}=\sum_{\alpha}p_{\alpha
}\left\vert s_{1}^{\hat{n}_{k}\left(  \alpha\right)  }\right\vert
^{2}\left\vert \eta_{1}\right\rangle \left\langle \eta_{1}\right\vert ,\\
\cdots\\
\tilde{\rho}_{a_{d^{M}}^{\prime\prime}}^{\hat{n}_{k}}=\sum_{\alpha}p_{\alpha
}\left\vert s_{d^{M}}^{\hat{n}_{k}\left(  \alpha\right)  }\right\vert
^{2}\left\vert \eta_{d^{M}}\right\rangle \left\langle \eta_{d^{M}}\right\vert
.
\end{array}
\right.  \label{n21}%
\end{equation}
There are only $d^{M}$ different\ pure states in the quantum result Eq.
(\ref{n21}). It is sufficient to take $\xi$ from $1$ to $d^{M}$ in the LHS
description, e.g.,
\begin{equation}
\left\{
\begin{array}
[c]{l}%
\tilde{\rho}_{a}^{\hat{n}_{1}}=\sum\limits_{\xi=1}^{d^{M}}\wp\left(  a|\hat
{n}_{1},\xi\right)  \wp_{\xi}\rho_{\xi},\\
\tilde{\rho}_{a^{\prime}}^{\hat{n}_{2}}=\sum\limits_{\xi=1}^{d^{M}}\wp\left(
a^{\prime}|\hat{n}_{2},\xi\right)  \wp_{\xi}\rho_{\xi},\\
\cdots\\
\tilde{\rho}_{a^{\prime\prime}}^{\hat{n}_{k}}=\sum\limits_{\xi=1}^{d^{M}}%
\wp\left(  a^{\prime\prime}|\hat{n}_{k},\xi\right)  \wp_{\xi}\rho_{\xi}.
\end{array}
\right.  \label{n22}%
\end{equation}
Since Bob's states are all pure, each $\tilde{\rho}_{a}^{\hat{n}_{\ell}}$ in
Eq. (\ref{n22}) contains only one term. Eq. (\ref{n22}) can be written as
\begin{equation}
\left\{
\begin{array}
[c]{l}%
\tilde{\rho}_{a_{1}}^{\hat{n}_{1}}=\wp_{1}\rho_{1},\\
\cdots\\
\tilde{\rho}_{a_{d^{M}}}^{\hat{n}_{1}}=\wp_{d^{M}}\rho_{d^{M}},\\
\tilde{\rho}_{a_{1}^{\prime}}^{\hat{n}_{2}}=\wp_{1}\rho_{1},\\
\cdots\\
\tilde{\rho}_{a_{d^{M}}^{\prime}}^{\hat{n}_{2}}=\wp_{d^{M}}\rho_{d^{M}},\\
\cdots\\
\tilde{\rho}_{a_{1}^{\prime\prime}}^{\hat{n}_{k}}=\wp_{1}\rho_{1},\\
\cdots\\
\tilde{\rho}_{a_{d^{M}}^{\prime\prime}}^{\hat{n}_{k}}=\wp_{d^{M}}\rho_{d^{M}}.
\end{array}
\right.  \label{n23}%
\end{equation}
Here we describe the same term with the same hidden state. And ${\sum
\limits_{a}}\wp\left(  a|\hat{n},\xi\right)  =1$, so we have $\wp\left(
a_{1}|\hat{n}_{1},1\right)  =\cdots=\wp\left(  a_{d^{M}}|\hat{n}_{1}%
,d^{M}\right)  =\wp\left(  a_{1}^{\prime}|\hat{n}_{2},1\right)  \cdots
=\wp\left(  a_{d^{M}}^{\prime}|\hat{n}_{2},d^{M}\right)  =\cdots=\wp\left(
a_{1}^{\prime\prime}|\hat{n}_{k},1\right)  \cdots=\wp\left(  a_{d^{M}}%
^{\prime\prime}|\hat{n}_{k},d^{M}\right)  =1$, and the others $\wp\left(
a|\hat{n},\xi\right)  =0$. Then we take the sum of Eq. (\ref{n23}) and take
the trace, and get \textquotedblleft$k_{Q}=k_{C}$\textquotedblright. That
means that there is no contradiction if $\{\tilde{\rho}_{a}^{\hat{n}_{1}%
}\}=\{\tilde{\rho}_{a^{\prime}}^{\hat{n}_{2}}\}=\cdots=\{\tilde{\rho
}_{a^{\prime\prime}}^{\hat{n}_{k}}\}$, and we cannot conclude whether Alice
can steer Bob. Consequently, its contrapositive also holds, and if there is a
contradiction \textquotedblleft$k_{Q}=\left(  1+\delta_{k}\right)  _{C}%
$\textquotedblright, then it requests that at least one set of results
$\{\tilde{\rho}_{a}^{\hat{n}_{p}}\}$, with $p\in\left[  1,k\right]  $ is not
identical to the others.
\end{proof}

\end{document}